\theoremstyle{definition}
\newtheorem{Definition}{Definition}[section]
\theoremstyle{plain}
\newtheorem{Theorem}[Definition]{Theorem}
\newtheorem{Proposition}[Definition]{Proposition}
\newtheorem{Lemma}[Definition]{Lemma}
\newtheorem{Corollary}[Definition]{Corollary}
\theoremstyle{remark}
\newtheorem{Remark}[Definition]{Remark}
\numberwithin{equation}{section}
\newcounter{remcount}
\def\cA{{\cal A}}
\def\cC{{\cal C}}
\def\cD{{\cal D}}
\def\cF{{\cal F}}
\def\cG{{\cal G}}
\def\cH{{\cal H}}
\def\cK{{\cal K}}
\def\cP{{\cal P}}
\def\cS{{\cal S}}
\def\cU{{\cal U}}
\def\cZ{{\cal Z}}
\def\bC{{\mathbb C}}
\def\bN{{\mathbb N}}
\def\bR{{\mathbb R}}
\def\bZ{{\mathbb Z}}
\def\a{\alpha}
\def\b{\beta}
\def\g{\gamma}        
\def\d{\delta}        
\def\eps{\varepsilon}  
\def\k{\kappa}
\def\l{\lambda}       \def\L{\Lambda}
\def\m{\mu}
\def\n{\nu}
\def\r{\rho}
\def\s{\sigma}
\def\o{\omega}        \def\O{\Omega}
\def\fA{{\mathfrak A}}
\def\fW{{\mathfrak W}}
\def\supp{{\text{supp}\,}}
\def\ad{{\rm Ad}}
\newcommand{\rest}{\upharpoonright}
\DeclareMathOperator{\aut}{Aut}
\newcommand{\Id}{{\bf 1}}
\renewcommand{\Im}{{\mathrm{Im}\,}}
\renewcommand{\Re}{{\mathrm{Re}\,}}
\newcommand{\wlim}{\operatorname{w-lim}\displaylimits}
\newcommand{\bdx}{{\boldsymbol{x}}}
\newcommand{\bdy}{{\boldsymbol{y}}}
\newcommand{\bdxi}{{\boldsymbol{\xi}}}
\newcommand{\bdet}{{\boldsymbol{\eta}}}
\newcommand{\bdp}{{\boldsymbol{p}}}
\newcommand{\bdq}{{\boldsymbol{q}}}
\newcommand{\bds}{{\boldsymbol{s}}}
\newcommand{\uF}{\underline{F}}
\newcommand{\uA}{\underline{A}}
\newcommand{\uB}{\underline{B}}
\newcommand{\uW}{{\underline{W}}}
\newcommand{\uFF}{\underline{\mathfrak F}}
\newcommand{\uAA}{\underline{\mathfrak A}}
\newcommand{\uBB}{\underline{\mathfrak B}}
\newcommand{\ua}{\underline{\alpha}}
\newcommand{\uooi}{\underline{\o}_{0,\iota}}
\newcommand{\Aoi}{\mathcal{A}_{0,\iota}}
\newcommand{\uo}{\underline{\omega}}
\newcommand{\poi}{\pi_{0,\iota}}
\newcommand{\Hoi}{{\cal H}_{0,\iota}}
\newcommand{\Ooi}{\Omega_{0,\iota}}
\newcommand{\ooi}{\omega_{0,\iota}}
\newcommand{\uAAac}{\uAA^\bullet}
\newcommand{\poiac}{\poi^\bullet}
\newcommand{\rac}{\boldsymbol{\rho}^\bullet}
\newcommand{\pac}{\boldsymbol{\phi}^\bullet}
\newcommand{\Aoir}{\cA_{0,\iota,r}}
\newcommand{\br}{\boldsymbol{\rho}}
\newcommand{\bp}{\boldsymbol{\phi}}
\newcommand{\am}{\a^{(m)}}
\newcommand{\az}{\a^{(0)}}
\newcommand{\om}{\o^{(m)}}
\newcommand{\ooim}{\om_{0,\iota}}
\newcommand{\oz}{\o^{(0)}}
\newcommand{\cAm}{\cA^{(m)}}
\newcommand{\cAz}{\cA^{(0)}}
\newcommand{\cFm}{\cF^{(m)}}
\newcommand{\cFz}{\cF^{(0)}}
\newcommand{\uAAm}{\uAA^{(m)}}
\newcommand{\uAAmac}{\uAA^{(m)\bullet}}
\newcommand{\uam}{\ua^{(m)}}
\newcommand{\Aoim}{\Aoi^{(m)}}
\newcommand{\Aoirm}{\Aoir^{(m)}}
\newcommand{\Foirm}{\cF_{0,\iota,r}^{(m)}}
\newcommand{\Hoim}{\Hoi^{(m)}}
\newcommand{\Koim}{\cK_{0,\iota}^{(m)}}
\newcommand{\Woi}{W_{0,\iota}}
\newcommand{\aoim}{\a^{(m;0,\iota)}}
\newcommand{\Vqn}{{V_n^{(q)}}}
\newcommand{\omp}{\o_m(\bdp)}
\newcommand{\cCm}{\cC^{(m)}}
\newcommand{\uCCmac}{\underline{\cC}^{(m)\bullet}}
\newcommand{\cCoim}{\cC^{(m)}_{0,\iota}}
\newcommand{\cCz}{\cC^{(0)}}
\newcommand{\Oz}{\Omega^{(0)}}
\newcommand{\Om}{\Omega^{(m)}}
\newcommand{\uCCm}{\underline{\cC}^{(m)}}
\newcommand{\fWm}{\fW^{(m)}}
\newcommand{\fWz}{\fW^{(0)}}
\newcommand{\bDoIm}{\overline{\cD_0(I)}^{\|\cdot\|_m}}
\newcounter{propcount}
\newlength{\maxlabelwidth}
\newcommand{\inst}[1]{$^\textrm{#1}$ }
\begin{document}

\title{Asymptotic morphisms and superselection theory \\ in the scaling limit II:
analysis of some models}
\author{Roberto Conti\inst{1} \and Gerardo Morsella\inst{2}}
\date{
\parbox[t]{0.9\textwidth}{\footnotesize{%
\begin{itemize}
\item[1] Dipartimento di Scienze di Base e Applicate per l'Ingegneria, 
Sapienza Universit\`a di Roma, via A. Scarpa, 16 I-00161 Roma (Italy), e-mail: roberto.conti@sbai.uniroma1.it
\item[2] Dipartimento di Matematica, Universit\`a di Roma Tor Vergata, via della Ricerca Scientifica, 1 I-00133 Roma (Italy), e-mail: morsella@mat.uniroma2.it
\end{itemize}
}}
\\
\vspace{\baselineskip}
\today}

\maketitle

\begin{abstract}
We introduced in a previous paper a general notion of asymptotic morphism of a given local net of observables, which allows to describe the sectors of a corresponding scaling limit net. Here, as an application, we illustrate the general framework by analyzing
the Schwinger model, which features confined charges. In particular, we explicitly construct asymptotic morphisms for these sectors in restriction to the subnet generated by the derivatives of the field and momentum at time zero. As a consequence, the confined charges of the Schwinger model are in principle accessible to observation.  We also study the obstructions, that can be traced back to the infrared singular nature of the massless free field in $d=2$, to perform the same construction for the complete Schwinger model net.  Finally, we exhibit asymptotic morphisms for the net generated by the massive free charged scalar field in four dimensions, where no infrared problems appear in the scaling limit.
\end{abstract}

\section{Introduction}

Quantum Field Theory (QFT), which is the term that generically describes a number of different approaches to the theory of fundamental interactions in particle physics, is a spectacular enterprise where the physical requirements meet the mathematical tools in a mix that 
since the late twenties has been a powerful ground for developing impressive crossing relations between the two disciplines.
The algebraic approach to QFT (AQFT) stands for its neat conceptual clarity and mathematical rigour, at least for what concerns the general structural analysis.
However, over the last few years it has been proven to be useful also  in providing a framework for constructing specific models.
In the original formulation,
the main object of  study of AQFT is a so-called {\it net of local observables}, namely a correspondence 
$$O \to \cA(O)$$
between certain regions of Minkowski spacetime and operator algebras \cite{Ta} acting on a fixed (vacuum) Hilbert space $\cH$, satisfying a bunch of physically meaningful axioms.
This point of view is exposed in full detail in R. Haag's book \cite{Ha} (see also \cite{R}).
The representation theory of this net accounts for the physical superselection sectors, i.e., the rules specifying which physical processes may take place and which 
are ruled out, e.g., by conservation laws. Thanks to a breakthrough result by Doplicher and Roberts~\cite{DR}, when carefully analyzed, this superselection structure uniquely identifies a global gauge group $G$ and a {field net} $O \to \cF(O)$ on a Hilbert space $\tilde\cH$, acted upon by $G$, such that $\cF(O)^G \simeq \cA(O)$ and, moreover, all the relevant representations of $\cA$ appear in $\tilde\cH$.

Another important ingredient for us is the concept of {\it scaling limit net} due to D. Buchholz and R. Verch \cite{BV}, which is the counterpart in the algebraic setting of the renormalization group analysis (as discussed also in~\cite{BDM}) and in particular allows one to attach an intrinsic meaning to degrees of freedom (like quarks and gluons) that are unobservable at the ambient scale, i.e. they are {\it confined}. 
In short, this is a net
$$O \to \Aoi(O)$$
(actually a family of such nets, selected through different limiting processes)
which is supposed to capture the short distance (or, equivalently, high energy) behaviour of the theory described by the net $O \to \cA(O)$ and, in a sense, is a sort of tangent space
(\`a la Gromov) of the original net.
The concept of scaling limit applies to $(\cF,G)$ as well, giving rise to a number of natural concepts like that of {\it preserved charge} \cite{DMV,DMV2}.
Moreover, the net $O \to \Aoi(O)$ displays its own superselection structure, and the relationship between the superselection structures of $\cA$ and $\Aoi$ provides the above mentioned intrinsic definition of confined sectors of $\cA$.

Since the sectors of $\cA$ may be described by suitable endomorphisms,
following an earlier suggestion by S. Doplicher we studied the possibility to describe the superselection sectors of $\Aoi$ in terms of some sort of {\it asymptotic endomorphisms}
of $\cA$. A general treatment of this topic appears in \cite{CM2}. The emerging mathematical concept resembles the so-called asymptotic morphisms of Connes-Higson in $E$-theory, a variant of Kasparov $KK$-theory, which is a cornerstone in the formulation of Noncommutative Geometry in the sense of A. Connes \cite{Co}. 
Similar ideas where quantum features (here, superselection rules) are described in terms of noncommutative geometry, have appeared ever since.
%

On the physical side, the interpretation of asymptotic morphisms can be understood by observing that, composing one of them with the vacuum, one obtains a family, indexed by the spatio-temporal scale $\l$, of neutral states of the original theory which for $\l \to 0$ approximate, in a suitable sense, a charged state of the scaling limit theory. This is of course reminiscent of the procedure of ``shifting a compensating charge behind the Moon'' by which one obtains charged states as limits of neutral ones at a fixed scale.

Such a picture is particularly interesting in connection with the theoretical problem, mentioned above, of the formulation of a physically meaningful notion of confined charge. As first pointed out in~\cite{Buc1}, the conventionally accepted approach to confinement relies on the comparison between the fundamental degrees of freedom used to defined the theory (e.g., gauge and 	Fermi fields in the Langrangian) and its scattering states. As such, it boils down essentially to attaching a physical interpretation to unobservable objects and therefore it can not have an intrinsic meaning. This is confirmed also by the fact that there are well known examples of theories whose observables can be obtained starting from very different sets of basic fields. On the contrary, as already recalled, a notion of confined charge which is entirely based on observables  can instead be obtained in the algebraic framework of QFT by combining the scaling limit construction with the superselection sectors analysis. In this setting, asymptotic morphisms 
can be used, at least in principle, to operationally decide if a given theory features confined charges. Indeed, the above mentioned states induced by asymptotic morphisms converge to eigenstates of the charge operator of the scaling limit theory. Therefore, it should be sufficient to test their values and dispersions at small scales on suitable observables converging in the scaling limit to the conserved current generating the appropriate charge.

In view of the above considerations, it seems desirable, both from the mathematical and the physical standpoint, to understand in some detail 
how the abstract 
concepts developed in \cite{CM2} fit with the analysis of some concrete models and to present explicit examples of asymptotic morphisms associated to confined sectors.

A popular toy model exhibiting the main features which are expected to characterize the confinement picture is the so-called Schwinger model, i.e., $d=2$ quantum electrodynamics with massless fermions. As it is well known, this model is exactly solvable, and its net of observables is isomorphic to the one generated by the free massive scalar field~\cite{LS}, i.e., the distributional solution of the Klein-Gordon equation $(\Box + m^2) \varphi = 0$. In view of the fact that the Coulomb energy of two opposite electric charges grows linearly with their mutual distance, the absence of charged sectors of the observable algebra  has been interpreted as a manifestation of confinement (see, e.g., \cite{BJ}).


For this specific model the scaling limit analysis has been carried through in~\cite{Buc1, BV2}. The final outcome is that the scaling limit net of the free massive scalar field in $d=2$ spacetime dimensions contains the corresponding massless net. It is worth mentioning here that the usual technical complications due to the infrared singularity of the massless free field in $d=2$ are overcome in this case by describing it in terms of Weyl operators. Then, as discussed already in~\cite{StWi,Cio}, the latter model exhibits non-trivial sectors (localizable in wedges), which are therefore confined sectors of the Schwinger model net in the language of~\cite{DMV2}. 
(We also point out references~\cite{HL,DeMe} for a discussion of many other features of the $d=2$ massless free field.)

Hence, in the present work we discuss the Schwinger  model from the point of view of our previous paper~\cite{CM2}, and in particular we provide an explicit construction of its asymptotic morphisms.
%
%

In conclusion, despite some technicalities (infrared problems, choice of subnets, several kinds of limits), probably the most important message that can be read off from this work is that 
it provides evidence to the fact that
in principle
confined charges are indeed accessible to observation, at least in some idealized sense. 

\medskip
We summarize the content of the paper.
In Section \ref{sect:Free}  we mainly fix the notation and the general background. Namely, we recall the construction of the scaling limit and of the asymptotic morphisms associated to its sectors. We also introduce the Weyl algebra and use it to define the local nets associated to free fields of different masses in $d=2$. The section ends with a description of a family of sectors of the scaling limit $\Aoim$ of the massive free field.
Section \ref{sect:asymptotic} contains the main results of this work.
It is devoted to the explicit construction of asymptotic morphisms for the net $\cCm$ generated by the derivatives of the time zero massive fields. 
Such asymptotic morphisms correspond, in the sense of \cite{CM2}, to the sectors of the scaling limit of $\cCm$ that are obtained by restricting those of $\Aoim$.
As a preliminary step towards this construction, we first provide a description of the scaling limit of $\cCm$ by showing that it basically coincides with the corresponding massless version $\cCz$.
In Section \ref{sec:Schwinger} we examine the possibility of exhibiting asymptotic morphisms directly for the sectors of $\Aoim$ by taking a similar route as in the previous section. This program can be partially carried on, however the properties of the emerging objects are weaker than those axiomatized in \cite{CM2}.
In Section \ref{sec:highdim} we consider the free charged massive scalar field in $d=4$ and show that in this case the construction of asymptotic morphisms for the sectors of the scaling limit net can be achieved
without any trouble, thus showing that the difficulties of Section \ref{sec:Schwinger} are due to the singular behaviour of the two-dimensional massless free field.
Finally, in Appendix \ref{app:quasiequiv}, we prove for the $d=2$ case the local quasiequivalence of the massive and massless vacuum states of the Weyl algebra in restriction to the
subalgebra generated by the derivatives of time zero fields, a statement which is needed in Section \ref{sect:asymptotic} and interesting on its own (cf. \cite{EF}).
A recent result also implying this fact has been independently obtained in \cite{BFR}.

\section{Free scalar fields and their scaling limits}\label{sect:Free}
For the convenience of the reader, we quickly recapitulate in this section the main results of~\cite{BV, BV2} about the abstract scaling limit construction and its application to the concrete model of the free scalar field, and of \cite{CM2} about the asymptotic morphisms associated to the sectors of the scaling limit theory.

\medskip

Let $O \mapsto \cA(O)$ a local net of von Neumann algebras indexed by open double cones $O \subset \bR^d$ and acting on a vacuum Hilbert space $\cH$, cf.~\cite{Ha, Ar2}. We assume that $\cA$ is covariant with respect to a unitary, strongly continuous representation $U : \cG \to \cU(\cH)$ satisfying the spectrum condition, where $\cG$ is a subgroup of the connected component $\cP_+^\uparrow$ of the Poincar\'e group containing the translations, and that there is a unique (up to a phase) translation invariant unit vector $\O$ (the vacuum). 
As usual, we will write $\a_{(\L,x)}=\ad(U(\L,x))$, $(\L,x) \in \cG$.
We indicate with $\cA_\text{loc}$ the union of the local algebras $\cA(O)$, and, by a slight abuse of notation, we also use $\cA$ for the quasi-local C*-algebra defined by the net, i.e.\ the norm closure of $\cA_\text{loc}$. Moreover, for more general possibly unbounded open regions $S \subset \bR^d$, $\cA(S)$ will denote the C*-algebra generated by all the $\cA(O)$ with $O \subset S$. 

\medskip

The local scaling algebra $\uAA(O)$ is then defined as the C*-algebra of all the bounded functions $\l \in (0, +\infty) \mapsto \uA_\l$ such that $\uA_\l \in \cA(\l O)$ for all $\l > 0$, and
\[
\| \ua_{(\L,x)}(\uA) - \uA \| := \sup_{\l > 0} \| \a_{(\L,\l x)}(\uA_\l) - \uA_\l \| \to 0 \qquad\text{as }(\L,x) \to (\Id, 0) \text{ in $\cG$}.
\]
Given a bounded function $\l \in (0, +\infty) \mapsto A_\l$ such that $A_\l \in \cA(\l O)$ for all $\l > 0$ and $h \in C_c(\bR^2)$ (continuous functions with compact support), it is convenient to set 
$$(\ua_h A)_\l= \int_{\bR^2} dx \, h(x) \a_{\l x}(A) \ ,  \quad \l > 0 $$
(with the integral defined in the strong sense),
which defines an element in $\uAA(O+\supp h)$.

Given then a locally normal state $\o$ on $\cA$ (e.g., $\o = \langle \O, (\cdot)\O\rangle$), one can consider the states $(\uo_\l)_{\l > 0}$ on $\uAA$ defined by $\uo_\l(\uA) := \o(\uA_\l)$, and the set of their weak* limit points as $\l \to 0$, which is actually independent of the original state $\o$. For any such limit state $\uooi$, the corresponding scaling limit net is then defined by
\[
\Aoi(O) := \poi(\uAA(O))'',
\]
with $(\poi, \Hoi, \Ooi)$ the GNS representation determined by $\uooi$. This new net satisfies the same structural properties of the underlying net $\cA$, possibly apart from uniqueness of the vacuum if $d = 2$.

In order to formulate the notion of asymptotic morphisms of $\cA$, we also need to introduce the net of C*-algebras $O \mapsto \uAAac(O)$, defined as the C*-algebras of bounded functions $\l \in (0,+\infty) \mapsto A_\l \in \cA(\l O)$ such that for all $\hat O \Supset O$, i.e., $\hat O \supset \bar O$, and for all $\eps > 0$, there exist elements $\uA, \uA' \in \uAA(\hat O)$ for which
\[
\limsup_\k \| (A_{\l_\k} - \uA_{\l_\k})\O\| + \|(A^*_{\l_\k}-\uA'_{\l_\k})\O \| < \eps,
\]
where $(\l_\k)_{\k \in K}$ ($K$ some index set) is a net, fixed once and for all, such that $\uooi = \lim_\k \uo_{\l_\k}$. It is then clear that $\uAA(O) \subset \uAAac(O)$, and one finds that $\poi$ extends to a morphism $\poiac : \uAAac \to \Aoi$. Moreover, under the mild assumption that $\cA$ has a convergent scaling limit~\cite[Def.\ 4.4]{BDM2}, there also holds $\Aoi(O) \subset \poiac(\uAAac(O))$.
 
We can now define a (tame) asymptotic morphism of $\cA$ (relative to the scaling limit state $\uooi = \lim_\k \uo_{\l_\k}$) as a family of maps $\r_\l : \cA \to \cA$, $\l > 0$, such that, for all $A, B \in \cA$, $\a \in \bC$, there holds
\begin{align*}
&\lim_\k \| [\r_{\l_\k}(A+\a B) -\r_{\l_\k}(A)-\a \r_{\l_\k}(B)]\O \| = 0,\\
&\lim_\k \| [\r_{\l_\k}(AB) - \r_{\l_\k}(A)\r_{\l_\k}(B)]\O \| = 0,\\
&\lim_\k \| [\r_{\l_\k}(A)^*-\r_{\l_\k}(A^*)]\O \| = 0,
\end{align*}
and moreover such that for all $A \in \cA$ the function $\l \mapsto \rac(A)_\l := \r_\l(A)$ belongs to $\uAAac$, the resulting map $\rac : \cA \to \uAAac$ is norm continuous, and
\[
\poiac\Big(\rac\Big(\bigcup_O \cA(O)\Big)\Big) \subset \bigcup_O \Aoi(O).
\]
In particular, an asymptotic isomorphism is an asymptotic morphism $(\phi_\l)$ such that the map $\pac : \cA \to \uAAac$ is injective and there exists a continuous section $\bar s : \Aoi \to \uAAac$ of $\poiac$ for which
\[
\pac\Big(\bigcup_O \cA(O)\Big) = \bar s\Big(\bigcup_O \Aoi(O)\Big).
\]
With these definitions, the main result of~\cite{CM2} states that if $\cA$ has convergent scaling limit and its quasi-local C*-algebra is isomorphic to $\Aoi$, there is a 1-1 correspondence between unitary equivalence classes of morphisms $\r_0 : \Aoi \to \Aoi$ such that $\r_0(\bigcup_O \Aoi(O))\subset \bigcup_O \Aoi(O))$ and naturally defined equivalence classes of pairs of an asymptotic morphism $(\r_\l)$ and an asymptotic isomorphism $(\phi_\l)$, such correspondence being defined by the formula
\[
\r_0 = \poiac \rac (\pac)^{-1}\bar s.
\]
We notice explicitly that the above definitions and results make sense in any number of spacetime dimensions $d$.

We now turn to the description, following~\cite{BV2}, of the scaling limit of the free scalar field, focusing on the $d=2$ case. We denote by $\fW$ the Weyl algebra, the C*-algebra generated by the unitary operators $W(f)$, $ f \in \cD(\bR)$ (complex valued functions), satisfying
\begin{gather*}
W(f)W(g) = e^{-\frac{i}{2}\s(f,g)}W(f+g),  \\
\s(f,g) = \Im \int_\bR d\bdx\,\overline{f(\bdx)}g(\bdx).
\end{gather*}
For each mass $m \geq 0$, there is an automorphic action of the Poincar\'e group $\cP = O(1,1) \ltimes \bR^2$ on $\fW$, denoted by $(\Lambda,x) \mapsto \am_{(\L,x)}$ and induced by an action $\tau^{(m)}_{(\L,x)}$ on $\cD(\bR)$, i.e., $\am_{(\L,x)}(W(f))=W(\tau^{(m)}_{(\L,x)}f)$. For reference's sake, we give the explicit expression of time translations:
\begin{equation}\label{eq:timeev}\begin{split}
(\tau^{(m)}_t f)\hat{}(\bdp) &= \left[\cos(t\omp) + i\omp^{-1}\sin(t\omp)\right] (\Re f)\hat{}(\bdp) \\
&\quad+ i\left[\cos(t\omp)+i\omp \sin(t\omp)\right](\Im f)\hat{}(\bdp), \qquad t \in \bR,
\end{split}\end{equation}
where $\omp=\sqrt{\bdp^2 + m^2}$.
There is also an automorphic action $\l \in \bR_+ \mapsto \s_\l$ of dilations on $\fW$, induced by an action $\l \mapsto \d_\l$ on $\cD(\bR)$, see~\cite[Eq.\ (2.7)]{BV2}. We also consider the vacuum states $\om$, $m \geq 0$, on $\fW$. For $m > 0$ they are defined by $\om(W(f)) = e^{-\frac 1 2 \|f\|_m^2}$, where
$$\|f\|_m^2 = \frac{1}{2}\int_\bR d\bdp\left|\o_m(\bdp)^{-1/2} (\Re f)\hat{}(\bdp) + i\o_m(\bdp)^{1/2} (\Im f)\hat{}(\bdp)\right|^2, \qquad m \geq 0,$$
while, for $m=0$,
\begin{equation*}
\oz(W(f)) = \begin{cases}e^{-\frac{1}{2}\| f\|_0^2} &\text{if }(\Re f)\hat{}(0) = 0,\\
0 &\text{otherwise.}\end{cases}
\end{equation*} 
It turns out that $\|\cdot\|_m$ for $m>0$ (resp.\ $m=0$) is indeed a norm on $\cD(\bR)$ (resp.\ $\{f \in \cD(\bR) \ : \ \int_\bR \Re f = 0\}$) considered as a real vector space. We denote by $\pi^{(m)}$  the GNS representation  induced by $\o^{(m)}$, $m \geq 0$, acting on the Hilbert space $\cH^{(m)}$ with cyclic vector $\Om$, and  by $O \mapsto \cAm(O)$ the corresponding net of von Neumann algebras 
$$
\cAm(O) := \{\pi^{(m)} (\am_{(\L,x)}(W(f)))\,: \supp f \subset I\}'',
$$
where $I \subset \bR$ is an open interval such that $O = \L O_I+x$, with $O_I$ the double cone with base $I$ in the time-zero line.
(Note that here we depart from the notation of~\cite{BV2}; also, in order to simplify the notation, we will drop the indication of the representation $\pi^{(m)}$ when this does not cause confusion.)
It is clear that the net $\cAm$ 
satisfies
\begin{equation}\label{eq:innercont}
\cAm(O_I) = \bigvee_{J \Subset I} \cAm(O_J) \ ,
\end{equation}
since every $f$ supported in an open interval $I$ is actually supported in a $J \Subset I$.
 If $m > 0$, the net $\cAm$ satisfies the split property~\cite{Dr} and its local algebras are type III$_1$ factors \cite{GlJa}. 
 In particular, $\cAm$ is a simple C*-algebra. Note also that $\cH^{(0)}$ is non-separable, see \cite[Section 4]{AMS93}.\footnote{A representation of $\fW$ in which $\tau^{(0)}$ is unitarily implemented on a separable Hilbert space is constructed in~\cite{DeMe}.}

According to the general construction discussed above, we associate to $\cAm$, $m>0$, the scaling algebra $\uAAm$ and the scaling limit nets $O \mapsto \Aoim(O) = \poi(\uAAm(O))''$, on the scaling limit Hilbert spaces $\Hoim$, along with corresponding automorphic actions $\aoim$ of the Poincar\'e group. We will also make use of the net $\uAAmac$ of the elements asymptotically contained in $\uAAm$, and of the corresponding extension $\poiac$ of the scaling limit representation. When there is no ambiguity, we will just write $\uAAac$ instead of $\uAAmac$. It is known that, for each $\iota$, the quasi-local algebra has a non-trivial center, and the local algebras $\Aoim(O)$ are not factors, as they contain non-trivial elements of the center~\cite[proof. of Thm. 4.1]{BV2}. As a consequence, since $\cAm$ is irreducible (because the Fock vacuum is the only translation invariant vector), the quasi-local C*-algebras $\Aoim$ and $\cAm$ cannot be isomorphic. This is in sharp contrast with the situation considered in~\cite{CM2}.

One can also introduce rescaled Weyl operators
$$\uW(f)_\l := W(\d_\l f), \qquad \l>0,\, f \in \cD(\bR).$$
From~\cite[Eq. (4.13)]{BV2} one sees that
\begin{equation}\label{eq:weylreg}
\limsup_{\l \to 0} \| [\uW(f)_\l - (\uam_h\uW(f))_\l]\O\|
\end{equation}
can be made arbitrarily small choosing $h$ sufficiently close to a $\d$-function, which entails $\uW(f) \in \uAAac(O)$ for every double cone $O$ based on the time zero line and whose base contains $\supp f$. Using~\cite[Lemma 4.2]{BV2} and~\cite[Lemma 4.5]{CM2}, one also concludes that
\begin{equation}\label{eq:Woi}
\Woi(f) := \poiac(\uW(f) ), \qquad f \in \cD(\bR),
\end{equation}
satisfy the Weyl relations
$$\Woi(f) \Woi(g) = e^{-i\sigma(f,g)/2}\Woi(f+g), \qquad f,g\in\cD(\bR),$$
and $\Woi(f) \in \Aoim(O_I)$ if $\supp f \subset I$.
The arguments in~\cite{Buc1} suggest that the net $\cAz$ obtained in the GNS representation of the massless vacuum $\oz$ is isomorphic to the subnet of $\Aoim$ generated by the operators $\Woi(f)$.\footnote{The results in~\cite{Buc1} also suggest that the scaling limit of the C*-subalgebra of the scaling algebra generated by (smoothed-out) functions $\l \mapsto W(\d_\l f)$ and $\l \mapsto W(|\log \l|^{1/2}\d_\l f)$ is isomorphic to $\cAz \otimes \cZ$, with $\cZ$ (a subalgebra of) the center of $\Aoim$.} In Sec.~\ref{sec:Schwinger} we will make this identification more explicit. 

The net $\Aoim$ has non-trivial automorphisms $\r_{q,\iota}$, $q \in \bR$, defined as follows. Let $u_n^q = u_n \in \cD_\bR(\bR)$ be such that, for some $a>0$,
$$u_n(\bdx) = \begin{cases}0 \qquad &\text{if }\bdx \leq -a,\\
\text{independent of }n &\text{if }-a < \bdx < a,\\
q &\text{if }a \leq \bdx \leq na,\end{cases}
$$
and consider $\Vqn := \Woi(iu_n)$. It follows that, for $m \geq n$, $\Vqn V_m^{(q)*} =\Woi(i(u_n-u_m))$ is localized in a double cone 
whose closure is contained 
in the right spacelike complement of $\bdx = na$. Therefore, if $A \in \Aoim(O)$, $\Vqn^*A\Vqn$ is independent of $n$ for $n$ sufficiently large, and there exists the (norm) limit
\begin{equation}\label{eq:rqi}
\r_{q,\iota}(A) = \lim_{n \to +\infty}\Vqn^*A\Vqn.
\end{equation}
It is then clear that $\r_{q,\iota}$ extends, by norm continuity, to an endomorphism of the quasi-local algebra $\Aoim$, which is easily seen to be 
localized in $W_+-a$ 
and
invertible. It is shown in~\cite{BV2} that such automorphisms induce non-trivial translation covariant sectors of $\Aoim$ (which are independent of the chosen $a > 0$ and of $u_n$ on $(-a,a) \cup (na,+\infty)$). More precisely, it is shown in \cite[Thm.\ 4.1]{BV2} that $\ooi\circ\r_{q,\iota} \rest \Aoim(W_a^\pm)  = \ooi \rest \Aoim(W_a^\pm)$, where $W_a^\pm$ is the right/left spacelike complement of the interval $[-a,a]$ of the time-zero axis. From this it follows, by cyclicity of $\Ooi$ for wedge algebras and GNS unicity, that $\r_{q,\iota}$ is a BF-sector of $\Aoim$. Notice also that, due to the above mentioned localization properties of $\Vqn$, $\r_{q,\iota}$ is a properly supported morphism of $\Aoim$ in the sense of~\cite[Sec.\ 5]{CM2}.

\section{Asymptotic morphisms for the derivatives of time zero fields} \label{sect:asymptotic}
Since the sectors $\r_{q,\iota}$ can be interpreted as describing confined charges of the underlying theory $\cAm$ \cite{Buc1,DMV}, it seems interesting to 
exhibit explicit examples of the associated asymptotic morphisms of $\cAm$. 
Actually, due to the bad infrared behaviour of the massless scalar field in $d=2$ which is responsible for the appearance of the nontrivial center of $\Aoim$, 
these sectors do not fall into the framework of \cite{CM2}. Therefore, 
it will only be possible to construct the associated asymptotic morphisms at the price of passing from $\cAm$ to a suitable subnet $\cCm$  generated by the derivatives of the time zero fields (see Eq. \eqref{eq:Cm} below for a precise definition), which has the property that its scaling limit inherits the same sectors from $\Aoim$.

\medskip

To begin with, 
we compute for reference the action of $\r_{q,\iota}$  on the Weyl operators defined above:
\begin{equation*}\begin{split}
\r_{q,\iota}(\Woi(f)) &= \lim_{n \to +\infty} \Woi(iu_n)^*\Woi(f)\Woi(iu_n) \\
&= \lim_{n \to +\infty} e^{i\s(iu_n,f)}\Woi(f)\\
&=\lim_{n \to +\infty}e^{i \Im \int_\bR d\bdx\,(-i)u_n(\bdx)f(\bdx)}\Woi(f)\\
&=\lim_{n \to +\infty}e^{-i\int_\bR d\bdx\,u_n(\bdx)\Re f(\bdx)}\Woi(f)\\
&= e^{-i \int_\bR d\bdx\,u_\infty(\bdx)\Re f(\bdx)}\Woi(f),
\end{split}\end{equation*}
where $u_\infty = \lim_{n\to +\infty} u_n$ is such that $u_\infty(\bdx) = 0$ if $\bdx < -a$ and $u_\infty(\bdx) = q$ if $\bdx > a$, cf.~\cite[Sec. 4]{Cio}.

We also observe that similar formulas hold for the rescaled Weyl operators at each fixed $\l > 0$.
Namely, by the same argument as above, we can define, for each $\l > 0$, a morphism $\r(\l)$ of $\cAm$ by
\begin{equation*}
\r(\l)(A) = \lim_{n \to +\infty}\uW(iu_n)^*_\l A \uW(iu_n)_\l, \qquad A \in \cAm,
\end{equation*}
and there holds
\begin{equation*}\begin{split}
\r(\l)(W(f)) &= \lim_{n \to +\infty} \uW(iu_n)^*_\l W(f) \uW(iu_n)_\l \\
&= \lim_{n \to +\infty}W(\d_\l iu_n)^*W(f)W(\d_\l iu_n) \\ 
&= \lim_{n \to +\infty}e^{-i\int_\bR d\bdx\,\d_\l u_n(\bdx)\Re f(\bdx)} W(f)\\
&= \lim_{n \to +\infty}e^{-i\int_\bR d\bdx\,u_n(\l^{-1}\bdx)\Re f(\bdx)} W(f)\\
&= e^{-i \int_\bR d\bdx\,u_\infty(\l^{-1}\bdx)\Re f(\bdx)}W(f).
\end{split}\end{equation*}
We can also perform the limit $\l \to 0$ of the last line, obtaining
$$
\lim_{\l \to 0}\r(\l)(W(f)) =
\lim_{\l \to 0}\lim_{n \to +\infty} \uW(iu_n)^*_\l W(f) \uW(iu_n)_\l =e^{-i q\int_0^{+\infty} d\bdx\,\Re f(\bdx)}W(f).
$$
Moreover, we can extend the automorphisms $\s_\l : \fW \to \fW$ to automorphisms $\phi_\l : \cAm \to \cAm$ such that 
\begin{equation}\label{eq:phil}
\phi_\l(\pi^{(m)}(W(f))) = \pi^{(m)}\s_\l(W(f)) = \pi^{(m)}(W(\d_\l f)) = \pi^{(m)}(\uW(f)_\l), \qquad \l > 0, f \in \cD(\bR).
\end{equation}
This is a direct consequence of the local normality of the states $\o^{(m)}$ and $\o^{(m)}\circ \s_\l = \o^{(\l m)}$~\cite{EF}.
 Therefore, if we define the morphisms $\r_\l := \r(\l)\phi_\l : \cAm \to \cAm$, we obtain, from the above formulas,
\begin{equation}\label{eq:rhol}
\r_\l(W(f)) =  e^{-i \int_\bR d\bdx\,u_\infty(\l^{-1}\bdx)\Re (\d_\l f)(\bdx)}W(\d_\l f) =  e^{-i \int_\bR d\bdx\,u_\infty(\bdx)\Re f(\bdx)}\uW(f)_\l.
\end{equation}
Identifying $\fW$ with the C$^*$-subalgebras of $\cAm$ and $\Aoim$ generated by the operators $W(f)$, $\Woi(f)$ ($f \in \cD(\bR)$) respectively, we can also consider an isomorphism $\bp : \fW \to \fW$ such that
$$\bp(W(f)) = \Woi(f) \ . $$
Therefore, defining $\br = \r_{q,\iota} \bp: \fW \to \Aoim$, we get
$$\poiac\big(\l \mapsto \r_\l(W(f))\big) = \br(W(f)), \qquad  f \in \cD(\bR) $$
from which, using the fact that, as $\r_\l$ is a morphism of C*-algebras,  $\| \r_\l\| \leq 1$ uniformly for $\l > 0$, and the norm continuity of $\poiac$ and $\br$, we conclude that (cf.~\cite[Thm. 5.2]{CM2})
$$\poiac\big(\l \mapsto \r_\l(W)\big) = \br(W), \qquad W \in \fW.$$
In passing, we also note that $(\r_\l\rest \fW)$ satisfies properties analogous to properties (i)-(iii) of~\cite[Def. 5.1]{CM2} (in particular, for property (iii), replacing $\bigcup_O \cAm(O)$ with $\bigcup_I \fW_I$, where $\fW_I$ is the Weyl algebra over $\cD(I)$, $I \subset \bR$ interval). 

Moreover, it is worth pointing out that, at variance with the case of preserved sectors considered in~\cite[Sec. 7]{CM2}, the morphisms $\r(\l)$ do not induce non-trivial sectors of $\cAm$, because there are no such sectors~\cite[Thm.\ 3.1 and Sec.\ 7]{Mu}. The latter fact in particular implies that the sectors induced by $\r_{q,\iota}$ are interpreted as confined sectors of $\cAm$~\cite{Buc1, DMV, DMV2}.

\medskip


The family of maps $(\r_\l)$ defined above enjoys some of the properties of asymptotic morphisms of $\cAm$, but it can be shown that $\l \mapsto \r_\l(A)$ does not belong to $\uAAmac$ for all $A \in \cAm$. We defer a discussion of this and related aspects to Section \ref{sec:Schwinger}, where it will appear that the real obstruction lies in the zero mode of the field momentum. Therefore, for the time being, 
in order to achieve our goal of constructing \emph{bona fide} asymptotic morphisms,
we restrict our attention to the von Neumann algebras
\begin{equation}\label{eq:Cm}
\cCm(O_I +x) := \{\pi^{(m)}(\am_{x}(W(f)))\,: \supp f \subset I, \, {\textstyle \int_\bR f = 0}\}'', \qquad m > 0.
\end{equation}
This way, since the condition of having null integral is preserved by the translation $\tau^{(m)}_x$, $m > 0$, we obtain a
translation covariant subnet $\cCm$ of the restriction of $\cAm$ to the upright double cones (i.e., those of the form $O_I + x$).
However, the property $\int_\bR f = 0$ is not stable under Lorentz transformations \cite[Section 7.2]{BDM2}, and therefore $\cCm$ cannot be extended to a Poincar\'e covariant isotonous subnet of $\cAm$. 
This obstruction disappears for $m=0$, so that
\begin{equation}\label{eq:Cz}
\cCz(\L O_I +x) := \{\pi^{(0)}(\az_{(\L,x)}(W(f)))\,: \supp f \subset I, \, {\textstyle \int_\bR f = 0}\}'' 
\end{equation}
defines a Poincar\'e covariant subnet of $\cAz$ (indexed by all double cones). We denote by $\cK^{(0)} = \overline{\cCz \Oz}$ the corresponding cyclic subspace of $\cH^{(0)}$.

We note that the dual net local algebras $\cC^{(m)d}(O) := \cCm(O')'$ can be defined for all double cones $O$ (not just the upright ones). With this in mind, we summarize in the next statement the main relations between $\cCm$ and $\cAm$.

\begin{Proposition}\label{prop:Cm}
Let $m > 0$. Then the following properties hold.
\renewcommand{\theenumi}{(\roman{enumi})}
\renewcommand{\labelenumi}{\theenumi}
\begin{enumerate}
\item If $W_+$ denotes the right wedge, $\cCm(W_+)'' = \cAm(W_+)''$.
\item $\cC^{(m)d} = \cAm$.
\item There is an action $\gamma : \bR^2 \to \aut(\cAm)$ by net automorphisms, such that
\[
\g_{(\m,\n)}(\pi^{(m)}(W(f))) = e^{i(\m\,\Re \int_\bR f + \nu \,\Im \int_\bR f)}\pi^{(m)}(W(f)),
\]
and $\cCm(O) \subset \cAm(O)^{\bR^2}$ for all upright double cones $O$ .
\end{enumerate}
\end{Proposition}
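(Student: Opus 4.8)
The three statements are interlocking, and I would organize the proof so as to isolate the single analytic difficulty. I would first dispose of (iii), which is essentially algebraic; then prove the wedge identity (i), which is the technical heart; and finally deduce the duality relation (ii) from (i) together with Haag duality of $\cAm$.

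For (iii) I would define $\gamma_{(\mu,\nu)}$ on the Weyl algebra $\fW$ by the stated formula on generators and verify that it respects the Weyl relations: since the phase $e^{i(\mu\Re\int_\bR f+\nu\Im\int_\bR f)}$ is additive in $f$, the cocycle $e^{-i\sigma(f,g)/2}$ is reproduced, so $\gamma_{(\mu,\nu)}\in\aut(\fW)$ and $(\mu,\nu)\mapsto\gamma_{(\mu,\nu)}$ is a homomorphism $\bR^2\to\aut(\fW)$. As the phase does not alter the support of $f$, $\gamma$ preserves each local Weyl subalgebra, hence is a net automorphism at the C*-level. To promote it to the von Neumann net I would check that it is locally normal: given a bounded interval $I$, one solves $\sigma(k,f)=\mu\Re\int_\bR f+\nu\Im\int_\bR f$ for all $f$ with $\supp f\subset I$ by choosing $k\in\cD(\bR)$ equal to a suitable constant (real-linear in $(\mu,\nu)$) on a neighbourhood of $\bar I$, so that $\gamma_{(\mu,\nu)}$ agrees with $\ad W(k)$ on $\cAm(O_I)$ and is therefore inner, a fortiori normal. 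Finally, if $\int_\bR f=0$ then $\Re\int f=\Im\int f=0$, so $W(f)$ is $\gamma$-fixed; since $\cAm(O)^{\bR^2}$ is weakly closed and contains all generators of $\cCm(O)$, the inclusion $\cCm(O)\subset\cAm(O)^{\bR^2}$ follows.

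For (i) the inclusion $\cCm(W_+)''\subset\cAm(W_+)''$ is trivial, so the content is the reverse. By the time-slice property of the free field, $\cAm(W_+)''$ is generated by the time-zero operators $W(g)$ with $\supp g$ in the half-line base $(0,+\infty)$, and it suffices to place each such $W(g)$ in $\cCm(W_+)''$. Fixing $h_0\in\cD(\bR)$ with $\int_\bR h_0=\int_\bR g$, I set $h_a(\bdx)=h_0(\bdx-a)$ and $f_a=g-h_a$; for $a$ large one has $\supp f_a,\supp h_a\subset(0,+\infty)$ and $\int_\bR f_a=0$, so $W(f_a)\in\cCm(W_+)$, while the disjointness of the supports gives $\sigma(g,h_a)=0$ and hence $W(f_a)=W(g)W(h_a)^*$. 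The decisive point is that $W(h_a)\to\kappa\,\Id$ weakly as $a\to+\infty$, with $\kappa=e^{-\frac12\|h_0\|_m^2}\neq0$: testing on the total set $\{W(q)\Om\}$ and reducing matrix elements by the Weyl relations to values of $\om$, the relevant cross terms in the norms $\|\cdot\|_m$ between a fixed function and $h_a$ tend to $0$ by Riemann--Lebesgue, the integrand being $L^1$ \emph{precisely because} $\o_m(\bdp)^{-1}$ is bounded for $m>0$. Since left multiplication by the fixed operator $W(g)$ is weakly continuous, $W(f_a)=W(g)W(h_a)^*\to\kappa\,W(g)$ weakly, and weak closedness of $\cCm(W_+)''$ yields $W(g)\in\cCm(W_+)''$. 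I expect this weak limit to be the \emph{main obstacle}: it is exactly where the massive infrared behaviour enters, and the same estimate fails at $m=0$, in keeping with the infrared difficulties stressed throughout the paper.

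For (ii) I would record first the easy inclusion $\cAm(O)\subset\cCm(O')'$, valid by locality since $\cCm(O')\subset\cAm(O')$ commutes with $\cAm(O)$. For the converse I would exploit that in two dimensions, in null coordinates $u=t+\bdx$, $v=t-\bdx$, any double cone $O$ is a rectangle, so its causal complement $O'$ is the union of a right and a left wedge, both with null boundaries and hence translates of the standard wedges. Applying (i), translation covariance of $\cCm$, and the spatial reflection symmetry to reach the left wedge, I obtain $\cCm(O')''=\cAm(O')''$ for every double cone, upright or not. Taking commutants and invoking Haag duality of the free field, $\cAm(O')'=\cAm(O)$, I conclude $\cCm(O')'=\cAm(O')'=\cAm(O)$, that is $\cC^{(m)d}=\cAm$.
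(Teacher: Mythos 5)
Your proof is correct. For parts (ii) and (iii) you follow essentially the paper's own route: the paper likewise realizes $\g_{(\m,\n)}$ locally as $\ad W(k)$ with $k$ equal to a suitable constant on a neighbourhood of $\bar I$ (your ``local innerness'' check is exactly its argument, and for upright cones not based at time zero both arguments ultimately rest on the fact that $\tau^{(m)}_x$ preserves the null-integral condition), and it likewise obtains (ii) by writing $O'$ as a union of two wedges, applying (i) to each, and invoking duality for $\cAm$. The genuine divergence is in (i). The paper does not translate a rigid compensating bump to infinity; it spreads it out: for $f$ supported in $\bR_+$ with $\a=\int_\bR f$ it sets $f_\eps(\bdx):=f(\bdx)-\a\,\eps\chi(\eps\bdx)$, with $\chi$ real, $\supp\chi\subset(0,1)$, $\int_\bR\chi=1$, and shows $\|f-f_\eps\|_m\to 0$ by dominated convergence (after rescaling, the integrand is $\big[(\Re\a)^2\,\o_{m/\eps}(\bdp)^{-1}+\o_{\eps m}(\eps^2\bdp)\,(\Im\a)^2\big]|\hat\chi(\bdp)|^2$), so that $W(f_\eps)\to W(f)$ \emph{strongly}. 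Your version keeps the compensating charge of fixed shape, pushes it to spacelike infinity inside the wedge, and obtains only \emph{weak} convergence $W(f_a)=W(g)W(h_a)^*\to e^{-\frac12\|h_0\|_m^2}W(g)$, via a clustering/Riemann--Lebesgue estimate together with uniform boundedness and totality of $\{W(q)\Om\}$; since the constant is nonzero, weak closedness of $\cCm(W_+)''$ still gives the claim. Both mechanisms exploit $m>0$ at the same spot --- integrability of $\o_m(\bdp)^{-1}$ against Schwartz functions near $\bdp=0$, equivalently finiteness of the massive norm of a function with nonvanishing integral --- and both break down at $m=0$, as they must, since the massless net does carry the corresponding charged sectors. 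The paper's variant is more economical (norm convergence on the symplectic space, no clustering lemma, no totality argument); yours buys a literal implementation of the ``charge behind the moon'' picture and isolates the infrared obstruction in the single constant $\oz(W(h_0))$, which vanishes precisely when $\Re\int_\bR g\neq 0$.
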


\begin{proof}
(i) Let $f \in \cD(\bR)$ have support contained in $\bR_+$ and set $\a := \int_\bR f$. Moreover, let $\chi \in \cD(\bR)$ be a real function with support in $(0,1)$ and $\int_\bR \chi = 1$, and consider the function $f_\eps(\bdx) := f(\bdx) - \alpha \eps \chi(\eps \bdx)$, $\eps > 0$. Then clearly $\supp f_\eps \subset \bR_+$, $\int_\bR f_\eps = 0$ and
\[\begin{split}
\|f-f_\eps \|^2_m &=  \int_\bR d\bdp\left|\frac {\Re \a} {\o_m(\bdp)^{1/2}} + i \o_m(\bdp)^{1/2} \Im \a\right|^2 |\hat\chi(\bdp/\eps)|^2 \\
&= \int_\bR d\bdp\left[\frac {(\Re \a)^2} {\o_{m/\eps}(\bdp)} +  \o_{\eps m}(\eps^2 \bdp) (\Im \a)^2\right] |\hat\chi(\bdp)|^2 \to 0
\end{split}\]
as $\eps \to 0$, by a straightforward application of the dominated convergence theorem. This implies that $W(f_\eps) \in \cCm(W_+)$ converges strongly to $W(f) \in \cAm(W_+)$, and therefore the statement.

(ii) Given a double cone $O = (W_++a)\cap (-W_++b)$ one has
\[\begin{split}
\cC^{(m)d}(O) &= \cCm(O')' = \cCm(-W_++a)' \wedge \cCm(W_++b)'\\
&= \cAm(-W_++a)' \wedge \cAm(W_++b)' = \cA^{(m)d}(O) = \cAm(O),
\end{split}\]
where (i) and duality for the net $\cAm$ have been used. 

(iii) Let $O = O_I$, and let $f \in \cD(I)$. Given $g \in \cD(\bR)$ such that $g|_I = -i$, one has, by the Weyl relations,
\begin{equation}\label{eq:weylauto}
W(\m g) W(f) W(\m g)^* = e^{i \m \Re\int_\bR f}W(f),
\end{equation}
which shows that $\gamma_{(\m,0)}$ can be defined as an automorphism of the von Neumann algebra $\cAm(O_I)$. Then, it extends to an automorphism of the quasi-local algebra, as the latter is generated, as a C*-algebra, by local algebras of this form. By the same argument, using a function $h \in \cD(\bR)$ such that $h|_I = 1$, one gets $\gamma_{(0,\n)}$. Moreover, it is clear, again by the Weyl relations, that the group-like commutator
\[
W(\n g)^* W(\m h)^* W(\n g) W(\m h)
\]
is a phase factor, thus yielding the required automorphic action of $\bR^2$ (cf.~\cite[Sec. 5]{AMS93}). By~\eqref{eq:weylauto}, it is also clear that $\g_{(\m,\n)}(\cAm(O)) = \cAm(O)$ for all upright
double cones $O$ (not necessarily based at time zero). Finally, the $\g$-invariance of elements of $\cCm(O_I)$ is obvious, and the general case readily follows by the fact that every upright double cone is included in one based on the time zero line.
\end{proof}

Property (iii) above entails in particular that $\cCm$ is a proper subnet of $\cAm$ (restricted to the upright double cones). We also remark that the automorphisms $\g_{(\m,\n)}$ commute with spatial translations. 

We will also need to consider the Poincar\'e covariant subnet $\cC$ of $\cAz$ defined by
\begin{equation}\label{eq:C}
\cC(\Lambda O_I + x) := \Big\{ \pi^{(0)}(\az_{\Lambda,x}(W(f)))\,:\,\supp f \subset I, {\textstyle\int_\bR \Re f = 0}\Big\}''.
\end{equation}
Here the isotony is guaranteed by the fact that massless Poincar\'e transformations also preserve the property $\int_\bR \Re f = 0$. This net is Haag-dual by~\cite[Appendix 3]{HL} when considered as a net acting on the cyclic Hilbert space defined by $\Oz$. Moreover, it enjoys the split property and the local algebras $\cC(O)$ are hyperfinite type III$_1$ factors. This follows from the fact that, as shown for instance in~\cite{BLM}, the local algebras decompose into a tensor product of local algebras associated to the U(1) current on the light rays (for the latter see, e.g., \cite{BMT}). We will see in Sec.~\ref{sec:Schwinger} that $\cC$ can be naturally identified with a Poincar\'e covariant subnet of $\Aoim$. Also, by an argument similar to the one in Prop.~\ref{prop:Cm}, we obtain that $\cC^{(0)d} = \cC$. In particular, $\overline{\cC\Oz} = \cK^{(0)}$.

According to~\cite[Thm.\ 3.1]{CM2}, there exists then a C*-algebra isomorphism $\phi : \cAm \to \cC$ which identifies a countable increasing family of local von Neumann algebras. In view of~\cite{EF}, a natural question would then be if $\phi$ can be chosen in such a way that $\phi(\pi^{(m)}(W(f))) = \pi^{(0)}(W(f))$ whenever $\int_\bR \Re f = 0$. That this can not be the case follows at once from the fact, noted below in Remark \ref{notiso}, that the algebra~\eqref{eq:Ctildem} defined there is a proper subalgebra of $\cAm(O)$ for any $O$. Similarly, there is no isomorphism between $\cAm(O)$ and $\cA^{(0)}(O)$ mapping $\pi^{(m)}(W(f))$ into $\pi^{(0)}(W(f))$ for all $f \in \cD(\bR)$: indeed, the map ${\mathbb R} \ni \alpha \mapsto \pi^{(m)}(W(\alpha f))$ is $\sigma$-strongly continuous for every $f$,
while $\alpha \mapsto \pi^{(0)}(W(\alpha f))$ is not, and any von Neumann algebra isomorphism is automatically continuous for the $\sigma$-strong topologies.
However, at this stage we can not exclude the existence of an isomorphism between~\eqref{eq:Ctildem} and $\cC(O)$, defined by the same formula. 
On the positive side, it follows from results in \cite{BFR}
that for any interval $I \subset \bR$ there exists a von Neumann algebra isomorphism
$\phi :\cC^{(m)}(O_I) \to \cC^{(0)}(O_I)$ such that $\phi(\pi^{(m)}(W(f))) = \pi^{(0)}(W(f))$ if $\supp f \subset I$ and $\int_\bR f = 0$. 
For the sake of self-containment, a direct proof of this result is provided in Appendix~\ref{app:quasiequiv}.
This is an extension to $d=2$ of the classical result of~\cite{EF}, compatible with the infrared singularity of the massless free field. 
\smallskip

We will need a result about a certain phase space property for the massless scalar field in two dimensions that might be of independent interest.

\begin{Proposition}
The net $\cC$ on ${\mathbb R}^2$ in its vacuum representation on $\cK^{(0)}$ satisfies the Buchholz-Wichmann nuclearity condition, namely  the map
\[
\Theta : \cC(O) \to \cK^{(0)},\qquad A \mapsto e^{-\b H} A \Oz,
\]
is nuclear for all $\beta > 0$ and $O$.
\end{Proposition}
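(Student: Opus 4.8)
The plan is to exploit the tensor product factorization of the net $\cC$ that is already recalled in the excerpt: as shown in \cite{BLM}, the local algebras $\cC(O)$ decompose into a tensor product of the algebras associated to the chiral U(1) current on the two light rays. Since the Buchholz-Wichmann nuclearity condition is well known to hold for chiral conformal nets, and in particular for the U(1) current (see \cite{BMT}), the strategy is to lift the chiral nuclearity estimates through this tensor product decomposition. Concretely, I would first record that under the light-cone coordinates $\bdx^\pm = t \pm \bdx$ the Hamiltonian $H$ of the massless field in $d=2$ splits as a sum $H = H_+ + H_-$ of the two commuting chiral conformal Hamiltonians (the conformal energies on the two light rays), and that the vacuum $\Oz$ and the Hilbert space $\cK^{(0)}$ factorize accordingly as $\cK^{(0)} \cong \cK_+ \otimes \cK_-$, $\Oz = \Oz_+ \otimes \Oz_-$.

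Next, I would use this to factorize the map $\Theta$ itself. Writing a double cone $O$ whose light-cone projections are the intervals $I_+, I_-$, the factorization of \cite{BLM} gives $\cC(O) \cong \cC_+(I_+) \otimes \cC_-(I_-)$ with $\cC_\pm$ the chiral current nets, and on a product $A = A_+ \otimes A_-$ the map becomes
\[
\Theta(A_+ \otimes A_-) = e^{-\b H_+}A_+ \Oz_+ \otimes e^{-\b H_-}A_- \Oz_- = (\Theta_+ A_+)\otimes(\Theta_- A_-),
\]
so that $\Theta = \Theta_+ \otimes \Theta_-$ as maps on the spatial tensor product. The chiral maps $\Theta_\pm : \cC_\pm(I_\pm) \to \cK_\pm$, $A_\pm \mapsto e^{-\b H_\pm}A_\pm \Oz_\pm$, are nuclear for all $\b > 0$ by the established nuclearity of the U(1) current net, and the nuclear norm of a tensor product of nuclear maps is bounded by the product of the nuclear norms. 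Hence $\Theta$ is nuclear, which is the claim.

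The main obstacle I expect is making the factorization $\Theta = \Theta_+ \otimes \Theta_-$ rigorous at the level of the \emph{closed} subnet $\cC$ rather than the full net $\cAz$, and in particular checking that the conditions $\int_\bR \Re f = 0$ defining $\cC$ in \eqref{eq:C} interact correctly with the light-ray decomposition—i.e., that the zero-mode constraint responsible for the infrared subtlety of the $d=2$ massless field is exactly the piece removed when passing to the chiral currents, so that no residual noncompact factor survives to spoil nuclearity. A clean way to handle this is to verify that the restriction to $\cK^{(0)}$ kills precisely the ill-behaved infrared mode, so that $H$ acting on $\cK^{(0)}$ has the chiral spectral structure (purely discrete-type conformal spectrum on each ray) needed for the trace-class bound on $e^{-\b H_\pm}$. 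Once the decomposition is justified on $\cC$, the nuclearity estimate is then inherited from the chiral case and the argument closes without further work.
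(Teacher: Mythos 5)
Your strategy coincides with the paper's: factorize $\cC(O)$ into chiral U(1) current algebras via \cite{BLM}, invoke nuclearity for each chiral piece, and conclude by tensoring the two nuclear maps. However, the step you dispatch in one line --- ``the nuclear norm of a tensor product of nuclear maps is bounded by the product of the nuclear norms, hence $\Theta$ is nuclear'' --- is precisely where the actual work lies, and as stated it has a genuine gap. The domain $\cC(O) \cong \cC_+(I_+)\,\bar\otimes\,\cC_-(I_-)$ is a \emph{von Neumann} tensor product: the algebraic tensor product of the chiral algebras is only $\sigma$-weakly dense in it, not norm dense. Consequently, if you expand $\Theta_\pm = \sum_n f^\pm_n(\cdot)\,\xi^\pm_n$ with merely bounded functionals $f^\pm_n$ (which is all that nuclearity gives you a priori), the product functionals $f^+_n \otimes f^-_m$ are defined only on the algebraic tensor product and need not extend to $\cC(O)$ with norm $\|f^+_n\|\,\|f^-_m\|$: a bounded functional on a von Neumann algebra has no reason to be $\sigma$-weakly continuous, and the Banach-space fact you quote applies to projective tensor products, which $\cC(O)$ is not. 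So writing $\Theta = \Theta_+\otimes\Theta_-$ does not by itself produce a nuclear decomposition of $\Theta$ on all of $\cC(O)$.

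The paper closes exactly this gap in three steps: by \cite[Lemma 2.2]{BDF} the chiral maps admit nuclear decompositions with \emph{normal} functionals; by \cite[Sec.\ IV.5]{Ta} the tensor product of normal functionals extends to a normal functional on the von Neumann tensor product whose norm is the product of the norms; and finally the sum $\sum_{n,m}(f_{n,I}\otimes f_{m,J})(\cdot)\,\xi_{n,I}\otimes\xi_{m,J}$ is identified with $\Theta$ on all of $\cC(O)$ (not just on the algebraic tensor product) because both maps are continuous from the $\sigma$-weak topology on the domain to the weak topology on the target. Your proposal is repaired by inserting these three steps. Two smaller points: the nuclearity of the chiral U(1) net is not what \cite{BMT} provides; the paper obtains it from the trace-class condition via \cite{BDL}. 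And the issue you single out as the main obstacle --- the interaction of the zero-mode constraint with the light-ray splitting --- is not where the difficulty lies: it is subsumed in the factorization of the local algebras established in \cite{BLM}, which both you and the paper simply quote.
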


\begin{proof}
The local algebras $\cC(O)$ can be written as tensor products of two local algebras of the $U(1)$ current algebra on ${\mathbb R}$ relative to suitable intervals $I,J$. Now, it is well known that the $U(1)$ current algebra satisfies the trace-class condition and thus, thanks to \cite{BDL}, it satisfies the Buchholz-Wichmann nuclearity. Finally, it can be shown without too much trouble that the von Neumann tensor product of the Buchholz-Wichmann nuclear maps associated to the two intervals $I,J$ is again nuclear. To see this, observe that thanks to \cite[Lemma 2.2]{BDF} one can write $\Theta_I = \sum_n f_{n,I}(\cdot) \xi_{n,I}$ for some normal functionals $f_{n,I}$ on the local von Neumann algebra of the $U(1)$ current associated to the interval $I$ and some Hilbert space vectors $\xi_{n,I}$ such that $\sum_n \|f_{n,I}\| \, \|\xi_{n,I}\| < +\infty$. Now, by \cite[Sec. IV.5]{Ta}, $f_{n,I} \otimes f_{m,J}$ extends to a normal functional on $\cC(O)$ whose norm is equal to $\|f_{n,I}\| \, \| f_{m,J} \|$. It is now clear that
$\Theta = \sum_{n,m} (f_{n,I} \otimes f_{m,J})(\cdot) \xi_{n,I} \otimes \xi_{m,J}$ on the algebraic tensor product of the U(1)-von Neumann algebras of the intervals $I$ and $J$. Since both sides of this equality are continuous w.r.t.\ the $\sigma$-weak topology on the domain and the weak topology on the target, then they coincide as maps on $\cC(O)$ and furthermore $\sum_{n,m} \|f_{n,I} \otimes f_{m,J}\| \,  \|\xi_{n,I} \otimes \xi_{m,J}\| < +\infty$.
\end{proof}

Hereafter, as in~\cite{BV2}  it will be convenient to pass from the net $\cCm$ defined in~\eqref{eq:Cm}, to the one whose local algebras are
\[
\{\pi^{(0)}(\am_x(W( f)))\,: \supp f \subset I, \, {\textstyle \int_\bR f = 0}\}'',
\]
which is net-isomorphic to the former one thanks to Thm.\ \ref{thm:quasiequiv}. We will therefore assume that this has been done. Note that in particular, if $O_I$ is a double cone with basis on the time zero line, with this definition one has $\cCm(O_I) = \cCz(O_I)$.

\begin{Theorem}\label{thm:limitCm}
Let $\cCoim$ be a scaling limit net of $\cCm$, $m > 0$, acting on the Hilbert space $\Koim$. Then there exists a unitary operator $V : \Koim \to \cK^{(0)}$ satisfying the following properties:
\renewcommand{\theenumi}{(\roman{enumi})}
\renewcommand{\labelenumi}{\theenumi}
\begin{enumerate}
\item \label{it:defadV}for all $\uA \in \uCCm(O)$ there holds $\ad V(\poi(\uA)) = \lim_\k \phi_{\l_\k}^{-1}(\uA_{\l_\k})$ weakly;
\item \label{it:Vvacuum}$V\Ooi = \Oz$;
\item $\ad V\circ \aoim_{x} = \az_{x}$ for all translations $x \in \bR^d$; 
\item \label{it:adVCoim}for each pair of upright double cones $O \Subset \tilde O$,
\[
\cCz(O) \subset \ad V(\cCoim(\tilde O)) \subset \cCz(\tilde O);
\]
\item \label{it:adVpoi} given a double cone $O_I$ with basis on the time zero line and $A \in \cCz(O_I) = \cCm(O_I)$, there holds
\[
\ad V(\poi(\uam_h \pac(A) )) = \int_{\bR^2} dx\,h(x) \az_x(A)
\]
for all $h \in C_c(\bR^2)$;
\item 
for all $f \in \cD(\bR)$ such that $\int_\bR f = 0$ the Weyl operator $\Woi(f)$ leaves $\Koim$ invariant and
\[
\ad V (\Woi(f)) = \pi^{(0)} (W(f)) |_{\cK^{(0)}} \ . 
\]
\end{enumerate}
In particular, for the associated quasi-local algebras there holds $\ad V(\cCoim) = \cCz$.
\end{Theorem}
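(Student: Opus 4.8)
The plan is to construct $V$ by matching Weyl data, exploiting the fact that the massive vacuum is carried onto the massless one by the scaling procedure. The starting observation is the clean behaviour of the rescaled Weyl operators under the dilation isomorphisms $\phi_\l$: since $\phi_\l^{-1}(\uW(f)_\l) = \phi_\l^{-1}(W(\d_\l f)) = W(f)$ is independent of $\l$, and since $\o^{(m)}\circ\phi_\l = \o^{(\l m)}$, every $\uC \in \uCCm$ satisfies the key identity $\o^{(m)}(\uC_\l) = \o^{(\l m)}(\phi_\l^{-1}(\uC_\l))$. Applied to $\uC = \uW(f)$ with $\int_\bR f = 0$ this gives $\uo_\l(\uW(f)) = e^{-\frac12\|\d_\l f\|_m^2} = e^{-\frac12\|f\|_{\l m}^2}$, and a dominated-convergence argument (using $\o_{\l m}(\bdp)^{-1} \le |\bdp|^{-1}$ together with $(\Re f)\hat{}(0) = 0$) shows $\|f\|_{\l m} \to \|f\|_0$ as $\l \to 0$. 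Hence $\uooi(\uW(f)) = e^{-\frac12\|f\|_0^2} = \o^{(0)}(W(f))$ for every $f$ with $\int_\bR f = 0$, independently of the chosen scaling limit $\iota$.

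With this state identity in hand, I would first check that the operators $\Woi(f) = \poiac(\uW(f))$, $\int_\bR f = 0$, satisfy the Weyl relations (already recorded in \eqref{eq:Woi}) and generate a copy $\fW_0$ of the Weyl algebra over $\{f : \int_\bR f = 0\}$ inside $\cCoim$, on which $\uooi$ restricts to the massless vacuum state. By GNS uniqueness this defines $V$ on the cyclic subspace $\overline{\fW_0\Ooi} \subset \Koim$, mapping it onto $\cK^{(0)} = \overline{\cCz\Oz}$, sending $\Ooi \mapsto \Oz$ and $\Woi(f) \mapsto \pi^{(0)}(W(f))$; this immediately yields \ref{it:Vvacuum} and the last property.

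The main obstacle, and the real content distinguishing $\cCm$ from the full net $\cAm$, is to show that this cyclic subspace is all of $\Koim$, i.e.\ that the $\Woi(f)$ with $\int_\bR f = 0$ generate the whole scaling limit net $\cCoim$, so that $V$ is genuinely unitary and, in particular, the scaling limit is $\iota$-independent with no extra centre. I would establish this by combining three ingredients: the regularity estimate \eqref{eq:weylreg}, which shows that the smeared operators $\uam_h\uW(f)$ approximate $\uW(f)$ in the $\Ooi$-seminorm and lie in $\uCCm$; the dilation invariance of the massless norm ($\|\d_\l f\|_0 = \|f\|_0$ when $\int_\bR \Re f = 0$), which makes $\cCz$ scale invariant so that its own scaling limit is canonically $\cCz$; and the local quasiequivalence Thm.\ \ref{thm:quasiequiv}, which identifies $\cCm(O_I)$ with $\cCz(O_I)$ and supplies the uniform control of the norms $\|\cdot\|_{\l m}$ on local algebras needed to conclude that no local degrees of freedom beyond the Weyl operators survive in the limit. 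This is exactly the step where the condition $\int_\bR f = 0$ is essential, as it removes the zero mode of the momentum responsible for the nontrivial centre of $\Aoim$.

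Once $V$ is available, the remaining properties follow by transport along the Weyl generators. For \ref{it:defadV} one uses $\phi_\l^{-1}(\uW(f)_\l) = W(f)$ to obtain $\ad V(\poi(\uW(f))) = \wlim_\k\phi_{\l_\k}^{-1}(\uW(f)_{\l_\k})$ on generators, extending to all of $\uCCm$ by the uniform bound $\|\phi_\l^{-1}\| = 1$ and density. Covariance is checked on $\Woi(f)$ using that the scaling limit translations act as the massless ones, $\aoim_x(\Woi(f)) = \Woi(\tau^{(0)}_x f)$, whence $\ad V\circ\aoim_x = \az_x$. For property \ref{it:adVpoi} I would feed $\uam_h\pac(A)$ into \ref{it:defadV}, use the dilation--translation relation $\phi_\l^{-1}\am_{\l x}\phi_\l = \a^{(\l m)}_x$ and the convergence $\a^{(\l m)}_x(A) \to \az_x(A)$ as $\l \to 0$ (again from Thm.\ \ref{thm:quasiequiv}), interchanging limit and integral by dominated convergence. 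Finally \ref{it:adVCoim} follows from \ref{it:defadV} and localization: for the upper inclusion $\phi_{\l_\k}^{-1}(\uA_{\l_\k}) \in \cCm(\tilde O)$, which after reduction to time-zero cones via covariance equals $\cCz(\tilde O)$; for the lower inclusion one notes that when $O \Subset \tilde O$ the operators $\Woi(f)$ with $\supp f$ in the base of $O$ lie in $\cCoim(\tilde O)$ and are mapped by $\ad V$ onto generators $W(f)$ of $\cCz(O)$. The concluding identity $\ad V(\cCoim) = \cCz$ is then immediate.
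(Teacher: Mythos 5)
You have correctly isolated the crux of the theorem --- showing that the cyclic subspace generated by the operators $\Woi(f)$, $\int_\bR f=0$, exhausts $\Koim$ --- and your preliminary steps are sound (the computation $\uooi(\uW(f)) = e^{-\frac12\|f\|_0^2}$, the dilation invariance of the symplectic form, the GNS identification of the Weyl cyclic subspace with $\cK^{(0)}$). But the three ingredients you offer for the crucial density step do not prove it, and this is a genuine gap. What is needed is an upgrade from \emph{weak} to \emph{strong} convergence: setting $B_\k := \phi_{\l_\k}^{-1}(\uA_{\l_\k}) \in \cCz(O)$ for $\uA \in \uCCm(O)$ (this is where Thm.~\ref{thm:quasiequiv} genuinely enters, letting all scales act coherently on $\cK^{(0)}$), weak compactness of bounded sets yields weak limit points of $(B_\k)$ and hence an isometry on the Weyl subspace, but surjectivity of $V$ requires that the vectors $B_\k\Oz$ converge in norm, or at least that $\|B_\k\Oz\|$ converges to the norm of the weak limit; otherwise the GNS space of $\uooi$ can contain vectors invisible to $\cK^{(0)}$. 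Strong density of Weyl polynomials in $\cCz(\l O)$ at each \emph{fixed} scale only produces $\l$-dependent approximants with no uniformity along the net $(\l_\k)$, and neither \eqref{eq:weylreg} nor the dilation invariance of $\|\cdot\|_0$ supplies that uniformity. That no soft argument can close this is shown by the full net $\cAm$: there too the Weyl operators are strongly dense in every local algebra at each scale and asymptotically contained by \eqref{eq:weylreg}, and yet the conclusion fails --- $\Aoim$ has a nontrivial center \cite{BV2}, fed by functions such as $\l \mapsto W(|\log\l|^{1/2}\d_\l f)$ \cite{Buc1}, so extra degrees of freedom genuinely survive the limit.

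The ingredient that fills this gap in the paper, and which your proposal never invokes, is a phase-space property: the Buchholz--Wichmann nuclearity of the net $\cC$ (established in the Proposition immediately preceding the theorem via the decomposition of $\cC(O)$ into $U(1)$-current algebras), which is inherited by $\cCz$ and allows one to repeat the argument of \cite[Lemma~3.3]{BV2}. It is this compactness, combined with the uniform translation continuity built into the scaling algebra and the estimate \eqref{eq:normtranslation}, that turns weak limit points into norm limits, produces scale-uniform approximants, and hence gives the surjectivity of $V$; note that this property is truly specific to the subnet, since the full massless theory in $d=2$ cannot satisfy any such condition ($\cH^{(0)}$ is not even separable). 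Thm.~\ref{thm:quasiequiv}, by contrast, is a statement at the fixed pair of masses $(m,0)$ and carries no uniform-in-$\l$ information; its role is only to replace the local normality used in \cite[Thm.~3.1]{BV2}, not to rule out extra limit degrees of freedom. A secondary flaw: for property (i) you propose to extend $\ad V(\poi(\uA)) = \wlim_\k \phi_{\l_\k}^{-1}(\uA_{\l_\k})$ from Weyl generators to all of $\uCCm$ ``by the uniform bound and density'', but $\uW(f)$ does not belong to $\uCCm$ at all, only to $\uCCmac$ (the function $\l \mapsto W(\d_\l f)$ is not uniformly norm continuous under translations), so there is no norm-dense Weyl subalgebra of $\uCCm$ to extend from; the very existence of the weak limit for arbitrary $\uA \in \uCCm(O)$ is part of what must be proven, again via the scheme of \cite[Thm.~3.1]{BV2}.
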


\begin{proof}
The proof closely follows the one of~\cite[Thm.~3.1]{BV2}, so we limit ourselves to point out the main differences. The key ingredient of that proof, namely the local normality of the vacuum states of the massive and massless free scalar field, is here replaced by the isomorphism of Thm.~\ref{thm:quasiequiv}. Moreover,
one has, for $f \in \cD(\bR)$ such that $\int_\bR f = 0$,
\begin{equation}\label{eq:normtranslation}\begin{split}
&\| \tau^{(\l m)}_x f - \tau_x^{(0)}f \|_0^2 = \|\tau_t^{(\l m)}f-\tau^{(0)}_tf\|^2_0\\
&= \frac12\int_\bR\frac{d\bdp}{|\bdp|}\bigg| \big[ \cos(t\o_{\l m}(\bdp))-\cos(t|\bdp|)\big] \widehat{\Re f}(\bdp)
-\big[  \o_{\l m}(\bdp)\sin(t\o_{\l m}(\bdp))-|\bdp|\sin(t|\bdp|)\big]\widehat{\Im f}(\bdp)\\
&\phantom{= \frac12\int_\bR\frac{d\bdp}{|\bdp|}}
+i|\bdp|\big[ \cos(t\o_{\l m}(\bdp))-\cos(t|\bdp|)\big] \widehat{\Im f}(\bdp)
+i|\bdp|\bigg[\frac{\sin(t\o_{\l m}(\bdp))}{\o_{\l m}(\bdp)}-\frac{\sin(t|\bdp|)}{|\bdp|}\bigg]\widehat{\Re f}(\bdp)\bigg|^2.
\end{split}\end{equation}
This integral 
is seen to converge to zero, as $\lambda \to 0$, by an application of the dominated convergence theorem, since the integrand can be bounded, for fixed $t \in \bR$ and all $\l \in [0,1]$, by the function
\begin{equation}\label{eq:estimate}
\frac{4}{|\bdp|}\left[(1+|t|\o_m(\bdp))\big|\widehat{\Re f}(\bdp)\big|+2\o_m(\bdp)\big|\widehat{\Im f}(\bdp)\big|\right]^2,
\end{equation}
which is integrable thanks to the fact that $\hat f$ is a Schwarz function such that $\hat f(0)=0$.  This shows that the key estimate in~\cite[Lemma 3.2(b)]{BV2} can be done also in our case.
 The fact that $\cC$ satisfies Buchholz-Wichmann nuclearity entails that this is true also for the subnet $\cCz$, and this allows us to repeat in our context the proof of~\cite[Lemma~3.3]{BV2}. The above arguments show the validity of (i)-(v). Finally, (vi) is obtained by observing that for all $\uA \in \uCCm(O_I)$ and $f$ as in the statement one has $\Woi(f) \poiac(\uA) \Ooi = \poiac(\uW(f)\uA)\Ooi \in \Koim$ and then computing directly the l.h.s. of the equality using (v).
\end{proof}

Taking into account the outer regularity of $\cCz$, consequence of the strong continuity of the action of the dilation group on $\cK^{(0)}$, it readily follows that $\ad V$ implements a net isomorphism between the outer regularized net of $\cCoim$ and $\cCz$, similar to the situation discussed in \cite{BV2} for the higher dimensional case. Of course, the quasi-local $C^*$-algebras of the net $\cCoim$ and of its outer regularized net coincide.
 
 \smallskip
 We also see from this concrete situation that the dual of the scaling limit net is not necessarily equal to the scaling limit net of the dual.

\begin{Remark}\label{notiso}
The above results also show that the scaling limit of $\cCm$ is a proper subnet of the net $\cC$ defined in~\eqref{eq:C}. In order to obtain the full net $\cC$ in the scaling limit, the first guess would be to associate to the double cone $O = \Lambda O_I +x$ the von Neumann algebra
\begin{equation}\label{eq:Ctildem}
 \{\am_{\L,x}(\pi^{(m)}(W(f)))\,: \supp f \subset I, \, {\textstyle \int_\bR \Re f = 0}\}''.
\end{equation}
Being invariant under $\g_{(\l,0)}$, this is again a proper subalgebra of $\cAm(O)$. However, as pointed out to us by D. Buchholz, this has the serious drawback that the resulting family of von Neumann algebras does not satisfy isotony: if $J \Subset I$ and $\supp f \subset J$ with $\int_\bR \Re f = 0$, it is easy to see that for sufficiently small $t >0$  $\supp \tau^{(m)}_t f \subset I$ but $\int_\bR \Re \tau^{(m)}_t f \neq 0$, and therefore $\am_t(W(f))$ does not belong to the algebra associated to $O_I$. This also shows that the union of all such algebras associated to double cones based on the time zero line is not invariant under time translations. 
\end{Remark}
\medskip

As shown in the following statement, the subnet $\cCoim$ captures the relevant information about the above described sectors of $\Aoim$.

\begin{Proposition}\label{prop:restrictionz}
Let $\r_{q,\iota}$ be a sector of the scaling limit theory $\Aoim$, then its
 restriction  to the scaling limit net $\cCoim$ is well defined and properly supported  and induces a non trivial translation covariant sector of $\cCoim$. 
 Moreover, the right cohomological extension of $\r_{q,\iota} \rest \cCoim$ coincides with $\r_{q,\iota}$.
\end{Proposition}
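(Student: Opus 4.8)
The plan is to read off all the required properties from the explicit action of $\r_{q,\iota}$ on the Weyl operators, together with the wedge structure of $\cCoim$. The pivotal observation, which I would establish first, is the $m=0$ counterpart of Proposition~\ref{prop:Cm}(i): on right wedges one has $\cCoim(W_+)''=\cC(W_+)''$, with $\cC$ the net \eqref{eq:C}. Indeed, given $f$ with $\supp f\subset\bR_+$ and $\int_\bR\Re f=0$ (but $\int_\bR\Im f\neq0$), one kills the imaginary zero mode by subtracting an imaginary bump $ic\,\eps\chi(\eps\,\cdot)$ supported in $\bR_+$; since the imaginary part enters $\|\cdot\|_0$ with the infrared-regular weight $\o_0(\bdp)^{1/2}=|\bdp|^{1/2}$, this costs only $O(\eps^2)$ and the support stays inside the (unbounded) wedge, so $\Woi(f)\in\cCoim(W_+)''$. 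The key consequence is that the transporters $\Woi(iu_n)$ and $\Woi(i(u_n-u_m))$ defining $\r_{q,\iota}$ in \eqref{eq:rqi} have vanishing real zero mode, hence belong to these wedge algebras of $\cCoim$, even though they escape its time-zero double-cone algebras (where $\int_\bR f=0$ is imposed).

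Next I would settle well-definedness. From the first computation of this section, for $\int_\bR f=0$ one has $\r_{q,\iota}(\Woi(f))=e^{-i\int_\bR u_\infty\,\Re f}\,\Woi(f)$, again a scalar multiple of a generator of $\cCoim$. Since $\r_{q,\iota}$ equals $\ad\,\Woi(iu_n)$ on any fixed local algebra for $n$ large (by the localization in \eqref{eq:rqi}), it is normal there and therefore maps $\cCoim(O)$ into itself for every upright double cone $O$, and likewise on wedge algebras; hence $\r_{q,\iota}(\cCoim)\subset\cCoim$. Here I would stress the subtle point that, although the implementers $\Woi(iu_n)$ do not leave $\Koim$ invariant (their symbols have $\int_\bR iu_n\neq0$, cf.\ Theorem~\ref{thm:limitCm}(vi)), the conjugation by them does preserve $\cCoim$, so $\r_{q,\iota}\rest\cCoim$ is a genuine normal morphism of the net $\cCoim$ on $\Koim$. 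Proper support and translation covariance then descend from $\r_{q,\iota}$ on $\Aoim$: the restriction is implemented, in the norm limit, by the very same transporters $\Woi(iu_n)\in\cCoim(W_+-a)''$, the intertwiners with the translates are wedge-localized Weyl operators lying in $\cCoim$ by the wedge identification, and the defining condition $\int_\bR f=0$ is translation invariant.

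For non-triviality I would argue directly, via $\ad V$ and Theorem~\ref{thm:limitCm}: the restriction becomes the symplectic-shift automorphism of $\cCz$ sending $\pi^{(0)}(W(f))\mapsto e^{-i\int_\bR u_\infty\,\Re f}\pi^{(0)}(W(f))$ for $\int_\bR f=0$. Its induced sector is trivial iff this shift is unitarily implemented in the vacuum representation on $\cK^{(0)}$, i.e.\ iff the real-linear functional $f\mapsto\int_\bR u_\infty\,\Re f$ is $\|\cdot\|_0$-continuous. Integrating by parts, this functional equals $-\int_\bR u_\infty'\,F$ with $F(\bdx)=\int_{-\infty}^{\bdx}\Re f$ and $\widehat{u_\infty'}(0)=\int_\bR u_\infty'=q$; its $\|\cdot\|_0$-dual norm is controlled by $\int_\bR d\bdp\,|\bdp|^{-1}|\widehat{u_\infty'}(\bdp)|^2$, which diverges logarithmically at $\bdp=0$ precisely because $q\neq0$. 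Thus the functional is unbounded, the shift is non-implementable, and the sector is non-trivial; this is exactly where the infrared singularity of the $d=2$ massless field enters. (Alternatively, non-triviality follows a posteriori from the last assertion, as a trivial sector of $\cCoim$ would cohomologically extend to a trivial sector of $\Aoim$, contradicting \cite[Thm.~4.1]{BV2}.)

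Finally, for the coincidence with the right cohomological extension, I would argue by uniqueness of the construction of \cite{CM2}. Since $\r_{q,\iota}\rest\cCoim$ is properly supported and localized in the right wedge $W_+-a$, where $\cCoim$ already contains the transporters $\Woi(iu_n)$, the extension is obtained by reusing these same wedge-local unitaries to act on all of $\Aoim$; this reproduces verbatim the limit \eqref{eq:rqi} defining $\r_{q,\iota}$. As $\r_{q,\iota}$ is itself a right-wedge-localized, properly supported morphism of $\Aoim$ restricting correctly to $\cCoim$, uniqueness of the right cohomological extension forces the two to coincide. I expect the main obstacle to lie precisely here and in the preceding wedge analysis: establishing the scaling-limit identity $\cCoim(W_+)''=\cC(W_+)''$ and correctly locating the zero-mode-carrying transporters inside the wedge algebras of $\cCoim$, thereby reconciling the infrared structure of the model with the abstract cohomological extension of \cite{CM2}.
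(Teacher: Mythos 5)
Your treatment of well-definedness and proper support is essentially the paper's (Weyl generators of $\cCoim$ pick up only phases under $\r_{q,\iota}$, and normality of $\ad\Woi(iu_n)$ on local algebras), and your non-triviality argument is a legitimate \emph{alternative} to the paper's: the paper invokes the asymptotically central sequence $Z_w^{(n)}(\pi/q)$ of~\cite{BV2}, whereas you use the Streater--Wilde/Ciolli criterion that a symplectic-shift automorphism is implementable iff the associated linear functional is $\|\cdot\|_0$-bounded. That route works, but note a logical slip: you show the \emph{upper} bound from Cauchy--Schwarz, $\int_\bR d\bdp\,|\bdp|^{-1}|\widehat{u_\infty'}(\bdp)|^2$, diverges, and conclude unboundedness of the functional. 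A divergent upper bound proves nothing; you must show the dual norm \emph{equals} (up to a constant) that integral, i.e.\ that Cauchy--Schwarz is saturated, which requires density of $\{|\bdp|^{-1/2}\hat f : f \in \cD_0(I,\bR)\}$ in the relevant $L^2$ space. This is true and standard, but it is the actual content of the step.

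The genuine gap is in the last claim, on the cohomological extension, and it is tied to a self-contradiction in your setup. You assert both that $\Woi(iu_n)$ belongs to the wedge algebras $\cCoim(W_+)''$ (von Neumann algebras acting on $\Koim$) and that $\Woi(iu_n)$ does \emph{not} leave $\Koim$ invariant; these cannot both hold (in fact the non-invariance claim is wrong: Thm.~\ref{thm:limitCm}(vi) gives a sufficient condition for invariance, not a characterization, and your own $O(\eps)$ approximation of the imaginary zero mode shows $\Woi(iu_n)$ preserves the Weyl-cyclic description of $\Koim$). More importantly, even granting the wedge identification, it produces operators on $\Koim$ only, and such operators cannot simply be ``reused to act on all of $\Aoim$'', which lives on $\Hoim$; your appeal to ``uniqueness of the right cohomological extension'' is never backed by verifying its hypotheses for $\r_{q,\iota}$. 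You also conflate the charge \emph{creators} $\Woi(iu_n)$ (whose symbols carry the charge, $\int_\bR \Im(iu_n)\neq 0$) with the charge \emph{transporters} that the cohomological extension actually uses. The paper's proof hinges on exactly the point your argument misses: the intertwiners between $\r_{q,\iota}$ and its translates, $W_\bdx = \lim_n \Woi(iu_n)\Woi(i\tau^{(0)}_\bdx u_n)^*$, are (up to phases) Weyl operators with symbols $i(u_n - \tau^{(0)}_\bdx u_n)$ of \emph{vanishing total integral}, since translation preserves $\int_\bR u_n$. Hence they lie honestly in $\cCoim(W)$ --- they preserve $\Koim$ by Thm.~\ref{thm:limitCm}(vi) \emph{and} act on $\Hoim$ --- and one then verifies directly that $\r_{q,\iota}(A) = W_\bdx A W_\bdx^*$ for $A \in \Aoim(O)$ with $W_+-a+\bdx \subset O'$, which \emph{is} the statement that $\r_{q,\iota}$ equals the right cohomological extension of its restriction. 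Without this zero-integral observation (or an equivalent substitute), the ``Moreover'' part of the Proposition is not proved in your proposal.
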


\begin{proof}
According to Thm. \ref{thm:limitCm}, 
every element  $A \in \cCoim(O_I)$ is a strong limit of linear combinations of Weyl operators $\Woi(f)$ with $\supp f \subset I$ and $\int_\bR f = 0$. 
For any such Weyl operator, 
by the Weyl relations $\r_{q,\iota}(\Woi(f)) = \Vqn^*\Woi(f)\Vqn$ differs from $\Woi(f)$ by a phase factor, and therefore $\r_{q,\iota}(A)$ still belongs to $\cCoim(O_{\tilde{I}})$, where $\tilde{I} \Supset I$. 
This shows that 
the restriction of $\r_{q,\iota}$ to $\cCoim$ is well defined. 
Moreover, such restriction is properly supported because $\r_{q,\iota}$ is, and
it is not equivalent to the vacuum sector  by a similar
argument as in~\cite[Sec. 4]{BV2}, when one observes that the operators $Z_w^{(n)}(\pi/q)$ used there, which form an asymptotically central sequence, belong to $\cCoim$ too, and $\r_{q,\iota}(Z_w^{(n)}(\pi/q)) = e^{-i \pi}Z_w^{(n)}(\pi/q)$ for $n$ large enough. Finally, the statements about translation covariance and the cohomological extension are consequences of the following observations. For $\bdx \in \bR$, the morphism
\[
\r^{(\bdx)}_{q,\iota}(A) = \lim_{n\to +\infty} \Woi(i\tau^{(0)}_\bdx u_n)^* A \Woi(i\tau^{(0)}_\bdx u_n), \qquad A \in \Aoim,
\]
is localized in $W_+ - a +\bdx$, restricts to a morphism of $\cCoim$ by the same argument used for $\r_{q,\iota}$, and is equivalent to the latter, a unitary intertwiner being
\[
W_\bdx = \lim_{n \to +\infty} \Woi(iu_n) \Woi(i\tau^{(0)}_\bdx u_n)^* ,
\]
where the limit exists in the strong operator topology as discussed in~\cite{BV2}. Moreover, for any wedge
$W \Supset (W_+-a) \cup (W_+ - a + \bdx)$ 
we have $\Woi(iu_n) \Woi(i\tau^{(0)}_\bdx u_n)^* \in \cCoim(W)$ for all $n \in \bN$ because $\int_\bR (u_n - \tau^{(0)}_\bdx u_n)=0$,
and thus $W_\bdx \in \cCoim(W)''$. 
Therefore, for each $A \in \Aoim(O)$, given $\bdx \in \bR$ such that $W_+ - a+\bdx \subset O'$, one has
\[
\r_{q,\iota}(A) = W_\bdx \rho_{q,\iota}^{(\bdx)}(A)W_\bdx^* = W_\bdx AW_\bdx^* \ ,
\]
as desired.
\end{proof}
As a matter of fact, the restriction of $\r_{q,\iota}$ to $\cCoim$ is localized in $O_{(-a,a)}$, cf. \cite[Prop. 4.5]{Cio}.

It is now possible to construct explicit examples of asymptotic morphisms of the net $\cCoim$, satisfying the general properties discussed in \cite{CM2}.

\begin{Theorem}\label{thm:asympmorCm}
The family $(\phi_\l)_{\l > 0}$ defined in Eq.~\eqref{eq:phil} is an asymptotic isomorphism of $\cCm$ with respect to $\uooi = \lim_\k \uo_{\l_\k}$.
Moreover, given a sector $\r_{q,\iota}$ of the scaling limit theory $\Aoim$, the family $(\rho_\l)_{\l > 0}$, defined by the norm limit
\begin{equation}\label{eq:rl}
\rho_\l(A) = \lim_n \uW (i u_n)_\l \phi_\l(A) \uW(iu_n)^*_\l \ ,\quad A \in \cCm \ , 
\end{equation}
 is a tame asymptotic morphism of $\cCm$
and it holds
\begin{equation}\label{eq:R2}
\poiac(\rac(A)) = \r_{q,\iota}\big(\poiac\pac(A)\big)= \r_{q,\iota}\big(\ad V^*(A)\big), \quad A \in \cCm.
\end{equation}
\end{Theorem}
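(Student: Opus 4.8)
The plan is to test the two families directly against the definitions of Section~\ref{sect:Free}, using crucially that in the massless realization of $\cCm$ adopted after Thm~\ref{thm:quasiequiv} the vacuum $\Oz$ is dilation invariant, i.e.\ $\oz\circ\phi_\l=\oz$. First I would record the structural facts. Since $\d_\l$ only rescales the zero mode, the condition $\int_\bR f=0$ is stable under $\phi_\l$, whence $\phi_\l(\cCm(O_I))=\cCm(\l O_I)$ and each $\phi_\l$ is a vacuum-preserving automorphism of $\cCm$. For $\rho_\l$, I would note that for $A\in\cCm(O_I)$ the operator $\uW(i(u_n-u_{n'}))_\l$ is localized in $\l[na,+\infty)$, so once $na>\sup I$ — a threshold independent of $\l$ — the expression in \eqref{eq:rl} is independent of $n$ and equals $\uW(iu_n)_\l\phi_\l(A)\uW(iu_n)^*_\l$; conjugating a Weyl operator by $\uW(iu_n)_\l$ only multiplies it by a phase, so $\rho_\l$ preserves localization, $\rho_\l(\cCm(O_I))\subset\cCm(\l O_I)$, and each $\rho_\l$ is a contractive $*$-homomorphism of $\cCm$. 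Consequently the three algebraic conditions for an asymptotic morphism hold exactly, the morphism is tame, and $\pac,\rac$ are contractive, hence norm continuous, with $\pac$ isometric and therefore injective.

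The core step is the asymptotic containment $\pac(A),\rac(A)\in\uCCmac$ for every $A\in\cCm$; I treat $\rac$, the case of $\pac$ being identical with $u_n$ replaced by $0$. Fix $A\in\cCm(O_I)$, an enlargement $\hat O\Supset O_I$ with base $\hat I$, and $n$ above the threshold, so that $\rac(A)=\uW(iu_n)\pac(A)\uW(iu_n)^*$ as a function of $\l$. By the Weyl relations a product of Weyl operators is a phase times a single Weyl operator, so the $*$-algebra generated by the $W(g)$ with $\int_\bR g=0$ and $\supp g\subset\hat I$ is spanned by such Weyl operators; by Kaplansky density I choose $B=\sum_j c_j W(g_j)$ of this form with $\|(A-B)\Phi_n\|$ and $\|(A^*-B^*)\Phi_n\|$ small, where $\Phi_n:=W(iu_n)^*\Oz$. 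The pivotal computation, using $\uW(iu_n)^*_\l=\phi_\l(W(iu_n)^*)$, unitarity, multiplicativity of $\phi_\l$ and $\oz\circ\phi_\l=\oz$, is
\[
\|[\rho_\l(A)-\rho_\l(B)]\Oz\|=\|\phi_\l\big((A-B)W(iu_n)^*\big)\Oz\|=\|(A-B)\Phi_n\|,
\]
and likewise for the adjoints (using that $\rho_\l$ is a $*$-homomorphism), so this is small uniformly in $\l$. On the other hand $\rho_\l(B)=\sum_j c_j\,e^{-i\sigma(iu_n,g_j)}\,\uW(g_j)_\l$, with phases independent of $\l$ because $\phi_\l$ preserves $\sigma$; since each $\uam_h\uW(g_j)\in\uCCm(\hat O)$ approximates $\uW(g_j)$ in vacuum norm by \eqref{eq:weylreg} for $h$ close to a $\d$-function, the element $\uC:=\sum_j c_j\,e^{-i\sigma(iu_n,g_j)}\,\uam_h\uW(g_j)\in\uCCm(\hat O)$ matches $\rac(A)$ in the $\limsup_\k$ sense required by the definition of $\uCCmac$. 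This yields $\rac(A)\in\uCCmac(\hat O)$.

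With containment established the remaining assertions are formal. Since $\poiac$ is a homomorphism and $\poiac(\uW(iu_n))=\Woi(iu_n)=\Vqn$ by \eqref{eq:Woi}, for $n$ above the threshold
\[
\poiac(\rac(A))=\Vqn\,\poiac(\pac(A))\,\Vqn^*=\r_{q,\iota}\big(\poiac(\pac(A))\big),
\]
the last equality coming from the localization of $\Vqn$ underlying \eqref{eq:rqi}. Evaluating $\poiac\circ\pac$ on Weyl generators through \eqref{eq:Woi} and Thm~\ref{thm:limitCm}(vi) (whereby $\ad V(\Woi(f))=\pi^{(0)}(W(f))|_{\cK^{(0)}}$ for $\int_\bR f=0$) identifies $\poiac\circ\pac=\ad V^*$ on $\cCm$, which gives the displayed formula \eqref{eq:R2}; moreover $\r_{q,\iota}$ restricts to a morphism of $\cCoim$ sending local algebras to local algebras by Prop~\ref{prop:restrictionz}, yielding the inclusion $\poiac(\rac(\bigcup_O\cCm(O)))\subset\bigcup_O\cCoim(O)$ (and the analogous inclusion for $\pac$). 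Finally, to upgrade $(\phi_\l)$ to an asymptotic isomorphism I set $\bar s:=\pac\circ\ad V$; then $\poiac\circ\bar s=\poiac\circ\pac\circ\ad V=\ad V^*\circ\ad V=\mathrm{id}_{\cCoim}$, so $\bar s$ is a continuous section of $\poiac$, and Thm~\ref{thm:limitCm}\ref{it:adVCoim} gives $\bar s(\bigcup_O\cCoim(O))=\pac(\bigcup_O\cCm(O))$, precisely the image-matching condition.

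I expect the main obstacle to be the containment step, specifically the control of $\rho_\l(A)$ in vacuum norm even though the conjugating unitary $\uW(iu_n)^*_\l$ does not fix $\Oz$. The resolution is that for a fixed local $A$ the relevant index $n$ can be chosen independently of $\l$, so that $\Phi_n$ is a single fixed vector on which strong-$*$ density of the Weyl operators suffices, and dilation invariance of $\oz$ converts the $\l$-dependent estimate into the $\l$-independent quantity $\|(A-B)\Phi_n\|$.
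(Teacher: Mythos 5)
Your proposal is correct in substance and reaches all the assertions of the theorem, but it takes a genuinely different route through the key step, namely the asymptotic containment $\pac(A),\rac(A)\in\uCCmac$. The paper proves it by smearing: for local $A$ it picks $h$ with $\|[A-\int_{\bR^2}dx\,h(x)\az_x(A)]\Oz\|$ small (strong continuity of the massless translations) and identifies $\lim_\k\|[\phi_{\l_\k}(A)-\uam_h\pac(A)_{\l_\k}]\Oz\|$ with that quantity by means of Theorem~\ref{thm:limitCm}(i) and (v), i.e.\ through the full strength of the unitary $V$; the approximant in $\uCCm$ is $\uam_h\pac(A)$ itself, and the case of $\rac$ is then dispatched in one line. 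You instead reduce to Weyl polynomials by Kaplansky density, convert the required uniformity in $\l$ into the single fixed-vector estimate $\|(A-B)\Phi_n\|$ via dilation invariance of $\oz$, and then only need containment of the rescaled Weyl operators $\uW(g)$ with $\int_\bR g=0$. Your route makes the uniformity in $\l$ completely transparent, treats $\pac$ and $\rac$ on exactly the same footing (the paper is rather terse on $\rac$), and uses Theorem~\ref{thm:limitCm} only through the softer statements (ii), (iv), (vi) rather than the weak-limit characterization (i); the paper's route avoids any density argument and exhibits the approximant explicitly for an arbitrary local $A$. Your endgame coincides with the paper's: the section is the same $\bar s=\pac\circ\ad V$, and the image-matching via Theorem~\ref{thm:limitCm}(iv) is identical; you prove $\poiac\pac=\ad V^*$ by evaluation on Weyl generators through Theorem~\ref{thm:limitCm}(vi), where the paper obtains it from the $h\to\delta$ limit and Theorem~\ref{thm:limitCm}(v).

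Two caveats. First, your citation of \eqref{eq:weylreg} is not quite what you need: that estimate is stated with the massive vacuum $\Om$, whereas in the massless realization adopted after Theorem~\ref{thm:quasiequiv} membership in $\uCCmac$ must be tested against $\Oz$. The statement you need is true for $\int_\bR g=0$ --- by your own dilation-invariance trick it reduces to controlling $\limsup_\l\int_{\bR^2}dx\,|h(x)|\,\|[W(g)-W(\tau^{(\l m)}_xg)]\Oz\|$, which is exactly what the estimates \eqref{eq:normtranslation}--\eqref{eq:estimate} provide --- but it does not follow formally from \eqref{eq:weylreg}, and it would fail without the zero-integral hypothesis (this is the infrared obstruction of Section~\ref{sec:Schwinger}). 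Second, the step $\poiac(\rac(A))=\Vqn\,\poiac(\pac(A))\,\Vqn^*$ cannot honestly be derived from ``multiplicativity of $\poiac$'' together with $\poiac(\uW(iu_n))=\Vqn$: since $\int_\bR iu_n\neq 0$, the values of $\uW(iu_n)$ do not lie in the subnet, so $\uW(iu_n)\notin\uCCmac$, and \eqref{eq:Woi} identifies $\poiac(\uW(iu_n))$ only in the massive realization, not in the massless one where your containment was proved; thus the product cannot be split inside a single algebra on which $\poiac$ is known to be multiplicative. The clean justification is the one your own framework already supplies: verify the identity on Weyl polynomials, where conjugation by $\uW(iu_n)_\l$ and by $\Vqn$ produces the same $\l$-independent phases, and extend by the density and vacuum-norm continuity you established. (The paper's proof of \eqref{eq:R2} is equally terse on this point, so this is a presentational gap rather than a defect of your approach.)
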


\begin{proof}
We start by proving that $(\phi_\l)_\l$ is an asymptotic isomorphism. 
The properties (5.1)-(5.3) of~\cite{CM2} are obvious since $\phi_\l$ is an automorphism for each $\l > 0$. In order to prove properties (i) and (ii) of~\cite[def.\ 5.1]{CM2}, consider an element $A \in \cCm(O)$
and, given $\varepsilon > 0$, choose $h \in C_c(\bR^2)$ such that
\begin{equation}
\Big\| \Big[A-\int_{\bR^2} dx\,h(x)\az_x(A)\Big] \Oz\Big\| < \varepsilon.
\end{equation}
We have then the following equalities:
\begin{equation*}\begin{split}
\lim_\k \| [\phi_{\l_\k}(A) - \uam_{h}\pac(A)_{\l_\k}]\Oz\|^2 &= \lim_\k \| [A-\phi^{-1}_{\l_\k}(\uam_{h}\pac(A)_{\l_\k})]\Oz\|^2\\
&= \| [A-\ad V(\poi(\uam_{h}\pac(A)))]\Oz\|^2 \\
&= \Big\| \Big[A-\int_{\bR^2} dx\,h(x)\az_x(A)\Big] \Oz\Big\|^2 < \varepsilon^2.
\end{split}\end{equation*}
Here, the first equation follows from the fact that $\phi_\l$ is unitarily implemented and leaves the massless vacuum invariant, the second one follows from Thm.~\ref{thm:limitCm}\ref{it:defadV}, and the third one follows from Thm.~\ref{thm:limitCm}\ref{it:adVpoi} when one observes that $O$ is contained in some $O_I$ large enough. Since a similar argument holds for $\| [\phi_{\l_\k}(A) - \uam_{h}\pac(A)_{\l_\k}]^*\Oz\|$, 
we conclude that $\pac(A):=(\l \mapsto \phi_\l(A))$ belongs to $\uCCmac(O)$. Moreover, it is clear that the map $A \in \cCm_{\text{loc}} \mapsto \pac(A) \in {\uCCmac_{\text{loc}}}$ is norm continuous, and therefore it extends to a norm continuous map from $\cCm$ to $\uCCmac$. Since $\poiac ({\uCCmac_{\text{loc}}}) \subset \cC_{0,\iota,\text{loc}}^{(m)}$ by~\cite[thm.\ 4.6]{CM2}, this also shows that property (iii) of~\cite[def.\ 5.1]{CM2} is valid. It is also clear that, being $\phi_\l$ an automorphism, $A \mapsto \pac(A)$ is injective, i.e., we get property (i) of~\cite[def.\ 5.4]{CM2}. Finally we show the validity of property (ii) of~\cite[def.\ 5.4]{CM2}. To this end, we claim that the map $\bar s : \cCoim \to \uCCmac$ defined by
\[
\bar s(A_0)_\l := \phi_\l\ad V(A_0), \qquad A_0 \in \cCoim,
\]
is a continuous section of $\poiac : \uCCmac \to \cCoim$. The map is obviously continuous. Moreover, the fact that it is a section follows at once from the identity
\begin{equation}\label{eq:poiacpac}
\poiac(\pac(A)) = \ad V^*(A), \qquad A \in \cCm,
\end{equation}
applied to $A = \ad V(A_0) \in \cCz = \cCm$. In turn, the latter equation is proven through the equalities 
$$\poiac \pac (A) = \lim_{h \to \delta} \poi(\uam_h \pac(A)) = \lim_{h \to \delta} \ad V^* \bigg( \int_{\bR^2} dx\,h(x) \az_x(A) \bigg) = \ad V^* (A) \  $$
(where the limits are taken in the strong operator topology),
which are consequences of \cite[Lemma 4.5]{CM2} and Thm.\ \ref{thm:limitCm}(v).
The proof of (ii) of~\cite[def.\ 5.4]{CM2} is then achieved by observing that by Thm.~\ref{thm:limitCm}\ref{it:adVCoim} $\ad V(\bigcup_O \cCoim(O)) = \bigcup_O \cCz(O)= \bigcup_O \cCm(O)$, and therefore $\bar s(\bigcup_O \cCoim(O)) = \pac \ad V(\bigcup_O \cCoim(O)) = \pac(\bigcup_O \cCm(O))$.

In order to show that $\rho_\l$ is a tame asymptotic morphism, we recall that $\uW(f) \in \bigcup_O \uCCmac(O)$ for $f \in \cD(\bR)$ such that $\int_\bR f =0$. This fact, together with what we have just shown, makes it clear that $\rac(A) \in \uCCmac$ for all $A \in \cCm$ and that 
$\poiac(\rac(\bigcup_O \cCm(O))) \subset \bigcup_O \cCoim(O)$, as required. All the remaining properties are obviously satisfied.

Finally, the formula (\ref{eq:R2}) can be verified by a direct computation. Indeed, by~\eqref{eq:rl} and~\eqref{eq:rqi},
$$\poiac(\rac(A)) = \lim_n \Woi (iu_n) \poiac \pac (A) \Woi(iu_n)^* = \r_{q,\iota}(\poiac \pac (A))$$
and then one uses~\eqref{eq:poiacpac}.
\end{proof}
It can also be shown that $\cCm$ has convergent (and therefore unique) scaling limit, by repeating the argument in \cite[Thm.\ 7.5]{BDM2}, \emph{mutatis mutandis}.

\section{Asymptotic morphisms and smoothed out Weyl operators}\label{sec:Schwinger}

As remarked in Sec.~\ref{sect:Free}, $\cAm$ and $\Aoim$ are not isomorphic and therefore $(\phi_\l)$ can not be an asymptotic isomorphism of $\cAm$. It is however natural to ask how far it goes in this direction. To this end, a relevant condition is that the function $\l \mapsto \phi_\l(A)$ belongs to $\uAAmac$ for all $A \in \cAm$. We will see shortly that this is not the case in general. However, a partial result in this direction can be formulated introducing the local, Poincar\'e covariant net of C*-algebras $O \mapsto \fWm_r(O)$ generated by smoothed out Weyl  operators, i.e., by elements of $\cAm(O)$ of the form
\begin{equation}\label{eq:regweyl}
\int_{\bR^2} dy \,g(y) \am_y(W(f)), \qquad g \in C_c(\bR^2).
\end{equation}
It is easy to verify that  the action $x \mapsto \am_x(W)$ of translations is norm continuous for all $W \in \fWm_r(O)$ and, using 
\eqref{eq:innercont} and Lorentz covariance, that $\fWm_r(O)$ is strongly dense in $\cAm(O)$.

We also define the C*-subalgebra $\cAm_\phi(O) \subset \cAm(O)$ of all elements $A \in \cAm(O)$ which are mapped into $\uAAmac(O)$ by the isometric morphism $\pac : \cAm \to \ell^\infty(\bR_+,B(\cH))$ (the C*-algebra of bounded functions from $\bR_+$ to $B(\cH)$), and we denote by $\cAm_\phi$ the inductive limit of the net $O \mapsto \cAm_\phi(O)$.
We already know, by~\eqref{eq:weylreg}, that $W(f) \in \cAm_\phi(O_I)$ for any test function $f \in \cD(I)$. We now show that $\fWm_r(O) \subset \cAm_\phi(O)$.

\begin{Proposition}\label{prop:philW}
For all $W \in \fWm_r(O)$, $\l \mapsto \phi_\l(W)$ belongs to $\uAAmac(O)$.
\end{Proposition}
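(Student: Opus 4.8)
The plan is to use that $\cAm_\phi(O)=(\pac)^{-1}(\uAAmac(O))$ is a C*-subalgebra of $\cAm(O)$ (because $\pac$ is a $*$-homomorphism into $\ell^\infty(\bR_+,B(\cH))$ and $\uAAmac(O)$ is a C*-algebra), so that it suffices to treat a single generator $W=\int_{\bR^2}dy\,g(y)\,\am_y(W(f))$ of $\fWm_r(O)$, with $g\in C_c(\bR^2)$ and $f\in\cD(\bR)$. Since $\phi_\l$ is the normal dilation automorphism $\s_\l$ of $\cAm$, which commutes with the strong integral, I would first write
\[
\pac(W)_\l=\phi_\l(W)=\int_{\bR^2}dy\,g(y)\,W(\d_\l\tau^{(m)}_y f)=\int_{\bR^2}dy\,g(y)\,\uW(\tau^{(m)}_y f)_\l ,
\]
exhibiting $\pac(W)$ as the pointwise (strong) integral over $y$ of the functions $\uW(\tau^{(m)}_y f)$, each of which belongs to $\uAAmac$ by \eqref{eq:weylreg} applied to the time-zero test function $\tau^{(m)}_y f\in\cD(\bR)$. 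By finite propagation speed the family $\mathcal K:=\{\tau^{(m)}_y f:y\in\supp g\}$ consists of functions with support in a fixed bounded interval, and $y\mapsto\tau^{(m)}_y f$ is continuous, so $\mathcal K$ is compact in $\cD(\bR)$.

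For the approximant I would take, for $h\in C_c(\bR^2)$ close to a $\delta$-function,
\[
\uA_\l:=\int_{\bR^2}dy\,g(y)\,\big(\uam_h\uW(\tau^{(m)}_y f)\big)_\l .
\]
Each $\uam_h\uW(\tau^{(m)}_y f)$ lies in $\uAAm$, and the crucial point is that the $y$-integral does not destroy membership in the scaling algebra: using $\uam_z\uam_h=\uam_{h(\,\cdot\,-z)}$ one finds $\|\uam_z\uA-\uA\|\le\|g\|_1\,\|h(\,\cdot\,-z)-h\|_1\to0$ as $z\to0$ (and similarly in the Lorentz directions), the bound being uniform in $y$. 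Together with the localization of $\mathcal K$, this gives $\uA\in\uAAm(\hat O)$ for any $\hat O\Supset O$ once $\supp h$ is small enough. It then remains to estimate, dominating the integrand by $2|g(y)|$ and using reverse Fatou,
\[
\limsup_\k\|(\phi_{\l_\k}(W)-\uA_{\l_\k})\Om\|\le\int_{\bR^2}dy\,|g(y)|\,\limsup_\k\big\|\big[\uW(\tau^{(m)}_y f)-\uam_h\uW(\tau^{(m)}_y f)\big]_{\l_\k}\Om\big\| .
\]

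The heart of the matter, and the step I expect to be the main obstacle, is a version of \eqref{eq:weylreg} that is uniform over the compact orbit $\mathcal K$: I must show $\sup_{k\in\mathcal K}\limsup_\k\|[\uW(k)-\uam_h\uW(k)]_{\l_\k}\Om\|\to0$ as $h\to\delta$. Reducing the Weyl two-point function in the usual way, this quantity is governed by $\|k-\tau^{(\l m)}_x k\|_{\l m}$ for $x\in\supp h$, where one uses the scaling covariance $\tau^{(m)}_{\l x}\d_\l=\d_\l\tau^{(\l m)}_x$; it is essential that the rescaled dynamics here carries the \emph{vanishing} mass $\l m$, since this makes the dangerous zero-mode contribution $\widehat{\Re(k-\tau^{(\l m)}_x k)}(0)=(1-\cos(s\l m))\widehat{\Re k}(0)+\l m\sin(s\l m)\widehat{\Im k}(0)$ (with $s$ the time component of $x$) tend to $0$, so that no infrared divergence arises and the estimate is controlled by finitely many seminorms of $k$ through a dominated-convergence majorant of the type used in \eqref{eq:estimate}. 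As these seminorms are bounded on the compact set $\mathcal K$, the convergence is uniform — the positivity of the rescaled mass $\l m$ being exactly what keeps the zero mode under control in this case. Finally, since $W^*=\int_{\bR^2}dy\,\overline{g(y)}\,\am_y(W(-f))$ is again a generator of $\fWm_r(O)$, the adjoint half of the defining condition for $\uAAmac(O)$ follows by applying the same argument to $W^*$, completing the proof.
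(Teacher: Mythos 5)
Your proposal is, at its core, the same argument as the paper's: the reduction to generators of the form \eqref{eq:regweyl} via the C*-algebra property of $(\pac)^{-1}(\uAAmac(O))$ is exactly the paper's ``isometric morphism'' remark, and your approximant $\uA=\int_{\bR^2}dy\,g(y)\,\uam_h\uW(\tau^{(m)}_y f)$ coincides, by Fubini for strong integrals of compactly supported continuous integrands, with the paper's approximant $\uam_h\pac(W)$, whose membership in $\uAAm(O+\supp h)$ the paper gets for free from the smoothing device of Section~2 (your $L^1$-continuity computation just re-proves that device). The genuine difference is the order in which the $y$-integration and the limit $h\to\delta$ are taken. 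The paper fixes the translation $x$, pulls $\am_{\l x}$ under the $y$-integral, and invokes the \emph{per-function} estimate \cite[Eq.\ (4.13)]{BV2} together with dominated convergence in $y$, obtaining $\lim_{x\to 0}\sup_{\l\in(0,1)}\|[\phi_\l(W)-\am_{\l x}\phi_\l(W)]\Om\|=0$; no uniformity over the orbit $\mathcal K$ is ever needed. You instead push the $\limsup$ inside the $y$-integral first, which forces you to prove a version of \eqref{eq:weylreg} \emph{uniform} over $\mathcal K$ --- precisely the step you flag as the main obstacle. That uniform statement is in fact true, and your mechanism is the right one (the zero mode of $\Re(k-\tau^{(\l m)}_x k)$ is $O(\l)$, so the $\o_{\l m}^{-1/2}$ singularity is harmless); but your justification as written is too thin: boundedness of a dominated-convergence majorant on a compact set does not by itself make the convergence uniform. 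To close it you should derive an explicit modulus, e.g.\ the elementary bounds $(1-\cos u)^2\le u^2$ and $\sin^2 u\le\min(1,u^2)$ yield $\|k-\tau^{(\l m)}_x k\|^2_{\l m}\le C(|x|+|x|^2)P(k)$ uniformly in $\l\in(0,1)$, with $P(k)$ a finite sum of weighted $L^2$-norms of $\hat k$ that is bounded on $\mathcal K$ (and similarly for the symplectic phase). Alternatively --- the cheaper fix, which turns your argument into the paper's --- take a sequence $h_n\to\delta$ and apply dominated convergence in $y$ to $\int dy\,|g(y)|\limsup_\k\|[\uW(\tau^{(m)}_yf)-\uam_{h_n}\uW(\tau^{(m)}_yf)]_{\l_\k}\Om\|$, using only the pointwise statement \eqref{eq:weylreg} and the bound $2$ on the integrand; then no uniform estimate is needed at all. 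In short: your route is correct and completable, and if completed it yields slightly more (a quantitative, uniform \eqref{eq:weylreg}), but at the cost of re-deriving the Buchholz--Verch estimate that the paper simply cites.
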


\begin{proof}
Thanks to the fact the $\pac$ is an isometric morphism, it is sufficient to prove the statement for elements $W \in \fWm_r(O)$ of the form~\eqref{eq:regweyl}. Then, for such a $W$ one has
\[\begin{split}
\sup_{\l \in (0,1)} \big\|\big[\phi_\l(W)-\am_{\l x}&\phi_\l(W)\big]\Om\big\|\\
&=\sup_{\l \in (0,1)}\left\| \int_{\bR^2}dy\,g(y)\big[\phi_\l(W(\tau_y^{(m)}f))-\am_{\l x}\phi_\l(W(\tau^{(m)}_yf))\big]\Om\right\| \\
&\leq \int_{\bR^2}dy\,|g(y)|\sup_{\l \in (0,1)}\big\|\big[\phi_\l(W(\tau_y^{(m)}f))-\am_{\l x}\phi_\l(W(\tau^{(m)}_yf))\big]\Om\big\|,
\end{split}\]
and therefore~\cite[Eq. (4.13)]{BV2}, together with the dominated convergence theorem, shows that 
\[
\lim_{x \to 0}\sup_{\l \in (0,1)} \big\|\big[\phi_\l(W)-\am_{\l x}\phi_\l(W)\big]\Om\big\| = 0.
\]
This, in turn, implies that $\limsup_{\l \to 0} \| [\phi_\l(W) - \ua_h (\pac(W))_\l]\Om\|$ can be made arbitrarily small for $h \in C_c(\bR^2)$ sufficiently close to a delta function. In a similar way, $\limsup_{\l \to 0} \| [\phi_\l(W) - \ua_h (\pac(W))_\l]^*\Om\|$ can be made small as well and therefore $\pac(W) \in \uAAmac(O)$, as desired.
\end{proof}

\begin{Proposition}\label{prop:notiso}
Let $W = \int_{\bR^2} dy \,g(y) \am_y(W(f))$ be such that $\int_\bR \Re f = 0$, $\int_\bR \Im f \neq 0$, and $g \in C_c(\bR^2)$ non-negative and not identically zero. Then $W \neq 0$ and $\poiac\pac(W) = 0$. 
\end{Proposition}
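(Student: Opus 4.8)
The plan is to treat the two claims separately: $W\neq 0$ is a one-line positivity estimate, while $\poiac\pac(W)=0$ rests on an infrared divergence that is activated precisely by the hypotheses $\int_\bR\Re f=0$ and $\int_\bR\Im f\neq 0$. For the first claim, since $g\ge 0$ is continuous and not identically zero, and $\o^{(m)}(\am_y(W(f)))=e^{-\frac12\|\tau^{(m)}_yf\|_m^2}>0$ for every $y$, we get
\[
\langle\Om,W\Om\rangle=\int_{\bR^2}dy\,g(y)\,e^{-\frac12\|\tau^{(m)}_yf\|_m^2}>0,
\]
so $W\neq 0$.

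For the second claim, note first that $W\in\fWm_r(O)$ for some bounded double cone $O$, so by Prop.~\ref{prop:philW} we have $\pac(W)\in\uAAmac(O)$ and hence $T:=\poiac\pac(W)\in\Aoim(O)$ is localized in a bounded region. The core of the argument is to show that $T\Ooi=0$. By construction of $\poiac$ on the vacuum one has $\|T\Ooi\|\le\limsup_\k\|\phi_{\l_\k}(W)\Om\|$, so it suffices to prove $\lim_{\l\to 0}\|\phi_\l(W)\Om\|=0$. Since $\phi_\l$ is unitarily implemented, $\o^{(m)}\circ\s_\l=\o^{(\l m)}$, and $\phi_\l$ is an automorphism, a computation with the Weyl relations yields
\[
\|\phi_\l(W)\Om\|^2=\int_{\bR^2}\!\int_{\bR^2}dy\,dy'\,g(y)g(y')\,e^{\frac{i}{2}\s(\tau^{(m)}_{y'}f,\tau^{(m)}_yf)}\,e^{-\frac12\|\tau^{(m)}_yf-\tau^{(m)}_{y'}f\|_{\l m}^2}.
\]

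The decisive point is the behaviour of the exponent as $\l\to 0$. The norm $\|h\|_{\l m}^2$ stays bounded in this limit exactly when $\int_\bR\Re h=0$; otherwise it diverges, the divergence coming from the term $\o_{\l m}(\bdp)^{-1}|\widehat{\Re h}(\bdp)|^2$ which develops a non-integrable singularity at $\bdp=0$ (the cross term and the $\o_{\l m}(\bdp)|\widehat{\Im h}(\bdp)|^2$ term remaining bounded). Evaluating \eqref{eq:timeev} at $\bdp=0$ with $\int_\bR\Re f=0$ and $b:=\int_\bR\Im f\neq 0$ gives $\int_\bR\Re\tau^{(m)}_{(t,\bdx)}f=-mb\sin(tm)$, whence for $y=(t,\bdx)$, $y'=(t',\bdx')$,
\[
\int_\bR\Re\big(\tau^{(m)}_yf-\tau^{(m)}_{y'}f\big)=-mb\,[\sin(tm)-\sin(t'm)],
\]
which is nonzero off the measure-zero set $\{\sin(tm)=\sin(t'm)\}$. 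Hence the integrand above tends to $0$ for a.e.\ $(y,y')$; being dominated by $|g(y)g(y')|\in L^1$, dominated convergence gives $\|\phi_\l(W)\Om\|^2\to 0$, and therefore $T\Ooi=0$.

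Finally, since $T\in\Aoim(O)$ with $O$ a bounded double cone and the scaling limit net is translation covariant with the spectrum condition and cyclic vector $\Ooi$, the Reeh--Schlieder theorem makes $\Ooi$ separating for $\Aoim(O)$; together with $T\Ooi=0$ this yields $T=0$, as claimed. The main obstacle is the middle step: one must recognize that the relevant seminorm is $\|\cdot\|_{\l m}$ and that time translation generically destroys the vanishing of $\int_\bR\Re(\cdot)$, so that the zero-mode (infrared) singularity of the massless limit forces the vacuum matrix element to collapse.
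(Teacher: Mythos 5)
Your proof is correct and follows essentially the same route as the paper's: the same positivity computation giving $W \neq 0$, the same reduction of $\|\poiac\pac(W)\Ooi\|$ to the vanishing of $\|\phi_\l(W)\Om\|^2$ written as a double integral with exponent $\|(\tau^{(m)}_y-\tau^{(m)}_{y'})f\|^2_{\l m}$, the same zero-mode identity $\int_\bR \Re(\tau^{(m)}_y f-\tau^{(m)}_{y'}f)=\pm mb\,[\sin(tm)-\sin(t'm)]$ forcing almost-everywhere divergence of that norm as $\l \to 0$, dominated convergence, and the separating property of $\Ooi$ for local algebras. The only cosmetic difference is that you prove the full limit $\l \to 0$ (and invoke Reeh--Schlieder explicitly), whereas the paper extracts a sequence $(\l_n)$ from the defining net, which changes nothing of substance.
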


\begin{proof}
One has that
\[
\langle \Om, W\Om\rangle = \int_{\bR^2}dx\,g(x) e^{-\frac 1 2 \| \tau_x^{(m)} f\|_m^2}
\]
is strictly positive, and therefore $W \neq 0$. We can now  choose a sequence $(\l_n)_{n \in \bN} \subset \bR_+$ such that
\begin{equation}\label{eq:normzero}\begin{split}
\|\poiac\pac(W)\Ooi\|^2 &= \lim_{n \to +\infty} \| \phi_{\l_n}(W)\Om\|^2\\
&= \lim_{n \to +\infty} \left\| \int_{\bR^2} dx\, g(x) W(\d_{\l_n}\tau_x^{(m)}f)\Om\right\|^2 \\
&= \lim_{n \to +\infty} \int_{\bR^4} dxdy\, g(x)g(y) e^{\frac i 2 \sigma(\tau_x^{(m)}f,\tau_y^{(m)}f)}e^{-\frac 1 2 \| (\tau_y^{(m)}-\tau_x^{(m)})f\|^2_{\lambda_n m}},
\end{split}\end{equation}
where in the last equality we used the dilation invariance of the symplectic form. We now observe that, using the notation $x = (t,\bdx)$, $y = (s,\bdy)$, we have
\[\begin{split}
\int_{\bR} \Re(\tau_y^{(m)} - \tau_x^{(m)})f &= \int_{\bR} \Re(\tau_\bdy\tau_s^{(m)} - \tau_\bdx\tau_t^{(m)})f = \int_{\bR} \Re(\tau_s^{(m)} - \tau_t^{(m)})f\\
&= \big[\Re(\tau_s^{(m)} - \tau_t^{(m)})f\big]\widehat{\;}(0) = m[\sin(tm)-\sin(sm)]\widehat{\Im f}(0),
\end{split}\]
where we used the translation invariance of the integral in the second equality and~\eqref{eq:timeev} in the fourth one. The above quantity vanishes only if $s = t +\frac{2\pi}{m}\bZ$ or $s = -t +\frac{\pi}{m}(2\bZ+1)$, and therefore on a set of measure zero in $\bR^2$. Recalling then that, for $h \in \cD(\bR)$ such that $\int_\bR \Re h \neq 0$,
\[
\lim_{m \to 0} \|h\|_{m}^2 = \lim_{m\to 0} \int_{\bR} d\bdp\,\left|\frac{\widehat{\Re h}(\bdp)}{\sqrt{\omega_m(\bdp)}}+i\sqrt{\omega_m(\bdp)}\widehat{\Im h}(\bdp)\right|^2 = +\infty
\]
we see that the limit, as $n \to +\infty$, of the integrand in the last member of~\eqref{eq:normzero} vanishes almost everywhere, and therefore, by dominated convergence,
\[
\|\poiac\pac(W)\Ooi\|^2 = 0.
\]
The conclusion is then obtained by the separating property of $\Ooi$ for local algebras.
\end{proof}

We put on record a pair of immediate consequences of the above result.

\begin{Corollary}\label{cor:notsimple}
\renewcommand{\theenumi}{(\roman{enumi})}
\renewcommand{\labelenumi}{\theenumi}
\begin{enumerate} 
The following statements hold:
\item the quasi-local C*-algebras $\fWm_r$, $\cAm_\phi$ are not simple;
\item $\cAm_\phi$ is a proper subalgebra of $\cAm$.
\end{enumerate}
\end{Corollary}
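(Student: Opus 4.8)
The plan is to deduce both statements from the single $*$-homomorphism $\Psi := \poiac\pac$. By the very definition of $\cAm_\phi$ the isometric morphism $\pac$ maps $\cAm_\phi$ into $\uAAmac$, on which $\poiac$ is a morphism; hence $\Psi$ restricts to a genuine morphism of C$^*$-algebras $\cAm_\phi \to \Aoim$, and, by Proposition~\ref{prop:philW}, also to a morphism $\fWm_r \to \Aoim$. Proposition~\ref{prop:notiso} exhibits a \emph{nonzero} $W \in \fWm_r \subset \cAm_\phi$ with $\Psi(W) = 0$, so $\ker\Psi$ is a nonzero closed two-sided ideal in each of these algebras. Both assertions then reduce to locating $\ker\Psi$ precisely.

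For the non-simplicity in (i) it remains to check that $\Psi$ does not vanish identically, so that $\ker\Psi$ is a \emph{proper} ideal. On $\cAm_\phi$ this is immediate: by~\eqref{eq:weylreg} every $W(f)$ lies in $\cAm_\phi$, and $\Psi(W(f)) = \poiac(\uW(f)) = \Woi(f)$ is a nonzero unitary by~\eqref{eq:Woi}. For $\fWm_r$ one cannot argue this way, since a bare Weyl operator need not belong to the norm-closed algebra generated by the smoothed-out ones; instead I would test $\Psi$ on an honest generator $W' = \int_{\bR^2} dy\,g(y)\am_y(W(f))$, with $g \geq 0$ not identically zero and, crucially, $\int_\bR f = 0$. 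Pairing with $\Ooi$ reproduces the scaling-limit state $\uooi$, and using that $\o^{(m)}\circ\phi_\l$ is locally normal to commute it with the weak integral defining $W'$, together with the scaling identity $\|\d_\l h\|_m = \|h\|_{\l m}$, one computes
\[
\langle\Ooi,\Psi(W')\Ooi\rangle = \lim_\k \o^{(m)}(\phi_{\l_\k}(W')) = \lim_\k \int_{\bR^2} dy\,g(y)\,e^{-\frac12\|\tau^{(m)}_y f\|_{\l_\k m}^2} = \int_{\bR^2} dy\,g(y)\,e^{-\frac12\|\tau^{(m)}_y f\|_0^2} > 0,
\]
the last equality being justified by dominated convergence once one notes, via~\eqref{eq:timeev} at $\bdp = 0$, that $\int_\bR f = 0$ forces $\int_\bR \Re\tau^{(m)}_y f = 0$ and hence keeps $\|\tau^{(m)}_y f\|_0$ finite. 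Thus $\Psi(W') \neq 0$ and $\ker\Psi$ is proper in $\fWm_r$ as well, giving (i).

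For (ii) I argue by contradiction. Were $\cAm_\phi = \cAm$, then $\Psi$ would be a morphism defined on the whole of the \emph{simple} C$^*$-algebra $\cAm$; its kernel, containing the nonzero $W$ of Proposition~\ref{prop:notiso}, would then have to be all of $\cAm$, i.e.\ $\Psi \equiv 0$. This contradicts $\Psi(W(f)) = \Woi(f) \neq 0$ from the previous paragraph, so $\cAm_\phi \subsetneq \cAm$.

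The main obstacle, and the only point requiring real computation, is showing $\Psi \not\equiv 0$ on $\fWm_r$: one must produce a \emph{genuine} smoothed-out generator with nonzero image, and the delicate feature is that the naive attempt with $\int_\bR \Im f \neq 0$ is annihilated by the infrared divergence of $\|\cdot\|_0$ in exactly the manner of Proposition~\ref{prop:notiso}. The constraint $\int_\bR f = 0$ is what moves the test element into the ``neutral'' regime where the massless seminorm stays finite, turning the limiting integral into a strictly positive vacuum expectation rather than zero; everything else is formal bookkeeping with the ideal $\ker\Psi$ and the morphism $\Psi$.
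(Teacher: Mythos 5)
Your proof is correct and is essentially the paper's own (implicit) argument: the paper records the corollary as an immediate consequence of Prop.~\ref{prop:notiso}, the point being exactly that $\ker(\poiac\pac)$ is a nonzero closed two-sided ideal of $\fWm_r$ and $\cAm_\phi$ which is proper because $\poiac\pac$ is nonzero, and that (ii) then follows from the simplicity of $\cAm$. The only simplification you missed is that the properness on $\fWm_r$ comes for free: choosing $f=0$ and $\int_{\bR^2} g = 1$ in~\eqref{eq:regweyl} shows $\Id \in \fWm_r(O)$ with $\poiac\pac(\Id)=\Id \neq 0$, so your (correct) dominated-convergence computation with a neutral $f$ is not actually needed.
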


The second statement in the corollary makes it plain that $\pac$ does not map $\cAm$ into $\uAAmac$.
However, as shown in the next proposition, the map $\poiac\pac : \cAm_\phi \to \Aoim$ acts, on suitable elements, in a way that closely resembles the isomorphism between the free scalar field net in $d \geq 3$ and its scaling limit built in~\cite{BV2}.

\begin{Proposition}\label{prop:Wupzero}
Let $f \in \cD(\bR)$, $h \in C_c(\bR^2)$. 
There holds:
\renewcommand{\theenumi}{(\roman{enumi})}
\renewcommand{\labelenumi}{\theenumi}
\begin{enumerate}
\item ${\displaystyle \poi(\uam_h\uW(f)) = \int_{\bR^2} dx\,h(x) \aoim_x(\Woi(f))};$
\item there exists
\begin{equation}\label{eq:Wupzero}
\lim_{\l \to 0} \phi^{-1}_\l(\uam_h\uW(f)_\l) = \int_{\bR^2}dx\,h(x) W(\tau^{(0)}_xf) =:W^{(0)}_{h,f}\in \cAm
\end{equation}
in the strong operator topology, where the integral in the r.h.s.\ is defined in the strong sense;
\item $\pac(W^{(0)}_{h,f}) \in \uAAmac$ and $\poiac(\pac(W^{(0)}_{h,f})) = \poi(\uam_h\uW(f))$;
\end{enumerate}
\end{Proposition}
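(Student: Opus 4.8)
The plan is to establish the three assertions in turn, obtaining (iii) by comparing $\pac(W^{(0)}_{h,f})$ with the scaling algebra element $\uam_h\uW(f)$. Two facts are used throughout: the dilation covariance $\tau^{(m)}_{\l x}\circ\d_\l=\d_\l\circ\tau^{(\l m)}_x$ on $\cD(\bR)$, which follows from the definitions of $\tau^{(m)}$ and $\d_\l$ in~\cite{BV2} and is the test-function counterpart of $\o^{(m)}\circ\s_\l=\o^{(\l m)}$; and the identity $\|\d_\l g\|_m=\|g\|_{\l m}$, its vacuum counterpart. For (i), I would first note that $\uam_h\uW(f)$ lies in $\uAA$, being the smoothing of the bounded function $\l\mapsto\uW(f)_\l=W(\d_\l f)\in\cAm(\l O)$, with components $(\uam_h\uW(f))_\l=\int_{\bR^2}dx\,h(x)\,\am_{\l x}(\uW(f)_\l)$. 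Since the scaling limit action is characterised on $\uAAac$ by $\poiac\circ\uam_x=\aoim_x\circ\poiac$, and $\poiac(\uW(f))=\Woi(f)$ by~\eqref{eq:Woi}, assertion (i) reduces to interchanging $\poiac$ with the strong integral over $x$; this interchange, yielding the strong integral $\int dx\,h(x)\,\aoim_x(\Woi(f))$, is the content of the continuity statement of~\cite[Lemma 4.5]{CM2}.

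For (ii), the covariance relation gives, at each fixed $\l>0$,
\[
\phi^{-1}_\l\big((\uam_h\uW(f))_\l\big)=\int_{\bR^2}dx\,h(x)\,W\big(\d_\l^{-1}\tau^{(m)}_{\l x}\d_\l f\big)=\int_{\bR^2}dx\,h(x)\,W\big(\tau^{(\l m)}_x f\big).
\]
Because $m>0$ the one-particle norm $\|\cdot\|_m$ has no infrared singularity, and a dominated convergence argument — with $\hat f$ Schwartz, the multipliers of~\eqref{eq:timeev} bounded, uniformly for $t$ in the compact time-projection of $\supp h$ — gives $\|\tau^{(\l m)}_x f-\tau^{(0)}_x f\|_m\to0$ as $\l\to0$, uniformly in $x\in\supp h$. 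Strong continuity of $g\mapsto\pi^{(m)}(W(g))$ in the $\|\cdot\|_m$-topology then yields the claimed strong limit $W^{(0)}_{h,f}=\int_{\bR^2}dx\,h(x)\,W(\tau^{(0)}_x f)\in\cAm$, the integral being strongly convergent since $x\mapsto\tau^{(0)}_x f$ is $\|\cdot\|_m$-continuous.

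For (iii) I would use $\uam_h\uW(f)\in\uAA$ as the approximant witnessing membership of $\pac(W^{(0)}_{h,f})$ in $\uAAmac$. Applying the covariance relation to both terms,
\[
\phi_\l(W^{(0)}_{h,f})-(\uam_h\uW(f))_\l=\int_{\bR^2}dx\,h(x)\big[W(\d_\l\tau^{(0)}_x f)-W(\d_\l\tau^{(\l m)}_x f)\big],
\]
and $\|[W(\d_\l\tau^{(0)}_x f)-W(\d_\l\tau^{(\l m)}_x f)]\Om\|^2$ is controlled by the norm $\|\d_\l(\tau^{(0)}_x-\tau^{(\l m)}_x)f\|_m^2=\|(\tau^{(0)}_x-\tau^{(\l m)}_x)f\|_{\l m}^2$ and the symplectic phase $\sigma(\d_\l\tau^{(0)}_x f,\d_\l\tau^{(\l m)}_x f)$. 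The norm term tends to $0$ even without assuming $\int_\bR f=0$: the only singular weight $\o_{\l m}^{-1}$ produces at most the logarithmic factor $\int_{|\bdp|<\eps}\o_{\l m}(\bdp)^{-1}d\bdp=O(\log(\l m)^{-1})$, beaten by the at-least-linear vanishing of $\widehat{\Re(\tau^{(0)}_x-\tau^{(\l m)}_x)f}$ near the origin (with $\widehat{\Re(\tau^{(0)}_x-\tau^{(\l m)}_x)f}(0)=O((\l m)^2)$), in the spirit of~\eqref{eq:normtranslation}--\eqref{eq:estimate}. The phase term tends to $0$ by dilation invariance of $\sigma$, which reduces it to $\sigma(\tau^{(0)}_x f,(\tau^{(\l m)}_x-\tau^{(0)}_x)f)$, an $L^2$-pairing of a fixed vector against a vanishing one, so that the infrared divergence of $\|\tau^{(0)}_x f\|_{\l m}$ never enters. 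Hence $\|[\phi_\l(W^{(0)}_{h,f})-(\uam_h\uW(f))_\l]\Om\|\to0$ uniformly in $x\in\supp h$, and likewise for adjoints, giving $\pac(W^{(0)}_{h,f})\in\uAAmac$. For the final identity I set $\uB:=\pac(W^{(0)}_{h,f})-\uam_h\uW(f)\in\uAAmac$ and show $\poiac(\uB)=0$: by~\cite[Lemma 4.5]{CM2}, $\poiac(\uB)=\slim_{g\to\delta}\poi(\uam_g\uB)$, while for each $g$ the element $\poi(\uam_g\uB)$ is localised in a bounded region and satisfies $\|\poi(\uam_g\uB)\Ooi\|=\lim_\k\|(\uam_g\uB)_{\l_\k}\Om\|\le\|g\|_{L^1}\lim_\k\|\uB_{\l_\k}\Om\|=0$; since $\Ooi$ is separating for local algebras (a consequence of its cyclicity for wedge algebras), this forces $\poi(\uam_g\uB)=0$, whence $\poiac(\uB)=0$ and $\poiac(\pac(W^{(0)}_{h,f}))=\poi(\uam_h\uW(f))$.

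The main obstacle is the estimate in (iii): unlike (ii), it is governed by the infrared-sensitive norm $\|\cdot\|_{\l m}$ with $\l m\to0$, so for general $f$ one must verify that the quadratic vanishing of the relevant zero mode survives the singular weight $\o_{\l m}^{-1}$, and — crucially — that the symplectic phase stays bounded even though $\|\tau^{(0)}_x f\|_{\l m}$ diverges; it is precisely the dilation invariance of $\sigma$ that rescues the argument here.
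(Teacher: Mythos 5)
Your treatment of (ii) and (iii) follows essentially the same route as the paper's proof: the covariance relation $\d_\l \tau^{(\l m)}_x = \tau^{(m)}_{\l x}\d_\l$, reduction to one-particle norm estimates, and dominated convergence after splitting the momentum integral. In (iii) you are in fact more explicit than the paper on one delicate point: the paper passes from $\lim_{\l\to0}\|\d_\l\tau^{(0)}_x f-\d_\l\tau^{(\l m)}_x f\|_m=0$ to $\lim_{\l\to0}\|[W(\d_\l\tau^{(0)}_x f)-W(\d_\l\tau^{(\l m)}_x f)]\Om\|=0$ without commenting on the symplectic phase, which is the only place where the logarithmic divergence of $\|f\|_{\l m}$ (when $\widehat{\Re f}(0)\neq 0$) could intervene; your control of that phase via dilation invariance of $\sigma$ and the mass-independent $L^2$ pairing closes this correctly. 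Likewise your derivation of the final identity in (iii) --- setting $\uB=\pac(W^{(0)}_{h,f})-\uam_h\uW(f)$, showing $\|\uB_{\l}\Om\|\to0$, and concluding $\poiac(\uB)=0$ from the $g\to\delta$ lemma, translation invariance of $\Om$, and the separating property of $\Ooi$ --- is a valid variant of the paper's appeal to \cite[Lemma 4.4]{CM2}.

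The gap is in (i). You reduce it to ``interchanging $\poiac$ with the strong integral over $x$'' and then assert that this interchange \emph{is} the content of \cite[Lemma 4.5]{CM2}. It is not: that lemma gives $\poiac(\uA)=\slim_{g\to\delta}\poi(\uam_g\uA)$, whereas the interchange is precisely what part (i) asserts, and it is not automatic. The point is that $\uW(f)$ lies in $\uAAmac$ but \emph{not} in $\uAAm$ (Weyl operators do not translate norm continuously), so $x\mapsto\uam_x\uW(f)$ is not norm continuous, $\uam_h\uW(f)$ is \emph{not} a Bochner integral in the scaling algebra but only a $\l$-pointwise strong integral, and a representation cannot be pulled through such an integral without further argument. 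The paper supplies the missing idea by a double-smoothing trick: smear with a second function $g\in C_c(\bR^2)$; then $y\mapsto g(y)\uam_y\uam_h\uW(f)$ \emph{is} norm continuous (because $\uam_h\uW(f)\in\uAAm$), hence Bochner integrable, so $\poi$ pulls through this integral; commutativity of convolution gives $\uam_g\uam_h\uW(f)=\uam_{g*h}\uW(f)=\uam_h\uam_g\uW(f)$, whence $\int_{\bR^2}dy\,g(y)\aoim_y\big(\poi(\uam_h\uW(f))\big)=\int_{\bR^2}dx\,h(x)\aoim_x\big(\poi(\uam_g\uW(f))\big)$; finally one lets $g\to\delta$ strongly on both sides, using $\poi(\uam_g\uW(f))\to\Woi(f)$ --- which is where the lemma you cite actually enters. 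All your ingredients are correct, but without this step the proof of (i) is incomplete.
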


\begin{proof}
(i) Let $g \in C_c(\bR^2)$. One has:
\[\begin{split}
\int_{\bR^2} dy\,g(y)\aoim_y\big(\poi(\uam_h\uW(f))\big) &= \int_{\bR^2}dy\,g(y)\poi\big(\uam_y\uam_h\uW(f)\big)\\
&=\poi\bigg(\int_{\bR^2}dy\,g(y) \uam_y\uam_h\uW(f)\bigg)\\
&= \poi\big(\uam_{g*h}\uW(f)\big)\\
&= \int_{\bR^2}dx\,h(x)\aoim_x\big(\poi(\uam_g\uW(f))\big),
\end{split}\]
where we used in the second equality the fact that the C*-algebra-valued function 
$$y \mapsto g(y) \uam_y\uam_h\uW(f),$$ 
being continuous and compactly supported, is Bochner-integrable, and in the fourth one the commutativity of the convolution product. The statement in then obtained by taking, in the strong operator topology, the limit $g \to \delta$ on both sides, and by recalling that in such a limit $\poi(\uam_g\uW(f))\to\Woi(f)$.

(ii) We postpone for a moment the proof that the integral in the r.h.s.\ is well defined.
Using the commutation relations between dilations and translations on $\cD(\bR)$, $\d_\l \tau^{(\l m)}_x = \tau^{(m)}_{\l x}\d_\l$, there holds:
\[
\bigg\|\bigg[\phi^{-1}_\l(\uam_h\uW(f)_\l)-\int_{\bR^2}dx\,h(x) W(\tau^{(0)}_xf)\bigg]\Om\bigg\| \leq \int_{\bR^2}dx\,|h(x)| \big\|\big[W(\tau^{(\l m)}_xf)-W(\tau^{(0)}_xf)\big]\Om\big\|.
\]
Moreover, if $x = (\tau, \bdx)$, since $\| \tau_\bdx g \|_m = \|g\|_m$ for all $m \geq 0$, 
\begin{equation*}
\| \tau^{(\l m)}_x f - \tau_x^{(0)}f \|_m^2 = \|\tau_t^{(\l m)}f-\tau^{(0)}_tf\|^2_m
\end{equation*}
can be expressed by an integral as in Eq.~\eqref{eq:normtranslation}, with the change $|\bdp| \to \o_m(\bdp)$. Therefore the same argument used there, with the same change in~\eqref{eq:estimate}, guaranteees that the dominated convergence theorem is applicable, thus yielding $\lim_{\l \to 0} \| \tau^{(\l m)}_x f - \tau_x^{(0)}f \|_m^2 =0$. This in turn implies 
\[
\lim_{\l \to 0} \big\|\big[W(\tau^{(\l m)}_xf)-W(\tau^{(0)}_xf)\big]\Om\big\| = 0,
\]
and a further application of the dominated convergence theorem, together with the fact that the vacuum is separating for the local algebras, gives the statement. Finally, we notice that in a similar way one can show that $\|\tau^{(0)}_{t'} f-\tau^{(0)}_{t}f\|_m \to 0$ as $t' \to t$, which, together with the fact that space translations are mass independent, implies that the function $x\in \bR^2 \mapsto W(\tau^{(0)}_xf)$ is strongly continuous, and therefore the integral on the right hand side of~\eqref{eq:Wupzero} is well defined in the strong topology.

(iii) Similarly to point (ii) above, one has, for all $\l > 0$,
\[
\big\|\big[\phi_\l(W^{(0)}_{h,f})-(\uam_h\uW(f))_\l\big]\Om\big\| \leq \int_{\bR^2}dx\,|h(x)|\big\|\big[W(\d_\l\tau^{(0)}_xf)-W(\d_\l\tau_{x}^{(\l m)}f)\big]\Om\big\|,
\]
and $\lim_{\l \to 0}\| \d_\l\tau^{(0)}_xf-\d_\l\tau_{x}^{(\l m)}f\|_m^2 = \lim_{\l \to 0} \| \tau^{(0)}_t f - \tau^{(\l m)}_t f\|_{\l m}^2 = 0$. This last statement follows by observing that $\| \tau^{(0)}_t f - \tau^{(\l m)}_t f\|_{\l m}^2$ is expressed again by an integral obtained from the one in~\eqref{eq:normtranslation} by the replacement $|\bdp| \to \o_{\l m}(\bdp)$. Moreover, this integral can be split as the sum of an integral over the region $|\bdp| \leq 1$ and one over the region $|\bdp| > 1$. The latter integral is seen to converge to zero, as $\lambda \to 0$,
since the bound \eqref{eq:estimate}
is integrable for $|\bdp| > 1$. To treat the possible divergence, for $\l \to 0$, of the integral over $|\bdp| \leq 1$, we observe that, for $\l \in (0,1)$, there hold the elementary bounds
\begin{align*}
\frac{|\cos(t\o_{\l m}(\bdp))-\cos(t|\bdp|)|}{\o_{\l m}(\bdp)^{1/2}} &\leq |t| \frac{\o_{\l m}(\bdp)-|\bdp|}{\o_{\l m}(\bdp)^{1/2}} \leq |t| \, \o_m(\bdp)^{1/2},\\
\frac{| \o_{\l m}(\bdp)\sin(t\o_{\l m}(\bdp))-|\bdp|\sin(t|\bdp|)|}{\o_{\l m}(\bdp)^{1/2}} &\leq   \frac{\o_{\l m}(\bdp)-|\bdp|}{\o_{\l m}(\bdp)^{1/2}} |\sin(t\o_{\l m}(\bdp))|\\
&+|\bdp|\frac{ |\sin(t\o_{\l m}(\bdp))-\sin(t|\bdp|)|}{\o_{\l m}(\bdp)^{1/2}} \leq 3 \, \o_m(\bdp)^{1/2},
\end{align*}
which show that the integrand is bounded by an integrable function of $\bdp$ uniformly for $\l \in (0,1)$, and therefore one can apply the dominated convergence theorem once more.
Therefore, one concludes that
\[
\lim_{\l \to 0}\big\|\big[\phi_\l(W^{(0)}_{h,f})-(\uam_h\uW(f))_\l\big]\Om\big\| = 0,
\]
which, together with $\phi_\l(W^{(0)}_{h,f})^* = \phi_\l(W^{(0)}_{\bar h,-f})$, implies that $\pac(W^{(0)}_{h,f}) \in \uAAmac$ and,
using~\cite[Lemma~4.4]{CM2} and the fact that $\Ooi$ is separating for the local algebras $\Aoim(O)$,
also that $\poiac(\pac(W^{(0)}_{h,f})) = \poi(\uam_h\uW(f))$.
\end{proof}

In particular, one can deduce from the proof of point (iii) above that $\aoim_x(\Woi(f)) =\Woi(\tau^{(0)}_x f)$ for all $x \in \bR^2$ and $f \in \cD(\bR)$. Moreover, by similar arguments, using the expressions for Lorentz transformation in~\cite[Eqs.\ (7.12)-(7.13)]{BDM2}, it is also possible to show that 
 $\lim_{\l \to 0} \| \tau^{(0)}_\L f - \tau^{(\l m)}_\L f\|_{\l m}^2 = 0$, which entails
$\aoim_\L(\Woi(f)) =\Woi(\tau^{(0)}_\L f)$ for all $\L$ in the Lorentz group. Thanks to this observation, we see that the net $\cAz$ is isomorphic to the subnet of $\Aoim$ generated by the Weyl operators $\Woi(f)$, $f \in \cD(\bR)$, with an isomorphism mapping $\pi^{(0)}(W(f))$ to $\Woi(f)$ which intertwines the respective Poincar\'e group actions. In particular, we can also think of $\cC$ as a covariant subnet of $\Aoim$. Upon this identification, it follows from statements (i) and (iii) of Prop.~\ref{prop:Wupzero} that
\begin{equation}\label{eq:WhfinC}
\int_\bR \Re f = 0 \quad \Rightarrow \quad \poiac(\pac(W^{(0)}_{h,f})) = \int_{\bR^2} dx\,h(x) \aoim_x(\Woi(f)) \in \cC.
\end{equation}

According to Thm.~\ref{thm:asympmorCm}
and the results in this section, the relations among the various subnets of $\cAm$ introduced so far can be summarized as
$$
\begin{array}{ccccc}
\fWm_r \cup \fWz_r & \subset & \cAm_\phi & \subsetneq & \cAm \\  
&& \rotatebox{90}{$\subsetneq$} &  & \\
&& \cCm&&
\end{array} 
$$
where $\fWz_r$ is the net of $C^*$-algebras generated by the operators  in \eqref{eq:Wupzero}.

For reference's sake, we note that if $W \in \fWm_r$ is an element of the form~\eqref{eq:regweyl}, then
\[
\phi_\l(W) = \int_{\bR^2}dy\,g(y) \a^{(\l^{-1}m)}_{\l y}(W(\d_\l f)).
\]

In the sequel, we focus on the net of $C^*$-algebras $\cAm_\phi$. By Prop.~\ref{prop:philW}, $\cAm_\phi$ is strongly locally dense in $\cAm$.
As above, we consider 
\[\rho_\lambda(B) = \rho(\l) \phi_\l (B) = \lim_n \uW(iu_n)^*_\l \phi_\l(B) \uW(iu_n)_\l, \qquad \l >0,\, B \in \cAm_\phi,
\]
where the limit exists in the norm topology, and $\rho_\l$ is a morphism from $\cAm_\phi$ into $\cAm$.

\begin{Proposition}
The following statements hold:
\renewcommand{\theenumi}{(\roman{enumi})}
\renewcommand{\labelenumi}{\theenumi}
\begin{enumerate}
\item  for $B \in \cAm_\phi$, one has $\rac(B) \in \uAAac$ and
\begin{equation}\label{eq:rhoqaphi}
\poiac(\rac(B)) = \r_{q,\iota}(\poiac\pac(B)) \in \Aoim;
\end{equation}
\item the map $B \in \cAm_\phi \mapsto \rac(B) \in \uAAac$ is norm continuous;
\item $\poiac\left(\rac\left(\bigcup_O \cAm_\phi(O)\right)\right) \subset \bigcup_O \Aoim(O)$.
\end{enumerate}
\end{Proposition}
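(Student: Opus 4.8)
The plan is to reduce all three statements to the explicit form taken by $\rho_\l$ once the limit defining $\rho(\l)$ in~\eqref{eq:rl} has stabilized, the crucial point being that this stabilization can be made \emph{uniform in} $\l$. Fix first a local element $B \in \cAm_\phi(O)$, with $O$ an upright double cone, and recall that $\phi_\l(B) \in \cAm(\l O)$, while for $m \geq n$ the operator $\uW(iu_n)_\l\uW(iu_m)^*_\l$ is, up to a phase, the rescaled Weyl operator $W(\d_\l i(u_n-u_m))$, localized to the right of $\bdx = \l na$. Since the localization region of $\phi_\l(B)$ and the threshold $\l na$ scale by the same factor $\l$, the geometric condition ensuring that these difference operators are spacelike to $\l O$ — namely that $na$ exceed the spatial extent of $O$ — does not depend on $\l$. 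Hence there is an index $N = N(O)$ such that, for all $n \geq N$ and all $\l > 0$ at once,
\[
\rho_\l(B) = \uW(iu_N)^*_\l\,\phi_\l(B)\,\uW(iu_N)_\l,
\]
so that, as elements of $\ell^\infty(\bR_+,B(\cH))$,
\[
\rac(B) = \uW(iu_N)^*\,\pac(B)\,\uW(iu_N).
\]

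From this identity statement (i) follows for local $B$. Indeed $\pac(B) \in \uAAmac(O)$ by definition of $\cAm_\phi$, the unitaries $\uW(iu_N)^{\pm1}$ belong to $\uAAmac$ by~\eqref{eq:weylreg}, and $\uAAmac$ is a C*-algebra; therefore $\rac(B) \in \uAAmac$. Applying the morphism $\poiac$ and using~\eqref{eq:Woi} gives
\[
\poiac(\rac(B)) = \Woi(iu_N)^*\,\poiac(\pac(B))\,\Woi(iu_N),
\]
which, since $\Woi(iu_N) = \Vqn$ for $n=N$, is exactly the $N$-th term of the sequence in~\eqref{eq:rqi}. As $\poiac(\pac(B))$ lies in a local algebra $\Aoim(\hat O)$ by~\cite[Thm.~4.6]{CM2}, that sequence stabilizes as well, and enlarging $N$ if necessary so that both stabilizations occur we obtain $\poiac(\rac(B)) = \r_{q,\iota}(\poiac(\pac(B)))$, i.e.~\eqref{eq:rhoqaphi}.

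Statement (ii) is then immediate: each $\rho_\l$ is a morphism of C*-algebras, hence contractive, so $\|\rac(B)-\rac(B')\| = \sup_\l\|\rho_\l(B-B')\| \leq \|B-B'\|$ and $\rac$ is norm continuous. This continuity, together with the fact that $\uAAmac$ is norm closed and $\poiac$, $\r_{q,\iota}$ are norm continuous, lets me extend both the membership $\rac(B) \in \uAAmac$ and the identity~\eqref{eq:rhoqaphi} from the dense set of local elements to all of $\cAm_\phi$, completing (i). Finally, for (iii) I would invoke~\eqref{eq:rhoqaphi} once more: for $B \in \cAm_\phi(O)$ the element $\poiac(\pac(B))$ belongs to the local algebra $\Aoim(\hat O)$, and $\r_{q,\iota}$, being a properly supported morphism localized in $W_+-a$, maps local algebras into local algebras; hence $\poiac(\rac(B)) = \r_{q,\iota}(\poiac(\pac(B))) \in \bigcup_O\Aoim(O)$.

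The main obstacle is the scale-uniform stabilization of the first paragraph: one must verify carefully that a single index $N$ controls the $n$-independence of the conjugation \emph{simultaneously for all} $\l > 0$, for it is precisely this that turns $\rac(B)$ into a genuine element of $\uAAmac$ rather than a merely pointwise-in-$\l$ object, and that allows the same $N$ to be transported, via the locality of $\poiac$, to the stabilization of $\r_{q,\iota}$ in the scaling limit. A subsidiary point to check is the stability of $\uAAmac$ under multiplication by the unitaries $\uW(iu_N)^{\pm1}$; this is subsumed in its being a C*-algebra, but if desired can be verified directly by conjugating the approximants in $\uAAm$ furnished by~\eqref{eq:weylreg}.
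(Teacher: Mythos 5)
Your proof is correct and takes essentially the same route as the paper's: the scale-uniform stabilization $\rho_\l(B)=\uW(iu_N)^*_\l\,\phi_\l(B)\,\uW(iu_N)_\l$ for local $B$ (with $N$ depending only on $O$, precisely because the localization region $\l O$ and the threshold $\l na$ scale by the same factor, i.e.\ $\bar O$ lies in the left spacelike complement of $(0,Na)$), then the identification $\poiac(\rac(B))=\r_{q,\iota}(\poiac\pac(B))$ using that $\poiac\pac(B)$ is local, and finally the extension to quasi-local elements and the continuity statement (ii) from the fact that all maps involved are C*-algebra morphisms, hence contractive. The only cosmetic difference is in (iii), where the paper simply reuses the stabilized conjugation formula from (i) (which you already have, and which makes locality of $\poiac(\rac(B))$ manifest since $\Woi(iu_N)$ is a local unitary) instead of appealing to $\r_{q,\iota}$ being properly supported and wedge-localized.
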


\begin{proof}
(i) For a local element $B \in \cAm_\phi(O)$, both assertions follow from the fact that, for $n$ large enough (namely, $n$ such that $\bar O$ is in the left spacelike complement of the point $(0,na)$),
\[
\rho_\lambda(B) = \uW(iu_n)^*_\l \phi_\l(B) \uW(iu_n)_\l, \qquad \l > 0,
\]
and, since $\poiac\pac(B) \in \Aoirm(O)$ by~\cite[Thm.\ 4.6]{CM}, also
\[
\r_{q,\iota}(\poiac\pac(B)) = \Woi(iu_n)^*\poiac\pac(B)\Woi(iu_n).
\] 
The extension to quasi-local elements in $\cAm_\phi$ is then simply a consequence of the fact that $\rac$, $\poiac$, $\pac$ and $\r_{q,\iota}$ are C*-algebra morphisms.

(ii) This is an immediate consequence of the obvious fact that $\rac : \cAm_\phi \to \uAAac$ is a C*-algebra morphism.

(iii)  As seen in the proof of (i), $\poiac(\pac(B)) = \r_{q,\iota}(\poiac\pac(B))$ belongs to $\bigcup_O \Aoim(O)$ for $B \in \bigcup_O \cAm_\phi(O)$.
\end{proof}

By letting $h$ converge to a $\delta$ function in Eq.~\eqref{eq:WhfinC}, we observe
that $\poiac\pac(\cAm_\phi(O))^-$ contains $\cC(O)$
. We deduce that $\r_{q,\iota}$, being strongly continuous in restriction to local algebras, is uniquely determined  on $\cC$  by the knowledge of $(\r_\l)$ through Eq.~\eqref{eq:rhoqaphi}, and then on the whole $\Aoim$ by a cohomological extension procedure as in Prop.~\ref{prop:restrictionz}.
In particular, setting $B = W_{h,f}^{(0)}$ in~\eqref{eq:rhoqaphi} we obtain, using Prop.~\ref{prop:Wupzero}(i) and (iii), and Eq.~\eqref{eq:rhol},
\[
\r_{q,\iota}\bigg(\int_{\bR^2} dx\, h(x) \aoim_x(\Woi( f))\bigg) = \poiac\bigg(\l \mapsto \int_{\bR^2} dx\,h(x) e^{-i\int_\bR u_\infty \Re( \tau_x^{(0)}f)} W(\tau^{(0)}_x\d_\l f)\bigg).
\]

By Cor.~\ref{cor:notsimple} $\cAm_\phi$ is a proper subalgebra of $\cAm$. This implies in particular that we can not make sense of Eq.~\eqref{eq:rhoqaphi} for an arbitrary $B \in \cAm$.
Moreover, $(\phi_\l)$ is not an asymptotic isomorphism of $\cAm$ in the sense of \cite[Def. 5.4]{CM} and
therefore we can not establish the analogue of Eq.~(5.5) of~\cite{CM}. To obtain such a result, one could be tempted to enlarge the algebra $\uAAac$ so as to encompass all functions of the form $\l \mapsto \phi_\l(A)$, $A \in \cAm$, but  Prop.~\ref{prop:notiso} implies that this can not be done in such a way that $\poiac$ is multiplicative on the enlarged algebra. 
In a sense, one might think that the situation at hand hints at a not yet existing notion of {\it unbounded asymptotic morphism}.

The origin of these complications has to be ascribed to the bad infrared behaviour, indeed
we will show in Sec.~\ref{sec:highdim} that the above approach works well for the free charged field in $d+1$ dimensions, where $d=2,3$. 

\medskip
In the last part of this section, we discuss one more aspect of our setting that may be of independent interest.
The following result is a version, adapted to the Cauchy data formulation of the free field we are adopting here, of the calculations in~\cite[Sec.~4]{Buc1}.
For the sake of completeness we include a proof based on similar computations as those appearing in the proof of Prop.~\ref{prop:Wupzero}.

\begin{Lemma}\label{lem:limitweyl}
Given functions $f_1,\dots, f_n \in \cD(\bR)$ such that $\int_\bR \Re f_j = 0$, $j=1,\dots,n$, and functions $h_1,\dots,h_n \in C_c(\bR^2)$, $n \in \bN$, there exists
$$
\lim_{\l \to 0}\om\big(\uam_{h_1}\uW(f_1)_\l\dots\uam_{h_n}\uW(f_n)_\l\big)
$$
\end{Lemma}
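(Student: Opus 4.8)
The plan is to collapse the expression into an explicit finite–dimensional integral and then apply dominated convergence. First I would use the commutation relation $\d_\l\tau^{(\l m)}_x=\tau^{(m)}_{\l x}\d_\l$ (as in the proof of Prop.~\ref{prop:Wupzero}) to write each factor as $(\uam_{h_j}\uW(f_j))_\l=\int_{\bR^2}dx_j\,h_j(x_j)\,W(\d_\l\tau^{(\l m)}_{x_j}f_j)$. Since each factor is a strong integral of a uniformly bounded, strongly continuous, compactly supported operator-valued function, the product of the $n$ factors equals the iterated strong integral and the normal vacuum state $\om=\langle\Om,(\cdot)\Om\rangle$ may be taken inside. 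Using the Weyl relations, the dilation invariance of $\s$ and the identity $\|\d_\l g\|_m=\|g\|_{\l m}$ (a consequence of $\om\circ\s_\l=\o^{(\l m)}$, cf.~\cite{EF}), I would arrive at
\[
\om\big(\uam_{h_1}\uW(f_1)_\l\cdots\uam_{h_n}\uW(f_n)_\l\big)=\int_{\bR^{2n}}d\vec x\,\Big(\prod_{j}h_j(x_j)\Big)\,\Phi_\l(\vec x),
\]
where, writing $F_\l:=\sum_j\tau^{(\l m)}_{x_j}f_j$,
\[
\Phi_\l(\vec x)=\exp\Big(-\tfrac i2\textstyle\sum_{j<k}\s(\tau^{(\l m)}_{x_j}f_j,\tau^{(\l m)}_{x_k}f_k)\Big)\,\exp\big(-\tfrac12\|F_\l\|_{\l m}^2\big).
\]
The structural point is that $|\Phi_\l(\vec x)|\le1$ uniformly in $\l$ and $\vec x$, so the integrand is dominated by the fixed $L^1(\bR^{2n})$ function $\prod_j|h_j(x_j)|$; the whole problem thus reduces to the pointwise convergence of $\Phi_\l(\vec x)$ as $\l\to0$.

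For the phase factor I would show $\tau^{(\l m)}_{x}f_j\to\tau^{(0)}_{x}f_j$ in $L^2(\bR)$: the Fourier transforms converge pointwise and, using $\o_{\l m}(\bdp)^{-1}|\sin(t\o_{\l m}(\bdp))|\le|t|$ and $\o_{\l m}(\bdp)\le|\bdp|+m$ for $\l\le1$, are dominated via \eqref{eq:timeev} by a fixed Schwartz function. Since $\s$ is jointly continuous on $L^2\times L^2$ by Cauchy--Schwarz, the phases converge. This step requires no assumption on the $f_j$.

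The core of the argument is the Gaussian factor, where I would prove $\|F_\l\|_{\l m}\to\|F_0\|_0$ with $F_0:=\sum_j\tau^{(0)}_{x_j}f_j$, and here the hypothesis $\int_\bR\Re f_j=0$ is essential. Because $\int_\bR\Re\tau^{(0)}_{x}f_j=\int_\bR\Re f_j=0$, the limit function satisfies $\widehat{\Re F_0}(0)=0$, so $\|F_0\|_0<\infty$ and, by an elementary dominated convergence in $\bdp$ (the real-part integrand is dominated by $C|\bdp|(1+|\bdp|)^{-N}$ precisely because $\widehat{\Re F_0}(0)=0$), one gets $\|F_0\|_{\l m}\to\|F_0\|_0$. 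The remaining difference is controlled directly in the norm $\|\cdot\|_{\l m}$: since space translations are $\|\cdot\|_{\l m}$-isometric, $\|(\tau^{(\l m)}_{x_j}-\tau^{(0)}_{x_j})f_j\|_{\l m}=\|(\tau^{(\l m)}_{t_j}-\tau^{(0)}_{t_j})f_j\|_{\l m}\to0$ is exactly the estimate established in the proof of Prop.~\ref{prop:Wupzero}(iii) (the integral \eqref{eq:normtranslation} with $|\bdp|\to\o_{\l m}(\bdp)$ and the bound \eqref{eq:estimate}). The triangle inequality $\big|\|F_\l\|_{\l m}-\|F_0\|_0\big|\le\|F_\l-F_0\|_{\l m}+\big|\|F_0\|_{\l m}-\|F_0\|_0\big|$ then yields the claim for a.e.\ $\vec x$.

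Combining the two, $\Phi_\l(\vec x)\to\Phi_0(\vec x):=\exp(-\tfrac i2\sum_{j<k}\s(\tau^{(0)}_{x_j}f_j,\tau^{(0)}_{x_k}f_k))\exp(-\tfrac12\|F_0\|_0^2)$ pointwise, so dominated convergence gives that the limit exists and equals $\int_{\bR^{2n}}d\vec x\,(\prod_j h_j(x_j))\Phi_0(\vec x)$; by the massless Weyl relations and $\oz(W(g))=e^{-\frac12\|g\|_0^2}$ (valid since $\widehat{\Re g}(0)=0$ for $g=\sum_j\tau^{(0)}_{x_j}f_j$) this is precisely $\oz(W^{(0)}_{h_1,f_1}\cdots W^{(0)}_{h_n,f_n})$ with $W^{(0)}_{h,f}$ as in~\eqref{eq:Wupzero}. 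The main obstacle is the infrared control of the Gaussian factor, i.e.\ the potential divergence of $\|F_\l\|_{\l m}^2$ near $\bdp=0$ produced by the weight $\o_{\l m}(\bdp)^{-1}$; this is exactly where $\int_\bR\Re f_j=0$ intervenes, forcing $\widehat{\Re F_0}(0)=0$ and hence finiteness of the limit, and it is the reason the analogous statement would fail without this constraint (cf.\ Prop.~\ref{prop:notiso}).
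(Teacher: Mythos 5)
Your proposal is correct, and its skeleton coincides with the paper's: both collapse the expectation value into the $2n$-dimensional integral \eqref{eq:omega} of a phase factor times a Gaussian factor, dominate the integrand by $\prod_j|h_j(x_j)|$, and thereby reduce the problem to pointwise convergence in $(x_1,\dots,x_n)$, settled by dominated convergence in momentum space. The differences lie in how that pointwise convergence is organized, and they are worth recording. For the Gaussian factor, the paper writes $\|\tau^{(\l m)}_{x_1}f_1+\dots+\tau^{(\l m)}_{x_n}f_n\|^2_{\l m}$ as a single momentum-space integral and applies dominated convergence to it directly, the hypothesis $\widehat{\Re g_j}(0)=0$ (with $g_j=\tau_{\bdx_j}f_j$) making the dominating function $|\widehat{\Re g_j}(\bdp)|/|\bdp|$ square-integrable; you instead split $F_\l=(F_\l-F_0)+F_0$, dispose of the difference by the mass-uniform estimate already established in the proof of Prop.~\ref{prop:Wupzero}(iii) (which indeed requires no infrared condition), and apply dominated convergence only to the fixed vector $F_0$ in the moving norm $\|\cdot\|_{\l m}$. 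This modularizes the argument and isolates exactly where $\int_\bR\Re f_j=0$ enters --- only in guaranteeing $\|F_0\|_0<\infty$ and $\|F_0\|_{\l m}\to\|F_0\|_0$ --- at the price of invoking Prop.~\ref{prop:Wupzero}(iii) as a black box, where the paper's proof is self-contained. For the phase factor, the paper computes $\s(\tau^{(\l m)}_{x_i}f_i,\tau^{(\l m)}_{x_j}f_j)$ explicitly in momentum space and again uses dominated convergence (the dangerous weight $\o_{\l m}(\bdp)^{-1}$ being compensated there by the factor $\sin[(t_j-t_i)\o_{\l m}(\bdp)]$), whereas you obtain the same conclusion more softly from the $L^2(\bR)$-convergence $\tau^{(\l m)}_xf\to\tau^{(0)}_xf$ together with Cauchy--Schwarz continuity of $\s$; both routes in fact work without any hypothesis on the $f_j$ at this step, as you correctly point out. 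One cosmetic remark: the convergence of your $\Phi_\l(\vec x)$ actually holds for every $\vec x$, not merely almost every $\vec x$; the a.e.\ caveat is only relevant in the different context of Prop.~\ref{prop:notiso}.
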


\begin{proof}
Exploiting the commutation relations between translations and dilations, the Weyl relations, and the definition of the vacuum state, one has 
\begin{multline}\label{eq:omega}
\om\big(\uam_{h_1}\uW(f_1)_\l\dots\uam_{h_n}\uW(f_n)_\l\big) \\= \int_{\bR^{2n}}dx_1\dots dx_n h_1(x_1)\dots h_n(x_n)\eta_\lambda(x_1,\dots,x_n)\exp\Big\{-\frac{1}{2}\| \tau_{x_1}^{(\lambda m)}f_1 + \dots + \tau_{x_n}^{(\lambda m)}f_n\|_{\lambda m}^2\Big\},
\end{multline}
where
$$\eta_\lambda(x_1,\dots,x_n) = \exp\Big\{-\frac{i}{2}\sum_{1\leq i<j\leq n}\sigma(\tau_{x_i}^{(\lambda m)}f_i,\tau_{x_j}^{(\lambda m)}f_j)\Big\}. $$
Moreover, there holds $\tau^{(m)}_{(t,\bdx)} = \tau^{(m)}_t\tau_\bdx$, and therefore one has, by the definition of the action of time translations, and setting $g_j := \tau_{\bdx_j}f_j$
\begin{equation*}
\| \tau_{x_1}^{(\lambda m)}f_1 + \dots + \tau_{x_n}^{(\lambda m)}f_n\|_{\lambda m}^2 \\
= \frac{1}{2}\int_\bR d\bdp\bigg|\sum_{j=1}^ne^{it_j\omega_{\lambda m}(\bdp)}\Big[\frac{\widehat{\Re g_j}(\bdp)}{\sqrt{\omega_{\lambda m}(\bdp)}} + i\sqrt{\omega_{\lambda m}(\bdp)}\widehat{\Im g_j}(\bdp)\Big]\bigg|^2 \ .
\end{equation*}
The integrand in the last expression converges pointwise, as $\l \to 0$, to the corresponding value for $\lambda = 0$, and the following bounds hold uniformly for $\lambda\in[0,1]$:
$$ \frac{|\widehat{\Re g_j}(\bdp)|}{\sqrt{\omega_{\lambda m}(\bdp)}} \leq \frac{|\widehat{\Re g_j}(\bdp)|}{|\bdp|}, \qquad  \sqrt{\omega_{\lambda m}(\bdp)}|\widehat{\Im g_j}(\bdp)| \leq \sqrt{\omega_{m}(\bdp)}|\widehat{\Im g_j}(\bdp)|.$$
Since $\widehat{\Re g_j}$, $\widehat{\Im g_j}$ are Schwartz functions, and $\widehat{\Re g_j}(0) = 0$, the two functions on the right hand sides of the above inequalities are square-integrable, and therefore, by the dominated convergence theorem,
$$\lim_{\lambda \to 0} \| \tau_{x_1}^{(\lambda m)}f_1 + \dots + \tau_{x_n}^{(\lambda m)}f_n\|_{\lambda m}^2 = \| \tau_{x_1}^{(0)}f_1 + \dots + \tau_{x_n}^{(0)}f_n\|_{0}^2.$$

We now consider the limit of $\eta_\lambda$. There holds, with the same notations as above, 
\begin{multline*}
\sigma(\tau_{x_i}^{(\lambda m)}f_i,\tau_{x_j}^{(\lambda m)}f_j) \\
= \int_\bR d\bdp\bigg[\sin\big[(t_j-t_i)\omega_{\lambda m}(\bdp)\big]\bigg(\frac{\widehat{\Re g_i}(\bdp)\widehat{\Re g_j}(\bdp)}{\omega_{\l m}(\bdp)}+\o_{\l m}(\bdp)\widehat{\Im g_i}(\bdp)\widehat{\Im g_j}(\bdp)\bigg)\\
+ \cos\big[(t_j-t_i)\o_{\l m}(\bdp)\big]\Big(\widehat{\Re g_i}(\bdp)\widehat{\Im g_j}(\bdp)-\widehat{\Im g_i}(\bdp)\widehat{\Re g_j}(\bdp)\Big)\bigg],
\end{multline*}
and, by a similar argument as before employing the dominated convergence theorem, one sees that
$$\lim_{\lambda \to 0} \eta_\lambda(x_1,\dots,x_n) = \eta_0(x_1,\dots, x_n).$$

Then, we have shown that the integrand in~\eqref{eq:omega} converges to the corresponding value for $\l = 0$, and thanks to the fact that $h_j \in C_c(\bR^2)$ and that the other factors are bounded uniformly in $\l$, we obtain the thesis, appealing again to the dominated convergence theorem.
\end{proof}

\begin{Proposition}\label{prop:Csubnet}
For all double cones $O \subset \bR^2$, 
\begin{equation}
\cC(O) \subset \poiac(\uAAmac(O)).
\end{equation}
\end{Proposition}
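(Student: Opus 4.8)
The plan is to show that a norm-dense $*$-subalgebra of $\cC(O)$ sits inside the C*-algebra $\poiac(\uAAmac(O))$, and then to promote this to the full weak closure, the latter being the only delicate point.

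First I would reduce to a double cone $O_I$ with base on the time-zero line. Since $\aoim_{(\L,x)}(\Woi(f))=\Woi(\tau^{(0)}_{(\L,x)}f)$, as noted after Prop.~\ref{prop:Wupzero}, and since the Poincar\'e automorphisms $\aoim_{(\L,x)}$ are implemented on the scaling algebra by $\ua_{(\L,x)}$ and hence carry $\poiac(\uAAmac(O_I))$ onto $\poiac(\uAAmac(\L O_I+x))$, the general case $O=\L O_I+x$ follows from the time-zero case by applying $\aoim_{(\L,x)}$. For the generators, observe that for $f\in\cD(\bR)$ with $\supp f\subset I$ one has, by~\eqref{eq:weylreg} and~\eqref{eq:Woi}, $\uW(f)\in\uAAmac(O_I)$ and $\Woi(f)=\poiac(\uW(f))$, so every such Weyl operator already lies in $\poiac(\uAAmac(O_I))$ \emph{with no condition on $\int_\bR f$}. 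Consequently the Weyl $*$-algebra generated by $\{\Woi(f):\supp f\subset I,\ \int_\bR\Re f=0\}$, which is norm-dense in $\cC(O_I)$, is contained in $\poiac(\uAAmac(O_I))$; the same holds for the smeared operators $\poi(\uam_h\uW(f))=\int_{\bR^2}dx\,h(x)\aoim_x(\Woi(f))$ and their products, by Prop.~\ref{prop:Wupzero} and the multiplicativity of $\poiac$.

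The genuinely delicate step is passing from this norm-closed $*$-algebra to the weak closure $\cC(O_I)$. Here I would \emph{not} attempt to invoke the general inclusion $\Aoim(O)\subset\poiac(\uAAmac(O))$: the full net $\cAm$ does not have a convergent scaling limit, its scaling limit carrying the non-trivial centre responsible for the obstructions analysed in Sec.~\ref{sec:Schwinger}, so the relevant consequence of~\cite{BDM2} is unavailable. This is precisely where Lemma~\ref{lem:limitweyl} is decisive: it guarantees that the scaling-limit vacuum expectation values of products $\uam_{h_1}\uW(f_1)_\l\cdots\uam_{h_n}\uW(f_n)_\l$ with $\int_\bR\Re f_j=0$ converge as $\l\to0$ \emph{independently} of the chosen scaling sequence $(\l_\k)$. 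I would use this restricted convergence to produce, for each $X\in\cC(O_I)$, an explicit lift in $\uAAmac(O_I)$: approximating $X$ strongly by Weyl polynomials, one selects at each scale $\l$ the corresponding operator in $\cCm(\l O_I)$ via the local isomorphism underlying Thm.~\ref{thm:limitCm} (equivalently Thm.~\ref{thm:quasiequiv}), and the convergence of the expectation values furnishes the approximants required by the definition of $\uAAmac$ while identifying the $\poiac$-image with $X$.

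To organise this last point it is convenient to split $\cC(O_I)$ into the part already controlled by a convergent scaling limit and the genuinely new directions. Since $\cCm$, and hence $\cCoim$, has convergent scaling limit, one has $\cCz(O_I)\cong\cCoim(O_I)\subset\poiac(\uCCmac(O_I))\subset\poiac(\uAAmac(O_I))$, which takes care of the subalgebra generated by $\Woi(g)$ with $\int_\bR g=0$. The remaining generators accounting for the proper inclusion $\cCz(O_I)\subsetneq\cC(O_I)$ are the purely imaginary operators $\Woi(i\chi)$ with $\chi$ real, $\supp\chi\subset I$ and $\int_\bR\chi\neq0$, and these are single Weyl operators already shown to lie in $\poiac(\uAAmac(O_I))$. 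The content of Lemma~\ref{lem:limitweyl} is exactly that the mixed products involving both types of generators still converge in the scaling limit, which is what upgrades the inclusion of the $*$-algebra to the full von Neumann algebra $\cC(O_I)$. The main obstacle, as stressed, is this weak-closure passage in the absence of a convergent scaling limit for $\cAm$; it is surmountable only because restricting to the sector $\int_\bR\Re f=0$ is precisely what restores, through Lemma~\ref{lem:limitweyl}, the convergence that the ambient theory lacks.
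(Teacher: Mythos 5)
Your reduction to $O_I$, the observation that $\uW(f) \in \uAAmac(O_I)$ with $\Woi(f) = \poiac(\uW(f))$ for any $f$ supported in $I$, and the identification of Lemma~\ref{lem:limitweyl} as the decisive ingredient all agree with the paper. But the step you yourself single out as delicate --- passing from a dense $*$-subalgebra to the full von Neumann algebra $\cC(O_I)$ --- is never actually carried out, and the two mechanisms you sketch for it both fail. First, the ``explicit lift'' obtained by selecting at each scale $\l$ the operator in $\cCm(\l O_I)$ corresponding to $X$ ``via the local isomorphism underlying Thm.~\ref{thm:limitCm} (equivalently Thm.~\ref{thm:quasiequiv})'' is unavailable precisely for the elements that make $\cC(O_I)$ larger than $\cCz(O_I)$: that isomorphism exists only for Weyl operators with $\int_\bR f = 0$, while for the algebra~\eqref{eq:Ctildem} generated under the weaker condition $\int_\bR \Re f = 0$ the paper explicitly leaves the existence of such an isomorphism open; worse, Prop.~\ref{prop:notiso} shows that the natural scale-by-scale lift $\pac$ annihilates nonzero elements of exactly this kind, so no such correspondence can be taken for granted. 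Second, your closing claim that Lemma~\ref{lem:limitweyl} ``upgrades the inclusion of the $*$-algebra to the full von Neumann algebra'' is precisely the assertion that requires proof: $\poiac(\uAAmac(O_I))$ is only a C*-algebra, not weakly closed, so the (correct) fact that it contains a weakly dense $*$-subalgebra of $\cC(O_I)$ does not give containment of weak limits; some device for lifting weak limits back into $\uAAmac(O_I)$ is indispensable, and neither of your paragraphs supplies one.

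The paper closes exactly this gap by importing the argument of \cite[Thm.\ 4.8]{CM2}: it suffices to exhibit a C*-subalgebra $\uBB(O) \subset \uAAm(O)$ of the scaling algebra proper (not merely of $\uAAmac$) such that (a) $\lim_{\l \to 0} \om(\uB_\l) = \ooim(\uB)$ for every $\uB \in \uBB(O)$, and (b) $\poi(\uBB(O))$ is strong*-dense in $\cC(O)$. It takes for $\uBB(O)$ the C*-algebra generated by the \emph{smeared} operators $\uam_h \uW(f)$ with $\int_\bR \Re f = 0$, deduces (a) from Lemma~\ref{lem:limitweyl} by an $\varepsilon/3$-argument, and (b) from the strong* convergence $\poi(\uam_h\uW(f)) \to \Woi(f)$ as $h \to \delta$. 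Condition (a) is a convergent-scaling-limit property in restriction to $\uBB(O)$, and it is what powers the lifting of weak limits in the cited argument. To repair your proof, this is the route to follow: work with the smeared generators inside $\uAAm(O)$, so that the expectation values $\om(\uB_\l)$ are defined for all $\l$ and Lemma~\ref{lem:limitweyl} applies, verify (a) and (b), and then invoke --- or reproduce --- the argument of \cite[Thm.\ 4.8]{CM2}, rather than appealing to an isomorphism that is not known to exist.
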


\begin{proof}
By Poincar\'e covariance, it is sufficient to prove the statement for $O = O_I$, a double cone based on the time zero line. Moreover, following the argument in~\cite[Thm. 4.8]{CM2}, it is sufficient to show that there is a C*-subalgebra $\uBB(O) \subset \uAAm(O)$ with the properties that for all $\uB \in \uBB(O)$ there exists $\lim_{\l \to 0} \om(\uB_\l) = \ooim(\uB)$, and that $\poi(\uBB(O))$ is dense in $\cC(O)$ in the strong* operator topology. We take as $\uBB(O)$ the C*-subalgebra of $\uAAm(O)$ generated by the operators $\uam_h \uW(f) \in \uAAm(O)$ with $\int_\bR \Re f = 0$. Then, from Lemma~\ref{lem:limitweyl} it follows at once, by an $\varepsilon/3$-argument, that actually $\lim_{\l \to 0} \om(\uB_\l) = \ooim(\uB)$ for all $\uB \in \uBB(O)$. The fact that $\poi(\uBB(O))$ is strongly* dense in $\cC(O)$ follows from the strong* limit $\poi(\ua_h\uW(f)) \to \Woi(f)$ for $h \to \delta$, see Eq.~\eqref{eq:Woi}.\footnote{Note that if $\supp f\subset O$, since $O$ is open it is always possible to choose $\supp h$ so small that $\ua_h\uW(f) \in \uAAm(O)$.} 
\end{proof}
This property should be compared to \cite[Thm.\ 4.8]{CM2} and could replace it in the formulation of a notion of asymptotic morphism relative to a suitable subnet of the scaling limit net for theories without convergent scaling limit. We plan to address this issue elsewhere.

\section{Asymptotic morphisms for the free charged scalar field} \label{sec:highdim}

Let $\varphi$ be the mass $m \geq 0$ free charged scalar field in $d = 4$ spacetime dimensions, and let $O \mapsto \cFm(O)$, resp.\ $O \mapsto \cAm(O) = \cFm(O)^{U(1)}$, be the corresponding field, resp.\ observable, net of von Neumann algebras, in the locally Fock representation induced by the massless vacuum state, as in~\cite{BV2}. We recall that, in such representation, $\cFm(O) = \cFz(O)$, $\cAm(O) = \cAz(O)$ for all double cones $O$ with base in the time-zero hyperplane. This entails, in particular, that the corresponding quasi-local field and observable C*-algebras coincide for all possible values of $m \geq 0$. Moreover, we recall that $\cFz$ is covariant under an automorphic action of the dilation group, denoted by $\l \in \bR_+ \mapsto \phi_\l \in \aut(\cFz)$.

The superselection structure of $\cAm$ is well known: the sectors are in 1-1 correspondence with $\bZ$, they are all simple, and the representative automorphism $\gamma^{(m)}_n$, $n \in \bZ$, localized in $O$, can be expressed by~\cite{F}, \cite[Sec.\ 8.4.B]{BLOT}
\begin{equation}\label{eq:gamman}
\gamma^{(m)}_n(A) = \psi_f^n A (\psi_f^n)^*, \qquad A \in \cAm,
\end{equation}
where $\psi_f \in \cFm(O)$ is the unitary phase of the polar decomposition of $\varphi(f)$, with $f \in \cD_\bR(O)$ such that its Fourier transform is not identically zero on the mass $m$ hyperboloid.

On the other hand, according to~\cite{DM, CM}, the outer regularized scaling limit nets $\Foirm, \Aoirm$ can be identified with the corresponding nets $\cFz, \cAz$ through the net isomorphism $\phi : \Foirm \to \cFz$ defined by
\begin{equation}\label{eq:isoscalinglimit}
\phi(\poi(\uF)) = \wlim_{\k} \phi_{\l_\k}^{-1}(\uF_{\l_\k}), \qquad \uF \in \uFF^{(m)},
\end{equation}
which is unitarily implemented. This also implies that $\Aoirm$ is a Haag-dual net. Moreover, since $\Aoim$ satisfies essential duality by~\cite[Prop.\ 6.3]{BV}, one has that the outer regularized net $\Aoirm$ coincides with the dual net $\cA_{0,\iota}^{(m)d}$. Therefore, in particular, combining~\eqref{eq:gamman} and~\eqref{eq:isoscalinglimit}, the superselection structure of the scaling limit net $\Aoim$ coincides with that of $\cAz$ described above. We also recall that, as quasi-local C*-algebras, $\Aoirm = \Aoim$.

\begin{Proposition}\label{prop:R4}
Let $f \in \cD_\bR(O)$, with $O$ a double cone with base in the time-zero hyperplane, be such that its Fourier transform does not vanish identically on the mass $m$ hyperboloid. With the notation $f_\l(x) := \l^{-3} f(\l^{-1}x)$, $x \in \bR^4$, the formula
\[
\rho_{n,\l}(A) := \psi_{f_\l}^n \phi_\l(A) (\psi_{f_\l}^n)^*, \qquad \l>0, A \in \cAm,
\]
defines a tame asymptotic morphism of $\cAm$ in the sense of~\cite[Def.\ 5.1]{CM2} for every $n \in \bZ$, such that $\poiac\br^\bullet_n = \phi^{-1} \gamma^{(0)}_n$, as an (iso)morphism from the quasi-local C*-algebra $\cAm = \cAz$ to $\Aoim$.
\end{Proposition}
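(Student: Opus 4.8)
The plan is to reduce the statement to the single fact that the dilations $(\phi_\l)$ already constitute an asymptotic isomorphism of the \emph{full} net $\cAm$ — which is available in $d=4$, in contrast with the two-dimensional situation, precisely because there is no infrared obstruction and, as recalled above, $\cAm=\cAz\cong\Aoim$ as quasi-local C*-algebras. First I would observe that the family $(\rho_{n,\l})$ factors through a single, scale-independent localized automorphism. Since $\phi_\l$ is the geometric dilation automorphism of $\cFz$ and the rescaling in the statement is exactly the test-function dilation, $f_\l=\d_\l f$, for which $\varphi(f_\l)=\phi_\l(\varphi(f))$, uniqueness of the polar decomposition (preserved by the $*$-automorphism $\phi_\l$ applied to the affiliated operator $\varphi(f)$) gives $\psi_{f_\l}=\phi_\l(\psi_f)$. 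Hence
\[
\rho_{n,\l}(A)=\phi_\l(\psi_f)^n\,\phi_\l(A)\,\phi_\l(\psi_f)^{*n}=\phi_\l\big(\psi_f^nA\psi_f^{*n}\big)=\phi_\l\big(\gamma^{(0)}_n(A)\big),
\]
where $\gamma^{(0)}_n=\ad(\psi_f^n)$ is the fixed charge-$n$ automorphism of~\eqref{eq:gamman}. By gauge invariance $\gamma^{(0)}_n$ maps $\cAm$ onto itself and is localized in a double cone $O_f$, so each $\rho_{n,\l}$ is a genuine $*$-automorphism of $\cAm$; in particular the algebraic conditions in \cite[Def.\ 5.1]{CM2} hold exactly, not merely on the vacuum, and $\|\rho_{n,\l}\|=1$ uniformly.

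The substantive input is that $\pac:=(\l\mapsto\phi_\l(\,\cdot\,))$ maps $\cAm$ into $\uAAmac$ with $\poiac\pac=\phi^{-1}$, i.e.\ that $(\phi_\l)$ is itself an asymptotic isomorphism of $\cAm$ in the sense of \cite[Def.\ 5.4]{CM2}. This is the content of the identification~\eqref{eq:isoscalinglimit} of \cite{DM,CM}: for $A\in\cAm(O)$ one approximates $\phi_\l(A)$ on the vacuum by $\uam_h\pac(A)_\l$ with $h$ close to a $\delta$-function, the convergence being an application of dominated convergence to the relevant phase-space integrals. In contrast with the $d=2$ computation behind Prop.~\ref{prop:notiso}, this causes no trouble because the massless field in $d=4$ is infrared integrable, so that the analogue of the estimate~\eqref{eq:estimate} is available without the constraint on $\hat f(0)$. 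Together with the unitary implementability of $\phi$, this yields $\pac(A)\in\uAAmac$ for every $A\in\cAm$, injectivity, norm continuity, and the continuous section required of an asymptotic isomorphism — exactly the scheme of Thm.~\ref{thm:asympmorCm}, but now applicable to all of $\cAm$ rather than to a restricted subnet.

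Granting this, the conclusion is immediate. From the factorization, $\rac=\pac\circ\gamma^{(0)}_n$; since $\gamma^{(0)}_n(\cAm)=\cAm$ we get $\rac(A)\in\uAAmac$ for all $A\in\cAm$, and $\rac$ is norm continuous as a composition of the norm-continuous $\pac$ with the isometric automorphism $\gamma^{(0)}_n$. Applying $\poiac$ and using $\poiac\pac=\phi^{-1}$ then gives
\[
\poiac(\rac(A))=\poiac\big(\pac(\gamma^{(0)}_n(A))\big)=\phi^{-1}\big(\gamma^{(0)}_n(A)\big),
\]
which is the asserted identity $\poiac\br^\bullet_n=\phi^{-1}\gamma^{(0)}_n$ as a morphism from $\cAm=\cAz$ onto $\Aoim$; moreover, since $\gamma^{(0)}_n$ is localized and $\phi^{-1}$ is a net isomorphism, for $A\in\cAm(O)$ one has $\gamma^{(0)}_n(A)\in\cAm(\hat O)$ with $\hat O\supset O\cup O_f$, whence $\poiac(\rac(A))\in\Aoim(\hat O)$, yielding the last tameness requirement $\poiac(\rac(\bigcup_O\cAm(O)))\subset\bigcup_O\Aoim(O)$. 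The main obstacle is therefore entirely concentrated in the asymptotic-isomorphism step for $(\phi_\l)$ — namely verifying that $\phi_\l(A)$ lies asymptotically in the scaling algebra and reproduces $\phi^{-1}$, i.e.\ the $d=4$ convergence estimates underlying~\eqref{eq:isoscalinglimit}; the rescaled-phase identity $\psi_{f_\l}=\phi_\l(\psi_f)$ is the only other point needing care, since it passes through the polar decomposition of the unbounded operator $\varphi(f)$, but it is handled once and for all by the $\phi_\l$-covariance of the free field.
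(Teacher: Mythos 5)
Your proposal is correct, and it rests on the same two pillars as the paper's proof: the covariance identity $\psi_{f_\l}=\phi_\l(\psi_f)$, obtained from $\phi_\l(\varphi(f))=\varphi(f_\l)$ and uniqueness of the polar decomposition, and the claim that $(\phi_\l)$ is an asymptotic isomorphism of $\cAm$ with $\poiac\pac=\phi^{-1}$, which both you and the paper establish only in outline, by rerunning the scheme of Thm.~\ref{thm:asympmorCm} with the isomorphism~\eqref{eq:isoscalinglimit} of~\cite{DM,CM} in place of $\ad V$ (so you are not being sketchier than the source on this point). Where you genuinely differ is the final step. The paper stays inside the general framework of preserved sectors: it records that $\l\mapsto\psi_{f_\l}^n$ belongs to ${\uFF^{(m)}}^\bullet(O)$, that $\ad\,\psi_{f_\l}^n$ is the morphism $\rho(\l)$ of~\cite[Prop.\ 7.1]{CM2}, and that all sectors are preserved by~\cite[Thm.\ 4.4]{DM}, and then quotes~\cite[Thm.\ 7.2]{CM2} for tameness, computing $\poiac\br^\bullet_n$ through the map $\Psi$ of~\cite[Eq.\ (5.5)]{CM2}. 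You instead commute the dilation through the implementers, $\rho_{n,\l}=\phi_\l\circ\gamma^{(0)}_n$, and verify~\cite[Def.\ 5.1]{CM2} by hand; this is more elementary and self-contained, requires only the observable-level containment $\pac(\cAm)\subset\uAAmac$ (no lift of the charged unitaries to the asymptotic field algebra and no appeal to preservation of sectors), and yields $\poiac\br^\bullet_n=\poiac\pac\,\gamma^{(0)}_n=\phi^{-1}\gamma^{(0)}_n$ in one line. What the paper's route buys is the explicit exhibition of this model as an instance of the general theory of~\cite[Sec.\ 7]{CM2}, which is part of the purpose of the section. One microscopic imprecision in your localization argument, not a gap: $\phi^{-1}$ carries $\cAz(\hat O)$ onto the outer-regularized algebra $\Aoirm(\hat O)$, so $\poiac(\rac(A))$ lies in $\Aoim(\hat O')$ for double cones $\hat O'\Supset\hat O$ rather than in $\Aoim(\hat O)$ itself; the required inclusion in $\bigcup_O\Aoim(O)$ is of course unaffected.
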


\begin{proof}
The chosen representation of the nets generated by the free charged scalar fields of different masses is such that, for $f$ as in the statement, the field operators $\varphi(f)$ all coincide. This implies that the dilations of the mass $m = 0$ theory act on $\varphi(f)$ according to
\[
\phi_\l(\varphi(f)) = \varphi(f_\l), \qquad \l > 0,
\]
which entails, thanks to the unicity of the polar decomposition, $\phi_\l(\psi_f) = \psi_{f_\l}$. In turn, this has as a consequence that, according to~\cite[Thm.\ 4.4]{DM}, the sectors of $\cAm$ are all preserved, and that all the functions $\l \mapsto \psi_{f_\l}^n$, $n \in \bZ$, belong to ${\uFF^{(m)}}^\bullet(O)$. Moreover, $\ad \,\psi_{f_\l}^n$ is precisely the morphism $\rho(\l)$ of $\cAm$ appearing in~\cite[Prop.\ 7.1]{CM2}.

Furthermore, we claim that $(\phi_\l)$ is an asymptotic isomorphism of $\cAm$ in the sense of~\cite[Def.\ 5.4]{CM2} and that
\begin{equation}\label{eq:poiacbd}
\poiac(\bp^\bullet(A)) = \phi^{-1}(A), \qquad A \in \cAm.
\end{equation}
The validity of such statements is proven by arguments similar to those in the proof of Thm.~\ref{thm:asympmorCm}, using the properties of the isomorphism $\phi : \Foirm \to \cFz$ obtained in~\cite{BV2}.

We conclude that, by~\cite[Thm.\ 7.2]{CM2}, the $\r_{n,\l}$ as in the statement is a tame asymptotic morphism. Moreover, again by~\cite[Thm.\ 7.2]{CM2}, and by the definition of $\Psi$ in~\cite[Eq. (5.5)]{CM2}, we have
\[\begin{split}
\poiac \br^\bullet_n &= \Psi\big((\r_{n,\l}), (\phi_\l)\big)  \phi^{-1} = \ad\, \big[\poiac(\pac(\psi_f^n))\big]\phi^{-1} \\
&= \ad\,\big[\phi^{-1}(\psi_f^n)\big]\phi^{-1} = \phi^{-1}\ad\,\psi_f^n = \phi^{-1}\gamma^{(0)}_n,
 \end{split}\]
where in the third equality~\eqref{eq:poiacbd} was used. This completes the proof.
\end{proof}

Notice that a similar argument applies to the charged free scalar field in $d=3$.

\medskip

Let us now consider the tensor product theory $\cA^{(m_1)}\otimes \cA^{(m_2)}$. There is a (tensor product) net isomorphism, which we still denote by $\phi$, between the corresponding outer regularized scaling limit theory and the tensor product $\cAz \otimes \cAz$~\cite[Thm.~3.8]{DM}, whose superselection sectors are represented by the tensor product automorphisms $\gamma^{(0)}_h \otimes \gamma^{(0)}_k$, $(h,k) \in \bZ^2$. Considering then $\r_{h,k,\l} := \r_{h,\l}\otimes \r_{\k,\l}$ one has that $(\r_{h,k,\l})$ is a tame asymptotic morphism of $\cA^{(m_1)}\otimes \cA^{(m_2)}$, and $\poiac \br^\bullet_{h,k} = \phi^{-1}\gamma^{(0)}_h\otimes \gamma^{(0)}_k$. Notice that the existence of $\r_{h,k,\l}$ is due to the fact that  $\r_{h,\l}$ and $\r_{\k,\l}$ are \emph{bona fide} morphisms for all $\l > 0$. In more general situations, because of the fact that asymptotic morphisms are not necessarily linear maps, it is not obvious how to define a kind of tensor product.

\appendix

\section{On the quasiequivalence of vacuum states of different masses in 2d}\label{app:quasiequiv}
Recall that in Sec.~\ref{sect:asymptotic} we defined the von Neumann algebras
\[
\cCm(O_I) := \Big\{ \pi^{(m)}(W(f))\,:\, {\rm supp}\, f \subset I, \, {\textstyle \int_I f} = 0\Big\}'', \quad m\geq 0 \ .
\]
In this appendix we provide a proof of the following result.

\begin{Theorem}\label{thm:quasiequiv}
For each bounded interval $I \subset \bR$ there exists a von Neumann algebra isomorphism $\phi$ between $\cCm(O_I)$, $m>0$, and 
$\cCz(O_I)$ such that $\phi(\pi^{(m)}(W(f))) = \pi^{(0)}(W(f))$.
\end{Theorem}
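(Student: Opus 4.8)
The plan is to realize both von Neumann algebras as generated by the same family of Weyl operators in two different GNS representations, and to reduce the statement to a \emph{local} quasi-equivalence of the massive and massless vacua, which is in turn governed by a Hilbert--Schmidt estimate in momentum space. First I would set $V_I := \{f \in \cD(\bR) : \supp f \subset I,\ \int_\bR f = 0\}$, regarded as a real symplectic space for the ($m$-independent) form $\sigma$, and observe that $\cCm(O_I) = \pi^{(m)}(\fW(V_I))''$, where $\fW(V_I)$ is the $C^*$-Weyl algebra over $(V_I,\sigma)$. Since $\sigma$ does not depend on $m$, the assignment $W(f) \mapsto W(f)$ is a $*$-isomorphism of $\fW(V_I)$; by the very characterization of quasi-equivalence, it extends to a normal isomorphism $\cCm(O_I) \to \cCz(O_I)$ intertwining the generators precisely when the restricted representations $\pi^{(m)}|_{\fW(V_I)}$ and $\pi^{(0)}|_{\fW(V_I)}$ are quasi-equivalent. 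Thus the theorem is equivalent to the local quasi-equivalence on $\fW(V_I)$ of the quasi-free states $\om$ and $\oz$.

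Next I would invoke the classical criterion for quasi-equivalence of quasi-free states in terms of their one-particle structures (as in~\cite{EF} and references therein). Writing $K_m f := \o_m(\cdot)^{-1/2}\widehat{\Re f} + i\,\o_m(\cdot)^{1/2}\widehat{\Im f} \in L^2(\bR)$ for the one-particle vector, so that $\|f\|_m^2 = \tfrac12\|K_m f\|^2$, the Bogoliubov transformation relating $K_m$ and $K_0$ has linear part $\cosh\theta$ and anti-linear (pairing-creating) part given by multiplication by $\sinh\theta$, with $\theta(\bdp) = \tfrac12\log(\o_m(\bdp)/|\bdp|)$. After first checking that $\|\cdot\|_m$ and $\|\cdot\|_0$ are equivalent norms on $V_I$, the two restricted representations are quasi-equivalent exactly when the off-diagonal part, compressed to the completed one-particle space $\mathfrak h := \overline{V_I}^{\,\|\cdot\|_0}$, is Hilbert--Schmidt. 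Equivalently, polarizing the difference of covariances yields a bounded self-adjoint operator $R$ on $\mathfrak h$ whose quadratic form is
\[
\|f\|_m^2 - \|f\|_0^2 = \frac12\int_\bR d\bdp\,\Big[(\o_m(\bdp)^{-1}-|\bdp|^{-1})\,|\widehat{\Re f}(\bdp)|^2 + (\o_m(\bdp)-|\bdp|)\,|\widehat{\Im f}(\bdp)|^2\Big],
\]
and I must show that $R$ is Hilbert--Schmidt; note that the cross terms between $\widehat{\Re f}$ and $\widehat{\Im f}$ are $m$-independent and so cancel in this difference, which simplifies the analysis.

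The core of the proof is then this Hilbert--Schmidt estimate. Using $\o_m(\bdp)-|\bdp| = m^2/(\o_m(\bdp)+|\bdp|)$ and $\o_m(\bdp)^{-1}-|\bdp|^{-1} = -m^2/\big[\o_m(\bdp)|\bdp|(\o_m(\bdp)+|\bdp|)\big]$, both factors decay like $|\bdp|^{-1}$ as $|\bdp| \to \infty$. Combined with the Paley--Wiener decay of the Fourier transforms of functions supported in the bounded interval $I$, this makes the integral kernel of $R$ — the difference factor times the smoothing kernel $\widehat{\chi_I}$ arising from the localization in $I$ — square-integrable in the ultraviolet, so the double momentum integral defining $\|R\|_2$ converges at large $|\bdp|$.

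The hard part will be the infrared. As $\bdp \to 0$ the factor $\o_m(\bdp)^{-1}-|\bdp|^{-1}$ blows up like $-|\bdp|^{-1}$, and it is exactly here that the constraint $\int_\bR f = 0$ is indispensable: it forces $\widehat{\Re f}(0)=\widehat{\Im f}(0)=0$, whence $\widehat{\Re f}(\bdp) = O(\bdp)$ near the origin, and this vanishing tames the singular factor and restores square-integrability of the kernel. This is precisely the point at which the $d=2$ infrared problem of the massless field enters, and which forces the restriction to the subalgebra with null integral; establishing the \emph{finiteness} of the resulting Hilbert--Schmidt norm (a convergent double momentum integral), rather than mere boundedness of $R$, is the main technical obstacle. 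Once this estimate is in hand, the quasi-equivalence criterion produces the desired normal isomorphism $\phi$ with $\phi(\pi^{(m)}(W(f))) = \pi^{(0)}(W(f))$, completing the proof.
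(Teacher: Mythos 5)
Your proposal follows the same skeleton as the paper's proof: realize both algebras as generated by the Weyl operators over the one real symplectic space $\cD_0(I)=\{f\in\cD(I):\int_I f=0\}$, reduce the theorem to quasi-equivalence of the two restricted quasi-free vacua, check equivalence of the one-particle norms $\|\cdot\|_m$, $\|\cdot\|_0$, and then verify a Schatten-class condition on the difference of covariances; even the cancellation of the $m$-independent cross terms between $\widehat{\Re f}$ and $\widehat{\Im f}$ that you point out is exactly what the paper exploits (it is why $\tilde S_m-\tilde S_0$ is block diagonal there). The paper implements all of this through the Araki--Yamagami theorem \cite{AY}.

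The genuine gap is the Schatten condition you propose to verify. The Araki--Yamagami criterion is that the difference of the \emph{square roots} of the covariance operators, $\tilde S_m^{1/2}-\tilde S_0^{1/2}$, be Hilbert--Schmidt (together with the topology condition); it is \emph{not} that the difference itself --- your $R$, the paper's $\frac12(\Id-T)$ up to normalization --- be Hilbert--Schmidt. The two conditions are not equivalent in this setting: the local algebras are type III, so the modular operator of the restricted vacuum is unbounded and the spectrum of the covariance operators accumulates at $0$; and for positive operators whose spectra accumulate at $0$, Hilbert--Schmidt of $A-B$ does not imply Hilbert--Schmidt of $A^{1/2}-B^{1/2}$ (diagonal counterexample: $a_n=n^{-1}$, $b_n=n^{-4}$, where $\sum_n(a_n-b_n)^2<\infty$ but $\sum_n(\sqrt{a_n}-\sqrt{b_n})^2 \sim \sum_n n^{-1}=\infty$). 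So ``$R$ Hilbert--Schmidt'' is necessary but not sufficient, and your plan proves the wrong condition. The paper closes exactly this hole by establishing the strictly stronger statement that $\Id-T$ is of \emph{trace class}, and then invoking Buchholz's Powers--St\o rmer-type result \cite[Appendix B]{Bu}, which gives $\|\tilde S_m^{1/2}-\tilde S_0^{1/2}\|_{\mathrm{HS}}^2\leq\|\tilde S_m-\tilde S_0\|_1$. Note also that your estimate strategy --- square-integrability of a momentum-space kernel --- is intrinsically a Hilbert--Schmidt bound and cannot deliver trace class; for that the paper computes the position-space kernels explicitly in terms of Bessel functions, $Q^-(\bdx)=K_0(m|\bdx|)+\log|\bdx|$ and $Q^+(\bdx)=-\frac{m}{|\bdx|}K_1(m|\bdx|)+|\bdx|^{-2}$, and produces a rank-one decomposition with summable norms via a Fourier-series argument. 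Finally, the norm equivalence you dispose of in one clause is itself a substantial proposition in the paper: the bound $\|f\|_{m,1}\leq C\|f\|_{0,1}$ rests on a fractional Moser--Trudinger-type inequality \cite{Ma}, and $\|f\|_{0,-1}\leq C\|f\|_{m,-1}$ requires a separate infrared argument using $\hat f(0)=0$ --- though your identification of $\int_\bR f=0$ as what tames the infrared is correct.
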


To show this fact, we make appeal to the general results of~\cite{AY}. To make contact with the formalism employed there in the description of CCR algebras, we introduce then the notation
\[
\cD_0(I) := \Big\{ f \in \cD(I)\,:\, {\textstyle \int_I f } = 0\Big\},
\]
considered as a real vector space, and we define the complex vector space $K := \cD_0(I) \oplus \cD_0(I)$ with complex structure defined by
\[
i \cdot (f_1\oplus f_2) := (-f_2)\oplus f_1.
\]
Moreover, we define a conjugate linear involution $\Gamma : K \to K$ and an indefinite inner product $\gamma : K \times K \to \bC$ as
\begin{align*}
\Gamma(f_1 \oplus f_2) &:= f_1\oplus(-f_2),\\
\gamma(f_1 \oplus f_2,g_1\oplus g_2) &:= \frac 1 2\big(\langle f_1+if_2,g_1+ig_2\rangle - \langle g_1-ig_2,f_1-if_2\rangle\big),
\end{align*}
with $\langle \cdot, \cdot\rangle$ the standard scalar product on $L^2(\bR)$. They satisfy $\gamma(\Gamma x, \Gamma y) = -\gamma(y,x)$ for all $x, y \in K$. To these data,~\cite[Sec.\ 2]{AS} associates the self-dual CCR algebra $\fA(K,\gamma,\Gamma)$. This is the $*$-algebra generated by symbols $B(f_1 \oplus f_2)$, $f_1 \oplus f_2 \in K$, satisfying the natural relations suggested by the identification of $B(f_1 \oplus f_2)$ with the Fock space operators $\varphi(f_1) + i \varphi(f_2)$, 
where $\varphi(f) = \phi(\Re f) - \pi(\Im f) = 
\frac 1 {\sqrt 2} \big[ a(\omega_m^{-1/2} \Re f + i \omega_m^{1/2} \Im f) + a(\omega_m^{-1/2} \Re f + i \omega_m^{1/2} \Im f)^* \big]$.
Notice that $\pi^{(m)}(W(f)) = e^{i \varphi(f)}$ for $m>0$, while $\pi^{(0)}(W(f)) = e^{i \varphi(f)} \otimes \Id$ acts on the tensor product of the (Bosonic) Fock space on $\cD_0(\bR)$ with a nonseparable multiplicity space.

We now introduce the $\bR$-bilinear map $\langle \cdot, \cdot\rangle_m : \cD_0(\bR) \times \cD_0(\bR) \to \bC$, $m \geq 0$, defined by
\begin{equation*}\begin{split}
\langle f,g\rangle_m := \frac 1 2 \int_\bR d\bdp &\overline{\left[\o_m(\bdp)^{-1/2} (\Re f)\hat{\,}(\bdp) + i\o_m(\bdp)^{1/2} (\Im f)\hat{\,}(\bdp)\right]}  \\
&\times \big[\o_m(\bdp)^{-1/2} (\Re g)\hat{\,}(\bdp) + i\o_m(\bdp)^{1/2} (\Im g)\hat{\,}(\bdp)\big]\,.
\end{split}\end{equation*}
It is clear that $\Re \langle \cdot, \cdot \rangle_m$ is a real scalar product inducing the norm $\|\cdot\|_m$  on $\cD_0(\bR)$ and it is easy to verify that $\Im \langle f, g \rangle_m = \frac 1 2 \Im\langle f, g \rangle$. We can then introduce on $ K \times K$ the hermitian form
\[
S_m(f_1\oplus f_2,g_1\oplus g_2)  := 
\langle f_1,g_1\rangle_m + i \langle f_1,g_2\rangle_m -i \langle f_2,g_1\rangle_m + \langle f_2,g_2\rangle_m.
\]
This is nothing but $\big\langle\Om,(\varphi(f_1) + i \varphi(f_2))^* (\varphi(g_1) + i \varphi(g_2))\Om\big\rangle $ if $m>0$
and $\big\langle\Oz,\big[(\varphi(f_1) + i \varphi(f_2))^* (\varphi(g_1) + i \varphi(g_2))\big] \otimes \Id \, \Oz\big\rangle $ otherwise.

One verifies that $S_m(x,x) \geq 0$ and $S_m(x,y) - S_m(\Gamma y,\Gamma x) = \gamma(x,y)$ for all $x,y \in K$, and then, by~\cite[Lem.\ 3.5]{AS}, there exists a unique quasi-free state $\varphi_m= \varphi_{S_m}$ on $\fA(K,\gamma,\Gamma)$ such that $\varphi_m(x^*y) = S_m(x,y)$.

We now observe that $\Re K := \{x \in K\,:\, \Gamma x = x\} = \cD_0(I)\oplus\{0\}$, and that for $f,g \in \cD_0(I)$ we get
\begin{equation}\label{eq:sympnorm}
\gamma(f\oplus 0, g\oplus 0) = i \, \Im \langle f,g\rangle, \qquad S_m(f\oplus 0,f\oplus 0) =  \|f\|_m^2.
\end{equation}
Then, by~\cite[Prop.\ 3.4]{AY} and by the separating property of the vacuum vector for $\cCm(O_I)$, the Weyl operators $W_{S_m}(x)$, $x \in \Re K$, in the GNS representation of $\fA(K,\gamma,\Gamma)$ induced by $\varphi_m$ generate a von Neumann algebra isomorphic to $\cCm(O_I)$ for all $m \geq 0$.

The next step is to define the scalar product
\[
\langle x,y\rangle_{S_m} := S_m(x,y) + S_m(\Gamma y, \Gamma x), \qquad x,y \in K,
\]
and to observe that
\begin{equation}\label{eq:normSm}
\langle f_1\oplus f_2,f_1\oplus f_2\rangle_{S_m} = 2 (\|f_1\|_m^2 + \|f_2\|_m^2) = 0 
\end{equation}
implies $f_1 \oplus f_2 = 0$. Then the main Theorem of~\cite{AY} implies that the statement of Thm.~\ref{thm:quasiequiv} holds if and only if:
\begin{enumerate}
\item the (Hausdorff) topologies induced on $K$ by $\langle \cdot, \cdot\rangle_{S_m}$ and $\langle \cdot, \cdot\rangle_{S_0}$ coincide;
\item denoting by $\bar K$ the completion of $K$ in the above topology, and by $\tilde S_m, \tilde S_0 : \bar K \to \bar K$ the bounded (by~\cite[Lem.\ 3.2]{AY}) operators defined by
\begin{equation}\label{eq:SmS0}
S_m(x,y) = \langle x, \tilde S_m y\rangle_{S_m}, \quad S_0(x,y) = \langle x, \tilde S_0 y\rangle_{S_m}, \qquad x,y \in K,
\end{equation}
(with $\langle \cdot,\cdot \rangle_{S_m}$ extended by continuity to $\bar K$), the operator $\tilde S_m^{1/2}- \tilde S_0^{1/2}$ is Hilbert-Schmidt.
\end{enumerate}
The remaining part of this appendix is devoted to showing the validity of 1. and 2. 
\begin{Proposition}
The topologies induced on $K$ by $\langle \cdot, \cdot\rangle_{S_m}$ and $\langle \cdot, \cdot\rangle_{S_0}$ coincide.
\end{Proposition}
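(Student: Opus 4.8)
The plan is to reduce the statement to the equivalence of the two norms $\|\cdot\|_m$ and $\|\cdot\|_0$ on the real space $\cD_0(I)$, and then to concentrate all the work at low momenta, where the whole difficulty resides. By~\eqref{eq:normSm} one has $\langle f_1\oplus f_2,f_1\oplus f_2\rangle_{S_m}=2(\|f_1\|_m^2+\|f_2\|_m^2)$, so the topology induced on $K=\cD_0(I)\oplus\cD_0(I)$ by $\langle\cdot,\cdot\rangle_{S_m}$ is the product of two copies of the topology of $(\cD_0(I),\|\cdot\|_m)$. Since two norms on a vector space induce the same topology if and only if they are equivalent, it suffices to produce constants $0<c\le C$, depending only on $m$ and $I$, with $c\|f\|_0\le\|f\|_m\le C\|f\|_0$ for all $f\in\cD_0(I)$. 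Expanding the square in the definition of $\|\cdot\|_m$, the cross term $\int_\bR\Im(\overline{\widehat{\Re f}}\,\widehat{\Im f})\,d\bdp$ integrates to zero by parity (its integrand is odd, $\Re f,\Im f$ being real), so I may work with the diagonal form $\|f\|_m^2=\tfrac12\int_\bR[\omega_m(\bdp)^{-1}|\widehat{\Re f}(\bdp)|^2+\omega_m(\bdp)|\widehat{\Im f}(\bdp)|^2]\,d\bdp$, and likewise for $m=0$ with $\omega_m(\bdp)$ replaced by $|\bdp|$.

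Next I would split each integral at $|\bdp|=1$. On $|\bdp|\ge 1$ one has $|\bdp|\le\omega_m(\bdp)\le\sqrt{1+m^2}\,|\bdp|$, so the weights $\omega_m^{\pm1}$ and $|\bdp|^{\pm1}$ are comparable with constants depending only on $m$; hence the high-momentum parts of $\|\cdot\|_m$ and $\|\cdot\|_0$ are uniformly equivalent, with no use of the support condition. Writing $\|f\|_0^2-\|f\|_m^2=A-B$, where $A=\tfrac12\int(|\bdp|^{-1}-\omega_m^{-1})|\widehat{\Re f}|^2\ge 0$ and $B=\tfrac12\int(\omega_m-|\bdp|)|\widehat{\Im f}|^2\ge 0$, the two desired inequalities follow (through $\|f\|_m^2=\|f\|_0^2-A+B$) from the relative bounds $B\le C_1\|f\|_0^2$ and $A\le C_2\|f\|_m^2$. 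After the elementary high-momentum estimates, these reduce to the two infrared inequalities
\[
\int_{|\bdp|<1}|\widehat{\Im f}|^2\,d\bdp\le C\int_\bR|\bdp|\,|\widehat{\Im f}|^2\,d\bdp,\qquad \int_{|\bdp|<1}|\bdp|^{-1}|\widehat{\Re f}|^2\,d\bdp\le C\int_\bR\omega_m^{-1}|\widehat{\Re f}|^2\,d\bdp .
\]

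The crucial input, and the step I expect to be the main obstacle, is exactly the control of these low-momentum integrals, where the massless weights $|\bdp|^{\pm1}$ are singular at the origin while $\omega_m^{\pm1}$ stay bounded; this is where the defining constraint of $\cD_0(I)$ enters. For $f\in\cD_0(I)$ both $\Re f$ and $\Im f$ are smooth, supported in $\bar I$, and of vanishing mean, so by Paley--Wiener $\widehat{\Re f}$ and $\widehat{\Im f}$ extend to entire functions of exponential type $\le R:=\sup_{x\in\bar I}|x|$ that vanish at $\bdp=0$. The heart of the matter is a band-limited \emph{no-concentration} lemma: if $F$ is entire of exponential type $\le R$ with $F\in L^2(\bR)$, then $\int_{|\bdp|<1}|F|^2\le C_R\big(\int_{|\bdp|<1}|\bdp|^2|F|^2+\int_{|\bdp|\ge1}|F|^2\big)$. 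This holds because functions in the Paley--Wiener space of type $R$ are uniformly bounded and equicontinuous (Bernstein and Nikolskii inequalities), so their $L^2$ mass cannot concentrate at the single point $\bdp=0$; a short normal-families contradiction argument then yields the constant $C_R$. Morally, the zero-mean condition forbids a genuine $\delta$-like infrared singularity while band-limitedness forbids an approximate one, and together they tame precisely the two-dimensional massless infrared problem.

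Finally I would assemble the estimates. Applying the lemma to $F=\widehat{\Im f}$ and using $|\bdp|^2\le|\bdp|$ on $|\bdp|<1$ and $1\le|\bdp|$ on $|\bdp|\ge1$ gives the first infrared inequality, hence $B\le C\|f\|_0^2$ and therefore $\|f\|_m\le C'\|f\|_0$. For the $\Re f$ term I would pass to the antiderivative $G(x):=\int_{-\infty}^x\Re f$, which is again smooth and supported in $\bar I$ (here $\int_\bR\Re f=0$ is used) and satisfies $\widehat G(\bdp)=\widehat{\Re f}(\bdp)/(i\bdp)$, so $\int_{|\bdp|<1}|\bdp|^{-1}|\widehat{\Re f}|^2=\int_{|\bdp|<1}|\bdp|\,|\widehat G|^2\le\int_{|\bdp|<1}|\widehat G|^2$; the lemma applied to $F=\widehat G$ then bounds the last quantity by a multiple of $\int_\bR(|\bdp|^2/\omega_m)|\widehat G|^2=\int_\bR\omega_m^{-1}|\widehat{\Re f}|^2\le 2\|f\|_m^2$. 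This yields $A\le C\|f\|_m^2$ and hence $\|f\|_0\le C''\|f\|_m$, establishing the norm equivalence and therefore the coincidence of the two topologies.
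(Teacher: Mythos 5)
Your proof is correct, and it takes a genuinely different route from the paper's. Both arguments make the same initial reduction --- via \eqref{eq:normSm} and the vanishing of the cross term --- to the equivalence, on real-valued elements of $\cD_0(I)$, of the weighted norms $\int_\bR \o_m(\bdp)^{\pm 1}|\hat f(\bdp)|^2\,d\bdp$ and $\int_\bR |\bdp|^{\pm 1}|\hat f(\bdp)|^2\,d\bdp$, with the two easy directions immediate; the divergence is in how the two infrared inequalities are obtained. The paper proves them by two separate quantitative arguments: for the $+1$ weight it bounds $\int_I |f|^2$ by a multiple of $\|(-\Delta)^{1/4}f\|^2_{L^2(I)}$ using a fractional Moser--Trudinger inequality from~\cite{Ma}, and for the $-1$ weight it writes $f = f\varphi$ with $\varphi \in \cD(\bR)$ equal to $1$ on $I$, exploits $\hat f(0)=0$ to pass to the difference-quotient kernel $\psi(\bdp,\bdq) = [\hat\varphi(\bdp-\bdq)-\hat\varphi(-\bdq)]/\bdp$, and closes with Cauchy--Schwarz plus an explicit integrability estimate for that kernel. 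You instead derive both inequalities from a single Paley--Wiener ``no-concentration'' lemma, proved by Bernstein/Nikolskii bounds together with an Arzel\`a--Ascoli compactness argument, and you feed in the zero-mean condition through the compactly supported antiderivative $G = \int_{-\infty}^{\cdot}\Re f$, so that $\widehat G = \widehat{\Re f}/(i\bdp)$ is again band-limited; your lemma is true as stated (the normalized family is uniformly bounded and equicontinuous, so a counterexample sequence would converge uniformly on $[-1,1]$ to a continuous limit that is forced to vanish there, contradicting the normalization), and it is worth noting that it never uses $F(0)=0$ --- only the fixed exponential type, which is what forbids $L^2$-concentration at a single point, while the zero-mean hypothesis serves only to make $G$ compactly supported. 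Your route is more unified and self-contained (no external Sobolev-type input, no kernel computation), and it isolates the structural reason the equivalence holds; what it gives up is quantitative control: the compactness argument yields a finite constant $C_R$ with no explicit dependence on $m$ and $|I|$, whereas the paper's constants, such as $[\sqrt{1+m^2}+C_1(m)]^{1/2}$, are in principle computable --- a tradeoff to keep in mind if one ever wants explicit bounds in the isomorphism of Thm.~\ref{thm:quasiequiv}.
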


\begin{proof}
By~\eqref{eq:normSm}, it is sufficient to show that the norms $\|\cdot\|_m$, $\|\cdot\|_0$ are equivalent on $\cD_0(I)$. Now, by the Parseval identity and by the fact that $\o_m^{\pm 1/2}$ maps real functions to real functions, 
\[ \begin{split}
\|f\|_m^2 &= \frac 1 2 \int_{\bR} d\bdx \left|\left(\o_m^{-1/2} \Re f + i \o_m^{1/2}\Im f\right)(\bdx)\right|^2
\\
&=\frac 1 2 \int_\bR d\bdp\left[ \o_m(\bdp)^{-1} |(\Re f)\hat{}(\bdp)|^2+\o_m(\bdp) |(\Im f)\hat{}(\bdp)|^2\right],
\end{split}\]
and therefore it is sufficient to show that the $\pm 1$ norms
\[
\| f\|_{m,\pm 1}^2 := \int_\bR d\bdp\,\o_m(\bdp)^{\pm1} |\hat{f}(\bdp)|^2, \qquad f \in \cD_0(I,\bR),
\]
are respectively equivalent for $m \neq 0$ and $m = 0$.

The inequalities $\|f\|_{m,-1} \leq \|f\|_{0,-1}$, $\|f\|_{0,1} \leq \|f\|_{m,1}$ are obvious. We now prove that $\|f\|_{m,1} \leq C \|f\|_{0,1}$ for some $C > 0$. As in~\cite{FG}, by the inequality
\begin{equation}\label{eq:omineq}
\o_m(\bdp) \leq \sqrt{1+m^2}|\bdp| + m \chi_{\{|\bdp|\leq 1\}}(\bdp)
\end{equation}
we get
\[
\|f\|^2_{m,1} \leq \sqrt{1+m^2} \|f\|^2_{0,1} +m \int_\bR d\bdp |\hat f(\bdp)|^2,
\]
and the required inequality is obtained by applying~\cite[Thm.\ 1]{Ma} with $n=1$, $p=2$, $\Omega = I$. Indeed, since $t \leq e^{\pi t}$ for all $t \geq 0$,
\[\begin{split}
 \int_\bR d\bdp |\hat f(\bdp)|^2 &= \int_I d\bdx |f(\bdx)|^2 \leq \|(-\Delta)^{1/4} f\|^2_{L^2(I)} \int_I d\bdx\, e^{\pi \Big|\frac{f(\bdx)}{\|(-\Delta)^{1/4} f\|_{L^2(I)}}\Big|^2} \\
 &\leq c_{1,2} |I| \,\|(-\Delta)^{1/4} f\|^2_{L^2(I)},
\end{split}\]
and moreover there exist $c' > 0$ such that $\|(-\Delta)^{1/4} f\|_{L^2(I)} \leq c' \|(-\Delta)^{1/4} f\|_{L^2(\bR)} = c' \|f\|_{0,1}$, see~\cite[Rem.\ 2]{Ma}. Summing up, we obtain $\|f\|_{m,1}^2 \leq (\sqrt{1+m^2}+m c_{1,2}\, c'^2\, |I|)\|f\|_{0,1}^2$.

Finally, we prove $\|f\|_{0,-1} \leq C \| f\|_{m,-1}$ for some $C > 0$. Using~\eqref{eq:omineq} again, we have
\[\begin{split}
\|f\|_{0,-1}^2 &\leq \sqrt{1+m^2} \int_{|\bdp| > 1} \frac{d\bdp}{\o_m(\bdp)} |\hat f(\bdp)|^2 +  \int_{|\bdp| \leq 1} \frac{d\bdp}{|\bdp|} |\hat f(\bdp)|^2 \\
&\leq \sqrt{1+m^2} \|f\|_{m,-1}^2 +  \int_{|\bdp| \leq 1} \frac{d\bdp}{|\bdp|} |\hat f(\bdp)|^2.
\end{split}\]
If we now consider a function $\varphi \in \cD(\bR)$ such that $\varphi(\bdx) = 1$ for all $\bdx \in I$, we have $f = f \varphi$, and therefore
\[\begin{split}
 \int_{|\bdp| \leq 1} \frac{d\bdp}{|\bdp|} |\hat f(\bdp)|^2 &=  \int_{|\bdp| \leq 1} \frac{d\bdp}{|\bdp|} \left| \int_\bR d\bdq \hat f(\bdq)\hat \varphi(\bdp - \bdq) \right|^2 = \int_{|\bdp| \leq 1} d\bdp |\bdp| \left| \int_\bR d\bdq \hat f(\bdq)\psi(\bdp,\bdq) \right|^2 \\
 &= \int_{|\bdp| \leq 1} d\bdp |\bdp| \left| \int_\bR d\bdq \frac{\hat f(\bdq)}{\o_m(\bdq)^{1/2}}\o_m(\bdq)^{1/2}\psi(\bdp,\bdq) \right|^2
\end{split}\]
where in the second equality we have introduced the function $\psi(\bdp,\bdq) := \frac{\hat \varphi(\bdp-\bdq)-\hat \varphi(-\bdq)}{\bdp}$ and used the fact that $\int_\bR d\bdq  \hat f(\bdq) \hat \varphi(-\bdq) = \hat f(0) = 0$. From this, the required estimate will follow by an application of the Cauchy-Schwarz inequality, provided we can show that
\begin{equation}\label{eq:finite}
C_1(m) := \int_{|\bdp| \leq 1} d\bdp |\bdp| \int_\bR d\bdq \,\o_m(\bdq)|\psi(\bdp,\bdq) |^2 < +\infty.
\end{equation}
To this end, we write $\psi(\bdp,\bdq) = \hat \varphi'(-\bdq) + \int_0^\bdp d\bds\, \hat \varphi''(\bds-\bdq)(\bdp-\bds)$ and we recall that $\hat \varphi$ is a Schwartz function, which entails the existence of a constant $K > 0$ such that $|\hat \varphi''(\bds)| \leq K/(1+|\bds|)^2$. Using then the fact that
\[
\int_{-\bdq}^{\bdp-\bdq} \frac{d\bds}{(1+|\bds|)^2} = \frac{1}{1+|\bdq|}-\frac{1}{1+|\bdp-\bdq|}
\]
for $|\bdq| > 1$, $|\bdp| \leq 1$, we obtain the estimate
\begin{multline*}
\int_{|\bdp| \leq 1} d\bdp |\bdp| \int_\bR d\bdq \,\o_m(\bdq)\left| \int_0^\bdp d\bds\, \hat \varphi''(\bds-\bdq)(\bdp-\bds)\right|^2 \\
\leq 4K^2 \int_{|\bdp| \leq 1} d\bdp |\bdp|^3 \left[|\bdp|^2\int_{|\bdq| \leq 1} d\bdq \,\o_m(\bdq) + \int_{|\bdq| > 1} d\bdq \,\o_m(\bdq)\left( \frac{1}{1+|\bdq|}-\frac{1}{1+|\bdp-\bdq|}\right)^2\right] < +\infty.
\end{multline*}
This, together with the obvious estimate $\int_{|\bdp| \leq 1} d\bdp |\bdp| \int_\bR d\bdq \,\o_m(\bdq)|\hat \varphi'(-\bdq)|^2 < +\infty$, shows, by the Minkowski inequality, the validity of~\eqref{eq:finite}. All in all, we obtain $\|f\|_{0,-1} \leq C \| f\|_{m,-1}$ with $C = [\sqrt{1+m^2}+C_1(m)]^{1/2}$.
\end{proof}

We stress that, thanks to a general result by Buchholz \cite[Appendix B]{Bu}, the condition 2. above is satisfied if $\tilde S_m - \tilde S_0$ is a trace-class operator on $\bar K$. In order to get a more explicit expression for these operators, we observe that, thanks to~\eqref{eq:sympnorm} and~\cite[Lem. 3.2]{AY}, there holds, for all $m \geq 0$, the estimate
\[
|\Im \langle f,g\rangle|^2 \leq 4 \|f\|^2_m \|g\|^2_m, \qquad f,g \in \cD_0(I),
\]
and therefore the symplectic form $\Im \langle \cdot,\cdot\rangle$ extends by continuity to the real Hilbert space $(\bDoIm, \Re\langle \cdot,\cdot\rangle_m)$, and there exist bounded antisymmetric operators $R_m : \bDoIm \to \bDoIm$, $m \geq 0$, such that
\[
\frac 1 2 \Im \langle f,g\rangle = \Re \langle f,R_m g\rangle_m, \qquad f,g \in \bDoIm.
\]

It is then a straigthforward computation to verify that~\eqref{eq:SmS0} is satisfied if
\[
\tilde S_m = \frac 1 2 \left(\begin{matrix} \Id &-R_m \\ R_m & \Id\end{matrix}\right), \quad \tilde S_0 = \frac 1 2  \left(\begin{matrix} T &-R_m \\ R_m & T\end{matrix}\right),
\]
with $T : \bDoIm \to \bDoIm$ the (symmetric) bounded (by Prop.~\ref{eq:normSm}) operator defined by
\[
\Re \langle f, Tg\rangle_m = \Re \langle f,g\rangle_0, \qquad f,g \in \bDoIm.
\]
Therefore the condition that $\tilde S_m - \tilde S_0$ is trace class is equivalent to the condition that $\Id - T$ is trace class on $\bDoIm$.

We also observe, in passing, that by~\cite[Lem.\ 3.3]{Ve} the factoriality of $\cCm(O_I)$ is equivalent to the fact that $R_m$ or $R_0$ (and then both) is injective with dense range. (Moreover, by antisymmetry, $R_m$ is injective if and only if it has dense range.) If this is the case, it is also easy to check that $T = R_m R_0^{-1}$.

We also put on record the following formula for the matrix elements of $\Id - T$:
\begin{multline}\label{eq:1-T}
\Re \langle f, (\Id-T)g\rangle_m = \Re \langle f,g\rangle_m - \Re \langle f,g\rangle_0 \\
= \frac 1 2\int d\bdp\,\left[\left(\frac 1 {\o_m(\bdp)}-\frac 1 {|\bdp|}\right)\overline{(\Re f)\hat{\,}(\bdp)}(\Re g)\hat{\,}(\bdp) + \left(\o_m(\bdp)-|\bdp|\right)\overline{(\Im f)\hat{\,}(\bdp)}(\Im g)\hat{\,}(\bdp)\right].
\end{multline}
Thus we see that if we define $\cH_m^\pm(I) := \overline{\cD_0(I,\bR)}^{\|\cdot\|_{m,\pm 1}}$, there holds $\overline{\cD_0(I)}^{\|\cdot\|_m} \cong \cH_m^-(I) \oplus \cH_m^+(I)$, and in this decomposition
\[
\Id - T = \left(\begin{matrix} Q^- &0 \\ 0 &Q^+ \end{matrix}\right)
\]
with bounded $Q^\pm : \cH_m^\pm(I) \to \cH_m^\pm(I)$.

\begin{Lemma}
There holds
\[
\langle f,Q^-g\rangle_{m,-1} = \int_{I \times I} d\bdx d\bdy\,f(\bdx)g(\bdy) Q^-(\bdx-\bdy), \qquad f,g \in \cD_0(I,\bR),
\]
with
\[
Q^-(\bdx) := K_0(m|\bdx|) +  \log |\bdx|, \qquad \bdx \in \bR,
\]
where $K_0$ is the modified Bessel function of order zero. Moreover $Q^-$ is continuously differentiable and with second derivative in $L^2_{loc}(\bR)$. 
\end{Lemma}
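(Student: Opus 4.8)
The plan is to read the asserted identity as a Fourier-multiplier statement and to identify $Q^-$ as the inverse Fourier transform of the multiplier appearing in~\eqref{eq:1-T}. Restricting~\eqref{eq:1-T} to real test functions $f,g\in\cD_0(I,\bR)$, which lie in $\cH_m^-(I)$, exhibits $Q^-$ as the bounded operator on $\cH_m^-(I)$ determined by
\[
\langle f,Q^-g\rangle_{m,-1}=\frac12\int_\bR d\bdp\,\Big(\frac{1}{\o_m(\bdp)}-\frac{1}{|\bdp|}\Big)\,\overline{\hat f(\bdp)}\,\hat g(\bdp),
\]
the integral being absolutely convergent precisely because $\hat f(0)=\hat g(0)=0$: the factor $\overline{\hat f}\hat g$ vanishes to second order at $\bdp=0$, taming the non-integrable singularity of $1/|\bdp|$ there, while the difference $\o_m^{-1}-|\bdp|^{-1}$ decays like $|\bdp|^{-3}$ at infinity. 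By Parseval this equals $\int_{I\times I}d\bdx\,d\bdy\,f(\bdx)g(\bdy)\,\check M(\bdx-\bdy)$, where $\check M(\bdx)=\int_\bR d\bdp\,M(\bdp)e^{i\bdp\bdx}$ is the inverse Fourier transform of $M=\tfrac12(\o_m^{-1}-|\bdp|^{-1})$ understood as a tempered distribution. It therefore suffices to show that $\check M$ agrees with $Q^-$ modulo an additive constant and a term supported at the origin, both of which are annihilated upon pairing with $f,g$ of vanishing integral.

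I would then compute the two pieces separately. For the massive piece the classical integral representation $K_0(m|\bdx|)=\int_0^{\infty}\cos(\bdx t)\,(t^2+m^2)^{-1/2}\,dt$ gives, after symmetrizing in $\bdp$, $\int_\bR d\bdp\,\o_m(\bdp)^{-1}e^{i\bdp\bdx}=2K_0(m|\bdx|)$, so the inverse transform of $\o_m^{-1}$ is $2K_0(m|\cdot|)$. For the massless piece I would use the distributional Fourier transform of the logarithm, $\int_\bR d\bdp\,|\bdp|^{-1}e^{i\bdp\bdx}=-2\log|\bdx|$ modulo a constant and a multiple of $\delta$; this is consistent with the massive computation, as one sees by letting $m\to0$ and isolating the $\bdx$-dependent part of $2K_0(m|\bdx|)=-2\log|\bdx|-2\log(m/2)-2\gamma_E+o(1)$, with $\gamma_E$ Euler's constant. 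Combining, $\check M=K_0(m|\cdot|)+\log|\cdot|$ modulo the irrelevant terms, the overall normalization being fixed to $1$ by the Fourier convention implicit in $\|\cdot\|_{m,\pm1}$; this is exactly the claimed kernel.

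The delicate point is that $\o_m^{-1}$ and $|\bdp|^{-1}$ are individually integrable neither at $\bdp=0$ nor at infinity, so the two inversions above hold only distributionally and the manipulation of $\check M$ as a genuine function must be justified. I would make this rigorous by a regularization argument: insert a convergence factor (e.g.\ $e^{-\eps|\bdp|}$ or a smooth cutoff), carry out the now-legitimate Fubini and inversion steps, and pass to the limit $\eps\to0$ by dominated convergence, the requisite integrability being supplied by $\overline{\hat f}\hat g=O(\bdp^2)$ near the origin together with Schwartz decay at infinity. The cancellation becomes conceptually transparent once one notes that $K_0(m|\bdx|)+\log|\bdx|$ is in fact continuous at the origin: by the small-argument expansion $K_0(z)=-\log(z/2)-\gamma_E+O(z^2\log z)$ the logarithmic singularities of the two summands cancel, leaving $Q^-(0)=-\gamma_E-\log(m/2)$, so $Q^-$ is a bona fide bounded kernel and the double integral converges absolutely. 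This is the step I expect to require the most care.

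For the regularity assertion I would again invoke the expansion $K_0(z)=-(\log(z/2)+\gamma_E)I_0(z)+\sum_{k\ge1}(z/2)^{2k}H_k/(k!)^2$, with $I_0(z)=1+z^2/4+\cdots$ and $H_k=\sum_{j\le k}1/j$. Away from the origin both $K_0(m|\cdot|)$ and $\log|\cdot|$ are smooth (and $K_0$ decays exponentially), so only $\bdx=0$ is at issue. Differentiating, the $\bdx^{-1}$ singularities of $\tfrac{d}{d\bdx}K_0(m|\bdx|)=-m\,\mathrm{sgn}(\bdx)\,K_1(m|\bdx|)$ and of $(\log|\bdx|)'=\bdx^{-1}$ cancel, giving $Q^{-\prime}(\bdx)=O(\bdx\log|\bdx|)\to0$, whence $Q^-\in C^1$ with $Q^{-\prime}(0)=0$. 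Differentiating once more and using $K_0''(z)=K_0(z)+z^{-1}K_1(z)$, the leading $\bdx^{-2}$ singularities likewise cancel and one is left with $Q^{-\prime\prime}(\bdx)=-\tfrac{m^2}{2}\log|\bdx|+O(1)$ near the origin; since a logarithm is square-integrable on bounded sets, $Q^{-\prime\prime}\in L^2_{\mathrm{loc}}(\bR)$, as claimed.
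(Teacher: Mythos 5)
Your overall strategy (read the bilinear form as a Fourier multiplier, identify the kernel as the inverse transform of $\tfrac12(\o_m^{-1}-|\bdp|^{-1})$, the cosine representation of $K_0$ for the massive part and a logarithm for the massless part, then the small-argument expansion of $K_0$ for the regularity claim) is the same as the paper's, and your treatment of the massive term and of the $C^1$/$L^2_{loc}$ statement is essentially identical to what the paper does. The gap is in the justification of the massless term. Your proposed regularization --- insert $e^{-\eps|\bdp|}$ and ``carry out the now-legitimate Fubini and inversion steps'' --- does not work as stated: a damping factor at infinity does nothing about the non-locally-integrable singularity of $1/|\bdp|$ at $\bdp=0$. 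The original integral converges only because of the pointwise zero of $\overline{\hat f}\hat g$ at the origin; once you replace $\overline{\hat f(\bdp)}\hat g(\bdp)$ by the double integral over $\bdx,\bdy$ of $f(\bdx)g(\bdy)e^{i\bdp(\bdx-\bdy)}$ (up to conventions) and take absolute values, that cancellation is lost and the triple integral diverges logarithmically near $\bdp = 0$, so Fubini is not applicable for any $\eps>0$. Some additional device is needed to convert the vanishing-integral condition into genuine joint integrability. The paper's device is exactly this: since $\int_\bR f = \int_\bR g = 0$, the antiderivatives $F,G$ are again test functions supported in $I$ and $\hat f(\bdp) = -i\bdp\hat F(\bdp)$, so the problematic term becomes $\int d\bdp\, |\bdp|\,\overline{\hat F}\hat G$, whose multiplier is locally bounded; then the $e^{\mp\eps\bdp}$ regularization at infinity does suffice, the $\bdp$-integral is computed explicitly as a second derivative of $\log[(\bdx-\bdy)^2+\eps^2]$, the derivatives are moved back onto $F,G$ (i.e.\ onto $f,g$) by integration by parts, and dominated convergence yields the kernel $-2\log|\bdx-\bdy|$. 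You would need to supply something equivalent (this primitive trick, or e.g.\ replacing $|\bdp|^{-1}$ by $\o_\mu(\bdp)^{-1}$ and controlling the $\mu\to 0$ limit through the expansion of $K_0(\mu|\bdx-\bdy|)$, discarding the divergent constant by the zero-integral condition).

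There is also a logical slip which happens to be harmless but on which your argument, as written, relies: you assert that a term of $\check M$ ``supported at the origin'' is annihilated upon pairing with $f,g$ of vanishing integral. It is not: $\int_{I\times I}d\bdx\,d\bdy\, f(\bdx)g(\bdy)\,\delta(\bdx-\bdy) = \int_I fg$, which is in general nonzero. Constants (and affine functions) of $\bdx-\bdy$ are killed by $\int_\bR f = \int_\bR g = 0$, but point-supported distributions are not. Your conclusion survives only because no delta actually occurs: the ambiguity in extending $\tfrac12(\o_m^{-1}-|\bdp|^{-1})$ across its singularity at $\bdp = 0$ consists of multiples of $\delta(\bdp)$, whose inverse transforms are constants in $\bdx$, not deltas. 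But that is a fact your proof needs to establish rather than absorb into an incorrect ``annihilated by the pairing'' claim.
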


\begin{proof}
From~\eqref{eq:1-T} we see that $\langle f,Q^-g\rangle_{m,-1} $ is the difference between the scalar field two-point functions for masses $m>0$ and zero evaluated on the time zero line, and from, e.g.,~\cite[9.6.21]{Ol}, we obtain
\[\begin{split}
\langle f,Q^-g\rangle_{m,-1}  &=  \frac 1 2\int_\bR d\bdp\,\left(\frac 1 {\o_m(\bdp)}-\frac 1 {|\bdp|}\right)\overline{\hat f(\bdp)}\hat g(\bdp) \\
&= \int_{I \times I} d\bdx d\bdy\,f(\bdx) g(\bdy)K_0(m|\bdx-\bdy|) - \frac 1 2\int_\bR d\bdp\,\frac 1 {|\bdp|}\overline{\hat f(\bdp)}\hat g(\bdp).
\end{split}\]
In order to treat the second term, we define functions $F, G \in \cD(I,\bR)$ by $F(\bdx) := \int_{\infty}^\bdx d\bdy \,f(\bdy)$, $G(\bdx) := \int_{-\infty}^\bdx d\bdy\,g(\bdy)$, so that $\hat f(\bdp) =   -i\bdp \hat F(\bdp)$, $\hat g(\bdp) = -i\bdp\hat G(\bdp)$. With this definition, we get
\[\begin{split}
\int_\bR d\bdp\,\frac 1 {|\bdp|}\overline{\hat f(\bdp)}\hat g(\bdp) &= \int d\bdp\,|\bdp|\overline{\hat F(\bdp)}\hat G(\bdp) \\
&= \lim_{\varepsilon \to 0^+} \int_{-\infty}^0 d\bdp |\bdp| e^{\varepsilon \bdp}\overline{\hat F(\bdp)}\hat G(\bdp) +\int_{0}^{+\infty} d\bdp |\bdp| e^{-\varepsilon \bdp}\overline{\hat F(\bdp)}\hat G(\bdp) \\
&= \lim_{\varepsilon \to 0^+} \int_{I \times I} d\bdx d\bdy F(\bdx) G(\bdy)\left[-\int_{-\infty}^0 d\bdp \,\bdp e^{i\bdp(\bdy-\bdx-i\varepsilon)}+\int_{0}^{+\infty} d\bdp\, \bdp e^{i\bdp(\bdy-\bdx+i\varepsilon)}\right]\\
&=- \lim_{\varepsilon \to 0^+} \int_{I \times I} d\bdx d\bdy F(\bdx) G(\bdy)\left[\frac{1}{(\bdy-\bdx-i\varepsilon)^2}+\frac{1}{(\bdy-\bdx+i\varepsilon)^2}\right]\\
&= -\lim_{\varepsilon \to 0^+} \int_{I \times I} d\bdx d\bdy F(\bdx) G(\bdy)\frac{\partial^2}{\partial \bdx\partial \bdy}\left[\log[(\bdx-\bdy)^2+\varepsilon^2]\right] \\
&= -\lim_{\varepsilon \to 0^+} \int_{I \times I} d\bdx d\bdy f(\bdx) g(\bdy)\log[(\bdx-\bdy)^2+\varepsilon^2]\\
&= -2\int_{I \times I} d\bdx d\bdy f(\bdx) g(\bdy)\log|\bdx-\bdy|,
\end{split}\]
where the last equality is obtained by the dominated convergence theorem, observing that for $(\bdx - \bdy)^2 +\varepsilon^2 < 1$ the inequality
\[
|f(\bdx) g(\bdy)\log[(\bdx-\bdy)^2+\varepsilon^2]| \leq |f(\bdx) g(\bdy)\log[(\bdx-\bdy)^2]|
\]
holds, and the right hand side is an integrable function. 

Finally, the last statement is obtained from the fact that there exist analytic functions $\varphi_1$, $\varphi_2$ defined in a neighbourhood of the origin, such that~\cite[9.6.12-13]{Ol}
\[
Q^-(\bdx) = -m^2 \bdx^2\log(m|\bdx|)\varphi_1(m^2\bdx^2)+\varphi_2(m^2\bdx^2), 
\]
and moreover $\bdx \mapsto \log(|\bdx|)$ belongs to $L^2_{loc}(\bR)$.
\end{proof}

\begin{Lemma}\label{Q-}
The operator $Q^-$ is of trace class on $\cH^-_m(I)$.
\end{Lemma}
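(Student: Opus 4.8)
The plan is to write $Q^{-}$ as a three–fold product $Q^{-}=\Pi^{*}\tilde K\,\Pi$, where $\Pi\colon\cH_m^{-}(I)\to L^{2}(I)$ lies in the Schatten class $\cS_{4}$ and $\tilde K$ is a Hilbert–Schmidt operator ($\cS_{2}$) on $L^{2}(I)$. Trace–class ($\cS_{1}$) membership of $Q^{-}$ then follows immediately from the generalized Schatten–Hölder inequality, $\|\Pi^{*}\tilde K\,\Pi\|_{\cS_{1}}\le\|\Pi^{*}\|_{\cS_{4}}\,\|\tilde K\|_{\cS_{2}}\,\|\Pi\|_{\cS_{4}}$, because $\tfrac14+\tfrac12+\tfrac14=1$. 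The factor $\tilde K$ will be produced by the smoothness of the kernel $Q^{-}(\bdx-\bdy)$ established in the preceding lemma, while the factor $\Pi$ encodes the $H^{1/2}$–type nature of $\cH_m^{-}(I)$ after passing to primitives.

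To build the factorization, associate to $f\in\cD_0(I,\bR)$ its primitive $F(\bdx):=\int_{-\infty}^{\bdx}f$; since $\int_\bR f=0$ one has $F\in\cD(I,\bR)$ and $|\widehat f(\bdp)|=|\bdp|\,|\widehat F(\bdp)|$, so that $\|f\|_{m,-1}^{2}=\int_\bR\frac{\bdp^{2}}{\o_m(\bdp)}\,|\widehat F(\bdp)|^{2}\,d\bdp$. Define $\Pi f:=F$; it is the composition of the isometry $f\mapsto F$ onto the completion $\cG$ of $\cD(I,\bR)$ in the weight $\bdp^{2}/\o_m(\bdp)$ with the natural map $\cG\to L^{2}(I)$. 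Now use the explicit kernel of the preceding lemma and integrate by parts twice in $\bdx$ and $\bdy$, which is legitimate since $Q^{-}\in C^{1}$ with $(Q^{-})''\in L^{2}_{\mathrm{loc}}$ and $F,G$ are compactly supported in $I$; this yields, for $f,g\in\cD_0(I,\bR)$ with primitives $F,G$,
\[
\langle f,Q^{-}g\rangle_{m,-1}=\int_{I\times I}f(\bdx)\,Q^{-}(\bdx-\bdy)\,g(\bdy)\,d\bdx\,d\bdy=-\int_{I\times I}F(\bdx)\,(Q^{-})''(\bdx-\bdy)\,G(\bdy)\,d\bdx\,d\bdy .
\]
Hence $\langle f,Q^{-}g\rangle_{m,-1}=\langle\Pi f,\tilde K\,\Pi g\rangle_{L^{2}(I)}$, where $\tilde K$ is the integral operator on $L^{2}(I)$ with kernel $-(Q^{-})''(\bdx-\bdy)$. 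As $I$ is bounded and $(Q^{-})''\in L^{2}_{\mathrm{loc}}$, this kernel belongs to $L^{2}(I\times I)$, so $\tilde K\in\cS_{2}$, and the displayed identity gives $Q^{-}=\Pi^{*}\tilde K\,\Pi$ on $\cH_m^{-}(I)$.

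What remains—and is the crux of the argument—is to show that $\Pi$ is bounded and lies in $\cS_{4}$. The high–frequency behaviour causes no trouble: since $\bdp^{2}/\o_m(\bdp)\asymp 1+|\bdp|$ for large $|\bdp|$, the weight is of order $H^{1/2}$ there, the inclusion of $H^{1/2}(I)$ into $L^{2}(I)$ over a bounded interval has singular values of order $n^{-1/2}$, and these are fourth–power summable, so $\Pi\in\cS_{4}$ once it is bounded. \textbf{The main obstacle is the infrared degeneracy of the weight at $\bdp=0$}, where $\bdp^{2}/\o_m(\bdp)\sim\bdp^{2}/m$ vanishes: boundedness of $\Pi$ amounts to the low–frequency estimate $\int_\bR|\widehat f|^{2}/\bdp^{2}\le C\int_\bR|\widehat f|^{2}/\o_m$ for $f\in\cD_0(I,\bR)$, which is not a pointwise comparison of weights and genuinely requires both constraints $\int_\bR f=0$ and $\supp f\subset I$. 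I would establish it by the same Paley–Wiener/Maz'ya–type reasoning used in the proof of the preceding lemma—splitting at $|\bdp|=1$, bounding the high–frequency part directly by $\|f\|_{m,-1}$, and controlling the low–frequency part through the compact support of $F$ together with $\widehat f(0)=0$. With the boundedness and the $\cS_{4}$ bound for $\Pi$ in hand, the Schatten–Hölder estimate of the first paragraph yields $Q^{-}\in\cS_{1}$, completing the proof.
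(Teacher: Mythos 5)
Your argument is correct in outline but follows a genuinely different route from the paper's. The paper proves the lemma by a direct expansion: after the change of variables $\bdxi=\bdx-\bdy$ it develops the cut-off kernel $Q^-\varphi$ in a Fourier series on $[-4,4]$, uses $(Q^-\varphi)''\in L^2$ to obtain $\sum_k|Q_k^-|\,|k|<+\infty$, rewrites the moments $\int_I f(\bdx)e^{i\frac{\pi k}{4}\bdx}d\bdx$ as inner products $\langle f,P^-\psi_k^-\rangle_{m,-1}$, $\langle f,P^-\varphi_k^-\rangle_{m,-1}$ with $\|\psi_k^-\|_{m,-1},\|\varphi_k^-\|_{m,-1}=O(|k|^{1/2})$, and so exhibits $Q^-$ as a trace-norm summable series of rank-one operators, concluding by the criterion of~\cite[Thm.\ 7.12]{We}. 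You instead factor $Q^-=\Pi^*\tilde K\Pi$ through $L^2(I)$: the regularity $(Q^-)''\in L^2_{loc}$ (the same input the paper uses, but there to get decay of Fourier coefficients) makes $\tilde K$ Hilbert--Schmidt, the Sobolev embedding $H^{1/2}(I)\hookrightarrow L^2(I)$ supplies the $\cS_4$ factors, and the Schatten--H\"older inequality finishes. Your route is more structural and would generalize more readily, at the price of two external inputs the paper avoids (approximation numbers of Sobolev embeddings, H\"older for Schatten ideals) and, more seriously, of having to face the infrared problem head-on in the boundedness of $\Pi$ --- a difficulty the paper's proof never meets, since its representing vectors $\psi_k^-,\varphi_k^-$ carry the weight $\o_m$ in their very definition and no division by small $|\bdp|$ ever occurs there.

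The boundedness of $\Pi$, which you correctly single out as the crux but leave as a sketch, is the only gap; fortunately it is fillable exactly as you indicate. What is needed is
\[
\int_\bR \frac{|\hat f(\bdp)|^2}{\bdp^2}\,d\bdp \;\le\; C\int_\bR \frac{|\hat f(\bdp)|^2}{\o_m(\bdp)}\,d\bdp, \qquad f\in\cD_0(I,\bR),
\]
which at low frequencies is strictly stronger than the inequality $\|f\|_{0,-1}\le C\|f\|_{m,-1}$ established in the Proposition on the equivalence of the topologies. Adapting that proof does work: the convolution identity together with $\hat f(0)=0$ gives $|\hat f(\bdp)|^2/\bdp^2\le\|f\|_{m,-1}^2\int d\bdq\,\o_m(\bdq)|\psi(\bdp,\bdq)|^2$, and an inspection of the paper's estimates shows that $\int d\bdq\,\o_m(\bdq)|\psi(\bdp,\bdq)|^2$ is in fact bounded uniformly for $|\bdp|\le1$, i.e.\ the extra weight $|\bdp|$ appearing in the definition of $C_1(m)$ is never needed for convergence, so dropping it is harmless. (Alternatively: no nonzero $F\in L^2$ supported in $I$ can be band-limited, so a compactness argument shows that the operator $E_\delta P_I$, with $E_\delta$ the spectral projection onto frequencies $|\bdp|\le\delta$ and $P_I$ multiplication by the characteristic function of $I$, has norm $<1$; hence $\int_{|\bdp|>\delta}|\hat F|^2\ge c\,\|F\|_{L^2}^2$ for $F$ supported in $I$, and the monotonicity of $\bdp^2/\o_m(\bdp)$ then yields the estimate directly.) With this estimate written out, your proof is complete.
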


\begin{proof}
We can of course assume $I =(-1,1)$. By the change of variables $\bdxi := \bdx -\bdy$, $\bdet := \bdx+\bdy$ we can write
\[
\langle f, Q^-g\rangle_{m,-1} = \frac 1 2 \int_{-2}^2 d\bdxi \left[\int_{-2+|\bdxi|}^{2-|\bdxi|} d\bdet\, f\Big(\frac{\bdet+\bdxi}{2}\Big) g\Big(\frac{\bdet-\bdxi}{2}\Big)\right] Q^-(\bdxi)\varphi(\bdxi),
\]
with $\varphi \in \cD(\bR,\bR)$ such that $\varphi(\bdxi) = 1$ for $|\bdxi| \leq 2$ and $\varphi(\bdxi) = 0$ for $|\bdxi| \geq 3$. We now expand the function $Q^- \varphi$ in Fourier series on the interval $[-4.4]$, obtaining
\[
Q^-(\bdxi)\varphi(\bdxi) = \sum_{k \in \bZ} Q_k^- e^{i\frac{\pi k}{4}\bdxi},
\]
with a totally convergent expansion, that is $\sum_{k \in \bZ} |Q_k^-| < \infty$, since the periodized $Q^- \varphi |_{[-4,4]}$ is in $C^1(\bR)$. Actually, $Q_k^- = -\frac{{Q_k}''}{k^2}$, with $Q''_k$ the Fourier coefficients of $\frac{d^2}{d\bdxi^2}(Q^-\varphi) \in L^2([-4,4])$ (by the previous Lemma), and therefore, by the Cauchy-Schwartz and Bessel inequalities,
\[
\sum_{k \in \bZ\setminus\{0\}} |Q_k^-| |k| = \sum_{k \in \bZ\setminus\{0\}} \frac{|Q''_k|}{|k|} \leq \bigg[\sum_{k \in \bZ\setminus\{0\}} |Q''_k|^2\bigg]^{\frac 1 2} \bigg[\sum_{k \in \bZ\setminus\{0\}} \frac 1 {k^2}\bigg]^{\frac 1 2} < +\infty. 
\]
It is therefore legitimate to interchange the series with the $\bdxi$-integration:
\[
\langle f, Q^-g\rangle_{m,-1} = \sum_{k \in \bZ} Q_k^- \int_{I} d\bdx\, f(\bdx) e^{i\frac{\pi k}{4}\bdx} \int_I d\bdy\, g(\bdy) e^{-i\frac{\pi k}{4}\bdy}.
\]
Let now $\chi \in \cD(\bR,\bR)$ be such that $\chi(\bdx) = 1$ for $|\bdx| \leq 1$ and $\chi(\bdx) = 0$ for $|\bdx| \geq 2$. Then
\[\begin{split}
\int_{I} d\bdx\, f(\bdx) e^{i\frac{\pi k}{4}\bdx} &= \int_{I} d\bdx\, f(\bdx) \chi(\bdx)\Big[\cos\Big(\frac{\pi k}{4}\bdx\Big)+ i \sin\Big(\frac{\pi k}{4}\bdx\Big)\Big]\\
&=\frac 1 {2\pi}\left[\int_{\bR} \frac{d\bdp}{\o_m(\bdp)} \overline{\hat f(\bdp)} \hat \psi_k^-(\bdp)+i\int_{\bR} \frac{d\bdp}{\o_m(\bdp)} \overline{\hat f(\bdp)} \hat \varphi_k^-(\bdp)\right],
\end{split}\]
with $\hat \psi_k^-(\bdp) := \frac{\o_m(\bdp)}{2}[ \hat \chi(\bdp+\frac{\pi k}{4})+\hat \chi(\bdp-\frac{\pi k}{4})]$, $\hat \varphi_k^-(\bdp) := \frac{\o_m(\bdp)}{2i}[ \hat \chi(\bdp+\frac{\pi k}{4})-\hat \chi(\bdp-\frac{\pi k}{4})]$. Extending the definitions of $\|\cdot\|_{m,-1}$ and $\langle \cdot, \cdot \rangle_{m,-1}$ to $\cS(\bR,\bR)$, and observing that, for all $k \in \bZ$,
\[\begin{split}
\int_\bR d\bdp\, \o_m(\bdp) \Big|\hat \chi\Big(\bdp+\frac{\pi k}{4}\Big)\Big|^2 &= \int_\bR d\bdp\, \o_m\Big(\bdp-\frac{\pi k}{4}\Big) |\hat \chi (\bdp)|^2\\
&\leq C \int_\bR d\bdp\, \Big(1+\Big|\bdp-\frac{\pi k}{4}\Big|\Big) |\hat \chi (\bdp)|^2 \leq C_1 + C_2 |k|,
\end{split}\]
for suitable constants $C_1, C_2  > 0$, one gets the estimates $\|\psi_k^-\|_{m,-1}, \|\varphi_k^-\|_{m,-1} \leq \sqrt{C_1+C_2|k|}$. Thus in particular $\psi_k^-,\varphi_k^- \in \overline{\cS(\bR,\bR)}^{\|\cdot\|_{m,-1}}$ for all $k \in \bZ$, and therefore, if $P^- : \overline{\cS(\bR,\bR)}^{\|\cdot\|_{m,-1}} \to \cH_m^-(I)$ denotes the orthogonal projection,
\[
\int_{I} d\bdx\, f(\bdx) e^{i\frac{\pi k}{4}\bdx} = \frac 1 {2\pi}[\langle f, \psi_k^-\rangle_{m,-1}+i\langle f,\varphi_k^-\rangle_{m,-1}] =\frac 1 {2\pi} [\langle f, P^-\psi_k^-\rangle_{m,-1}+i\langle f, P^-\varphi_k^-\rangle_{m,-1}],
\]
and, taking into account the fact that $\langle f, Q^- g\rangle_{m,-1}$ is real,
\[\begin{split}
\langle f, Q^- g\rangle_{m,-1} = \frac 1{4\pi^2}  \sum_{k \in \bZ} &\big[\Re Q_k^-\big( \langle f,P^-\psi_k^-\rangle_{m,-1} \langle P^-\psi_k^-,g\rangle_{m,-1}+\langle f,P^-\varphi_k^-\rangle_{m,-1} \langle P^-\varphi_k^-,g\rangle_{m,-1}\big)\\
&-\Im Q_k^-\big( \langle f,P^-\varphi_k^-\rangle_{m,-1} \langle P^-\psi_k^-,g\rangle_{m,-1}+\langle f,P^-\psi_k^-\rangle_{m,-1} \langle P^-\varphi_k^-,g\rangle_{m,-1}\big)\big].
\end{split}\]
Now, since by the above estimates
\[
\sum_{k \in \bZ} |Q_k^-| \|P^-\psi_k^-\|_{m,-1}^2 \leq  C_1 \sum_{k \in \bZ} |Q_k^-|+C_2\sum_{k \in \bZ} |Q_k^-| |k| < +\infty,
\]
and similarly for $\sum_{k \in \bZ} |Q_k^-| \|P^-\varphi_k^-\|_{m,-1}^2$, $\sum_{k \in \bZ} |Q_k^-| \|P^-\psi_k^-\|_{m,-1}\|P^-\varphi_k^-\|_{m,-1}$, and recalling that $\cD_0(I,\bR)$ is dense in $\cH_m^-(I)$, we obtain the thesis using~\cite[Thm.\ 7.12]{We}.
\end{proof}

\begin{Lemma}
There holds
\[
\langle f,Q^+g\rangle_{m,1} = \int_{I \times I} d\bdx d\bdy\,f(\bdx)g(\bdy) Q^+(\bdx-\bdy), \qquad f,g \in \cD_0(I,\bR),
\]
with
\[
Q^+(\bdx) :=  -\frac{m}{|\bdx|}K_1(m|\bdx|) +  \frac{1}{\, |\bdx|^2}, \qquad \bdx \in \bR,
\]
where $K_1$ is the modified Bessel function of order one. Moreover $Q^+$ is in $L^2_{loc}(\bR)$. 
\end{Lemma}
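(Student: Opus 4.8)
The plan is to reduce the computation of $Q^+$ to the kernel identities already established for $Q^-$, thereby inheriting the delicate behaviour near the diagonal rather than re-deriving it. First I would read off the $+$-block of the decomposition $\bDoIm \cong \cH_m^-(I)\oplus\cH_m^+(I)$ in which $\Id-T$ is diagonal: by~\eqref{eq:1-T}, for $f,g\in\cD_0(I,\bR)$,
\[
\langle f,Q^+g\rangle_{m,1} = \frac12\int_\bR d\bdp\,(\o_m(\bdp)-|\bdp|)\,\overline{\hat f(\bdp)}\,\hat g(\bdp),
\]
the integral being absolutely convergent since $\hat f,\hat g$ are Schwartz and $\o_m(\bdp)-|\bdp| = m^2/(\o_m(\bdp)+|\bdp|) = O(|\bdp|^{-1})$.

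Next I would peel off the factor $\bdp^2$ via the algebraic identities $\o_m(\bdp) = (\bdp^2+m^2)/\o_m(\bdp)$ and $|\bdp| = \bdp^2/|\bdp|$ together with $\widehat{f'}(\bdp) = i\bdp\,\hat f(\bdp)$, so that $\bdp^2\,\overline{\hat f}\,\hat g = \overline{\widehat{f'}}\,\widehat{g'}$; note $f',g'\in\cD_0(I,\bR)$ since $f,g$ have compact support. The kernels $1/\o_m$ and $1/|\bdp|$ that remain are exactly those evaluated in the Lemma giving the kernel of $Q^-$, which yield the $L^1_{loc}$ kernels $K_0(m|\cdot|)$ and $-\log|\cdot|$. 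Collecting terms and recalling $Q^-(\bdx)=K_0(m|\bdx|)+\log|\bdx|$, I obtain
\[
\langle f,Q^+g\rangle_{m,1} = \int_{I\times I}d\bdx\,d\bdy\,f'(\bdx)g'(\bdy)\,Q^-(\bdx-\bdy) + m^2\int_{I\times I}d\bdx\,d\bdy\,f(\bdx)g(\bdy)\,K_0(m|\bdx-\bdy|).
\]

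The key step is to integrate by parts twice in the first integral, transferring both derivatives onto $Q^-$. This is legitimate \emph{with no diagonal $\delta$-contribution} precisely because, by the same Lemma, $Q^-\in C^1(\bR)$ with $(Q^-)''\in L^2_{loc}(\bR)$: the boundary terms vanish by compact support, and since $\partial_\bdx\partial_\bdy Q^-(\bdx-\bdy) = -(Q^-)''(\bdx-\bdy)$ one reaches
\[
Q^+ = -(Q^-)'' + m^2 K_0(m|\cdot|).
\]
It then remains to compute $(Q^-)''$ pointwise for $\bdx\neq 0$ from $K_0'=-K_1$ and the recurrence $K_1'(z)=-K_0(z)-z^{-1}K_1(z)$, giving $(Q^-)''(\bdx) = m^2K_0(m|\bdx|)+\frac{m}{|\bdx|}K_1(m|\bdx|)-\bdx^{-2}$ almost everywhere; substitution yields $Q^+(\bdx) = -\frac{m}{|\bdx|}K_1(m|\bdx|)+\frac{1}{|\bdx|^2}$, as claimed. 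The $L^2_{loc}$ assertion is then immediate, since $(Q^-)''\in L^2_{loc}$ by the previous Lemma and $K_0(m|\cdot|)\in L^2_{loc}$ (logarithmic singularity at the origin, exponential decay at infinity).

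I expect the only genuine obstacle to be controlling the diagonal singularity in the second-from-last display. Differentiating the constituent kernels $K_0(m|\cdot|)$ and $\log|\cdot|$ separately would produce the non-integrable $1/(\bdx-\bdy)^2$; the route above circumvents this by keeping the derivatives on $f',g'$ until the end and exploiting that the \emph{combination} $Q^-$ is $C^1$ with $L^2_{loc}$ second derivative—exactly the regularity established (through the $\eps\to0$ regularization) in the $Q^-$ Lemma. This is what guarantees that the two $1/(\bdx-\bdy)^2$ singularities cancel, leaving the integrable kernel $Q^+$.
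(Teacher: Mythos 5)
Your proof is correct, but it takes a genuinely different route from the paper's. The paper works in the mass parameter: starting from the same identity $\langle f,Q^+g\rangle_{m,1} = \frac 1 2 \int_\bR d\bdp\,(\o_m(\bdp)-|\bdp|)\overline{\hat f(\bdp)}\hat g(\bdp)$, it writes $\o_m(\bdp)-|\bdp| = \int_0^m dm'\, m'/\o_{m'}(\bdp)$ by differentiating under the integral sign, converts the inner $\bdp$-integral into the massive two-point kernel $K_0(m'|\bdx-\bdy|)$, justifies a Fubini interchange via explicit logarithmic bounds on $K_0$, and finally evaluates $\int_0^m dm'\,m' K_0(m'r) = r^{-2}\bigl(1-mrK_1(mr)\bigr)$ using $\frac{d}{dz}\bigl(zK_1(z)\bigr) = -zK_0(z)$ and $\lim_{z\to 0}zK_1(z)=1$. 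You instead factorize the multiplier, $\o_m - |\bdp| = \bdp^2\bigl(\o_m^{-1}-|\bdp|^{-1}\bigr)+m^2\o_m^{-1}$, move the $\bdp^2$ onto derivatives of the test functions (legitimately, since $f',g' \in \cD_0(I,\bR)$ automatically), apply the $Q^-$ lemma to $f',g'$, and then transfer both derivatives back onto $Q^-$ by a double integration by parts, ending with the Bessel recurrence $K_1'(z)=-K_0(z)-z^{-1}K_1(z)$. What your route buys is the structural identity $Q^+ = -(Q^-)''+m^2K_0(m|\cdot|)$ and the elimination of the $m'$-integration and the Fubini step; what it costs is a heavier reliance on the $Q^-$ lemma: besides its kernel formulas you need its regularity statement read in the distributional sense, i.e.\ $(Q^-)'$ locally absolutely continuous with derivative in $L^2_{loc}(\bR)$, so that no $\delta$-term appears on the diagonal. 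That reading is indeed warranted: it is exactly what the representation $Q^-(\bdx) = -m^2\bdx^2\log(m|\bdx|)\varphi_1(m^2\bdx^2)+\varphi_2(m^2\bdx^2)$ in the $Q^-$ lemma's proof establishes (continuity of $Q^-$ and $(Q^-)'$ rules out singular parts), and it is also how the paper itself uses that statement in Lemma \ref{Q-}, where the relation $Q_k^- = -Q_k''/k^2$ between Fourier coefficients presupposes an $L^2$ distributional second derivative. Both arguments deliver the $L^2_{loc}$ claim for free at the end, yours from $(Q^-)''\in L^2_{loc}$ and $K_0(m|\cdot|)\in L^2_{loc}$, the paper's from the expansion $Q^+(\bdx)=\varphi_1(m^2\bdx^2)+\log(m|\bdx|)\varphi_2(m^2\bdx^2)$.
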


\begin{proof}
We start observing that, by differentiating under the integral sign, one gets
$$
\frac{\partial}{\partial m} \int_\bR d\bdp\, {\o_m(\bdp)} \overline{\hat f(\bdp)}\hat g(\bdp) = m \int_\bR \,\frac {d\bdp} {\o_m(\bdp)}\overline{\hat f(\bdp)}\hat g(\bdp)  \ , \quad m > 0 \ . 
$$
Since the r.h.s. is integrable in a right neighbourhood of $m=0$, 
\begin{align*}
\langle f,Q^+g\rangle_{m,1} & = \frac 1 2 \int_\bR d\bdp\, ({\o_m(\bdp)}-|\bdp|) \overline{\hat f(\bdp)}\hat g(\bdp) = \frac 1 2  \int_0^m dm' m' \int_\bR \,\frac {d\bdp} {\o_{m'}(\bdp)}\overline{\hat f(\bdp)}\hat g(\bdp) \\
& =  \int_0^m dm' m' \int_{I \times I} d\bdx d\bdy\,f(\bdx) g(\bdy)K_0(m'|\bdx-\bdy|) \ . 
\end{align*}
By \cite[9.6.13]{Ol}, we have $|K_0(z)| \leq C_1 |\log z| + C_2$, $|z| \leq 2m$, for suitable constants $C_1$ and $C_2$ and, since 
$$\int_0^m dm' m' |\log(m'|\bdx-\bdy|)|= \begin{cases} \frac 1 {2|\bdx - \bdy|^2} + \frac{m^2} 2 \left( \log(m|\bdx-\bdy|) - \frac 1 2\right) & |\bdx - \bdy| \geq \frac 1 m \\ 
- \frac{m^2} 2 \left( \log(m|\bdx-\bdy|) - \frac 1 2\right) & |\bdx - \bdy| \leq \frac 1 m \end{cases}$$
we can apply Fubini's Theorem and interchange the order of the integrals, thus obtaining
$$\langle f,Q^+g\rangle_{m,1}  = \int_{I \times I} d\bdx d\bdy\,f(\bdx) g(\bdy)  \int_0^m dm' m'  K_0(m'|\bdx-\bdy|) \ . $$
Making use of \cite[9.6.28]{Ol}, we see that $m' K_0(m'|\bdx-\bdy|) = \frac 1 {|\bdx - \bdy|^2} \frac \partial {\partial m'} \big(-m' |\bdx - \bdy| K_1(m' |\bdx - \bdy|) \big)$
and therefore, also observing that $\lim_{z \to 0} z K_1(z) = 1$ by \cite[9.6.11]{Ol}, we find the required expression for $Q^+$ as in the statement.
 
 Finally, the last claim readily follows again from \cite[9.6.11]{Ol} according to which 
 $$Q^+(\bdx) =  \varphi_1(m^2 \bdx^2) + \log(m |\bdx|)\varphi_2(m^2 \bdx^2)$$
for suitable analytic functions $\varphi_1$ and $\varphi_2$.
\end{proof}

\begin{Lemma}\label{Q+}
The operator $Q^+$ is of trace class on $\cH^+_m(I)$.
\end{Lemma}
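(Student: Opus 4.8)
The plan is to follow the proof of Lemma~\ref{Q-} essentially line by line, working now with the norm $\|\cdot\|_{m,1}$ in place of $\|\cdot\|_{m,-1}$; the only substantive change lies in how the convergence of the resulting rank-one expansion is controlled. First I would take $I=(-1,1)$, pass to the variables $\bdxi=\bdx-\bdy$, $\bdet=\bdx+\bdy$, introduce a cutoff $\varphi\in\cD(\bR,\bR)$ equal to $1$ on $[-2,2]$ and supported in $[-3,3]$, and expand $Q^+\varphi$ in Fourier series $Q^+(\bdxi)\varphi(\bdxi)=\sum_{k\in\bZ}Q_k^+e^{i\pi k\bdxi/4}$ on $[-4,4]$. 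Here is the first point of departure from the $Q^-$ case: since $Q^+$ has only a logarithmic singularity at the origin, the periodized $Q^+\varphi$ is merely in $L^2([-4,4])$ rather than $C^1$, so Parseval yields only $\sum_k|Q_k^+|^2<+\infty$ and no absolute summability. The interchange of the series with the $\bdxi$-integration must therefore be justified not by total convergence but by the continuity of the $L^2([-4,4])$ pairing against the function $\bdxi\mapsto\int d\bdet\,f(\tfrac{\bdet+\bdxi}{2})g(\tfrac{\bdet-\bdxi}{2})$, which is continuous and compactly supported, hence in $L^2$.

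Reverting the change of variables then gives, exactly as for $Q^-$,
\[
\langle f,Q^+g\rangle_{m,1}=\sum_{k\in\bZ}Q_k^+\Big(\int_I d\bdx\,f(\bdx)e^{i\pi k\bdx/4}\Big)\Big(\int_I d\bdy\,g(\bdy)e^{-i\pi k\bdy/4}\Big),
\]
and, choosing $\chi\in\cD(\bR,\bR)$ with $\chi\equiv1$ on $I$, I would rewrite $\int_I f\,e^{i\pi k\bdx/4}=\tfrac1{2\pi}[\langle f,\psi_k^+\rangle_{m,1}+i\langle f,\varphi_k^+\rangle_{m,1}]$ with $\hat\psi_k^+(\bdp):=\tfrac1{2\o_m(\bdp)}[\hat\chi(\bdp+\tfrac{\pi k}{4})+\hat\chi(\bdp-\tfrac{\pi k}{4})]$ and $\hat\varphi_k^+(\bdp):=\tfrac1{2i\,\o_m(\bdp)}[\hat\chi(\bdp+\tfrac{\pi k}{4})-\hat\chi(\bdp-\tfrac{\pi k}{4})]$. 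Note the weight $\o_m^{-1}$ here, produced by solving against the weight $\o_m$ of $\langle\cdot,\cdot\rangle_{m,1}$, which is the reciprocal of what appeared in Lemma~\ref{Q-}; since $\o_m^{-1}$ is smooth with bounded derivatives for $m>0$, the functions $\psi_k^+,\varphi_k^+$ lie in $\cS(\bR,\bR)$. Substituting and applying the orthogonal projection $P^+$ onto $\cH_m^+(I)$ produces a rank-one expansion of $\langle f,Q^+g\rangle_{m,1}$ of precisely the same shape as in Lemma~\ref{Q-}, so that trace-class follows from \cite[Thm.\ 7.12]{We} once I verify $\sum_k|Q_k^+|\,\|P^+\psi_k^+\|_{m,1}^2<+\infty$ together with the analogous sum for $\varphi_k^+$.

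The crux, and the step I expect to be the main obstacle, is the estimate of these norms. In the $Q^-$ case the analogous quantities \emph{grew} like $|k|$, which was offset by the absolute summability $\sum_k|Q_k^-||k|<+\infty$ coming from $(Q^-)''\in L^2$; that absolute summability is unavailable here. What rescues the argument is that the reciprocal weight now makes the norms \emph{decay}: from
\[
\|\psi_k^+\|_{m,1}^2=\tfrac14\int_\bR\frac{d\bdp}{\o_m(\bdp)}\big|\hat\chi(\bdp+\tfrac{\pi k}{4})+\hat\chi(\bdp-\tfrac{\pi k}{4})\big|^2,
\]
I would substitute $\bdq=\bdp\mp\tfrac{\pi k}{4}$ in the diagonal terms and split the integral into the region $|\bdq|\le\tfrac{\pi|k|}{8}$, where $\o_m(\bdq+\tfrac{\pi k}{4})\ge\tfrac{\pi|k|}{8}$, and its complement, where $\o_m\ge m$ and the Schwartz decay of $\hat\chi$ applies; the cross term is $O(|k|^{-N})$ because the two bumps are separated by $\tfrac{\pi|k|}{2}$. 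This yields $\|\psi_k^+\|_{m,1}^2=O(1/|k|)$, and likewise $\|\varphi_k^+\|_{m,1}^2=O(1/|k|)$. Consequently the sequences $\{\|P^+\psi_k^+\|_{m,1}^2\}_k$ and $\{\|P^+\varphi_k^+\|_{m,1}^2\}_k$ lie in $\ell^2$, and a single Cauchy--Schwarz inequality against $\{Q_k^+\}_k\in\ell^2$ delivers the required summability. (One may also observe that $Q^+$ is real and even, so each $Q_k^+$ is real and only the $\Re Q_k^+$ terms survive, slightly simplifying the expansion.) Everything beyond the decay estimate $\|\psi_k^+\|_{m,1}^2=O(1/|k|)$ is a transcription of the proof of Lemma~\ref{Q-}.
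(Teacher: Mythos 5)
Your proposal is correct, and at its core it is the paper's own proof: the same change of variables and Fourier expansion on $[-4,4]$, the same functions $\psi_k^+,\varphi_k^+$ carrying the weight $\o_m^{-1}$, the same crucial decay $\|\psi_k^+\|_{m,1}^2,\ \|\varphi_k^+\|_{m,1}^2=O(1/|k|)$, and the same concluding Cauchy--Schwarz pairing of the $\ell^2$ coefficient sequence against these norms, followed by \cite[Thm.\ 7.12]{We}. (For the decay estimate, the paper uses the single bound $\sup_{\bdq}|\bdq|/\o_m(\bdp-\bdq)\leq C(1+|\bdp|)$ instead of your splitting into $|\bdq|\leq \pi|k|/8$ and its complement; the two computations are interchangeable.) The one place where you genuinely deviate is the justification of the termwise integration, and there your route is simpler than the paper's. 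The paper does not expand $Q^+\varphi$ itself: it passes to the antiderivative $R(\bdxi)=\int_{-2}^{\bdxi}Q^+$, integrates by parts against $F$ (using $F(\pm2)=0$ and the integrability of $F'$), expands $R\varphi$ in Fourier series, gets \emph{absolute} summability $\sum_k|R_k|<+\infty$ from $(R\varphi)'\in L^2([-4,4])$ so that the interchange is immediate, and then integrates by parts back to recover coefficients $Q_k^+=\tfrac{i\pi k}{4}R_k$. Your observation that one can instead pair the $L^2$-convergent Fourier series of $Q^+\varphi$ directly against the continuous, compactly supported (hence square-integrable) function $F$ eliminates this detour. It yields only convergence of symmetric partial sums rather than an absolutely convergent scalar series, but that is enough: the trace-class conclusion rests on the separately established absolute convergence of the rank-one expansion, and agreement with $\langle f,Q^+g\rangle_{m,1}$ on the dense subspace $\cD_0(I,\bR)$ then identifies the two operators. (Your coefficients differ from the paper's by the Fourier coefficients of $R\varphi'$, which is harmless, since that function is supported in $2\leq|\bdxi|\leq3$ and pairs to zero against $F$; both sequences are square-summable, which is all the final estimate uses.)
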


\begin{proof}
As in the proof of Lemma \ref{Q-}, we assume that $I=(-1,1)$. We also employ $\varphi$ and $\chi$ as defined there. We define 
$F(\bdxi) = \int_{-2+|\bdxi|}^{2-|\bdxi|} d\bdet\, f\Big(\frac{\bdet+\bdxi}{2}\Big) g\Big(\frac{\bdet-\bdxi}{2}\Big)$ and $R(\bdxi) = \int_{-2}^\bdxi d\bdxi' Q^+(\bdxi').$
Therefore,
\begin{equation}\label{eq:Q+}
\langle f, Q^+g\rangle_{m,1} = \frac 1 2 \int_{-2}^2 d\bdxi F(\bdxi) Q^+(\bdxi) = - \frac 1 2 \int_{-2}^2 d\bdxi F'(\bdxi)R(\bdxi) \varphi(\bdxi)
\end{equation}
where we used integration by parts, the fact that $F(\pm 2)=0$ and that $F'(\bdxi)$ exists for all $\bdxi \neq 0$ and is integrable in $[-2,2]$.
We expand the function $R \varphi$ in Fourier series on the interval $[-4.4]$, obtaining
\[
R(\bdxi)\varphi(\bdxi) = \sum_{k \in \bZ} R_k e^{i\frac{\pi k}{4}\bdxi},
\]
and notice that, again integrating by parts, $R_k = \frac{1}{2\pi i k} \int_{-4}^4 d\bdxi (R\varphi)' (\bdxi) e^{-i\frac{\pi k}{4}\bdxi} =: \frac{ 4 Q^+_k}{i \pi k}$, $k \neq 0$.
Since $(R\varphi)' \in L^2([-4,4])$ by the previous lemma, it follows that $\sum_k |Q^+_k|^2 < +\infty$
 and hence
 $$\sum_{k \neq 0} |R_k| = \frac 4 \pi \sum_{k \neq 0}  \frac{|Q^+_k|}{|k|} \leq \frac 4 \pi \left(\sum_{k \neq 0}  |Q^+_k|^2\right)^{1/2} \left( \sum_{k \neq 0}  \frac 1 {k^2} \right)^{1/2} < +\infty \  . $$
 We are then allowed to interchange the integral with the series in (\ref{eq:Q+}) and obtain
 \begin{align*}
\langle f, Q^+g\rangle_{m,1} & = - \frac 1 2 \sum_k R_k \int_{-2}^2 d\bdxi F'(\bdxi) e^{i\frac{\pi k}{4}\bdxi} =  \frac 1 2 \sum_k R_k \frac{i \pi k} 4 \int_{-2}^2 d\bdxi F(\bdxi) e^{i\frac{\pi k}{4}\bdxi} \\
 & = \sum_{k\neq 0} Q^+_k \int_{I} d\bdx\, f(\bdx) e^{i\frac{\pi k}{4}\bdx} \int_I d\bdy\, g(\bdy) e^{-i\frac{\pi k}{4}\bdy}. 
 \end{align*}
 Now, similarly to the proof of Lemma \ref{Q-},
 $$
\int_{I} d\bdx\, f(\bdx) e^{i\frac{\pi k}{4}\bdx}
=\frac 1 {2\pi}\left[\int_{\bR} d\bdp \, \o_m(\bdp) \overline{\hat f(\bdp)} \hat \psi_k^+(\bdp)+i\int_{\bR} d\bdp \, \o_m(\bdp) \overline{\hat f(\bdp)} \hat \varphi_k^+(\bdp)\right],
$$
with $\hat \psi_k^+(\bdp) := \frac 1 {2 \o_m(\bdp)}[ \hat \chi(\bdp+\frac{\pi k}{4})+\hat \chi(\bdp-\frac{\pi k}{4})]$, $\hat \varphi_k^+(\bdp) := \frac 1 {2i \o_m(\bdp)}[ \hat \chi(\bdp+\frac{\pi k}{4})-\hat \chi(\bdp-\frac{\pi k}{4})]$.
 Observe that, for all $\bdp \in \bR$,
 $$\sup_{\bdq \in \bR} \frac{|\bdq|}{\o_m(\bdp-\bdq)} = \sup_{\bdq \in \bR} \frac{|\bdp - \bdq|}{\o_m(\bdq)} \leq C(1+|\bdp|)$$
for a suitable $C>0$.
Therefore, for all $k \in \bZ \setminus \{0\}$,
$$\int_\bR d\bdp \frac 1 {\o_m(\bdp)} \Big|\hat \chi\Big(\bdp+\frac{\pi k}{4}\Big)\Big|^2 = \int_\bR \frac{d\bdp} {\o_m\Big(\bdp-\frac{\pi k}{4}\Big)} |\hat \chi (\bdp)|^2 \leq \frac{4 C}{\pi |k| } \int_\bR d\bdp (1 + |\bdp|)  |\hat \chi (\bdp)|^2  \ , $$
which implies that $\|\psi_k^+\|_{m,1}, \|\varphi_k^+\|_{m,1} \leq \frac{C'}{\sqrt{ |k|}}$, $k \neq 0$.
Thus, we obtain that $\sum_k |Q_k^+| \, \|P^+  \psi_k^+\|_{m,1}^2$, $\sum_k |Q_k^+| \, \|P^+  \varphi_k^+\|_{m,1}^2$ and $\sum_k |Q_k^+| \, \|P^+  \psi_k^+\|_{m,1}  \|P^+  \varphi_k^+\|_{m,1}$ are all convergent,
and the conclusion follows as in the proof of Lemma \ref{Q-}.
\end{proof}

\textbf{Acknowledgements.} We are grateful to D.~Buchholz for useful discussions about the subject of this work and for comments on a preliminary version, and to E.~Valdinoci for pointing out reference~\cite{Ma}. R.~C.\ is partially supported by the Sapienza Ricerca Scientifica 2017 grant ``Algebre di Operatori e Analisi Armonica Noncommutativa''. G.~M.\ is partially supported by the MIUR Excellence Department Project awarded to the Department of Mathematics, University of Rome Tor Vergata, CUP E83C18000100006, the ERC Advanced Grant 669240 ``Quantum Algebraic Structures and Models'', the INDAM-GNAMPA, and the Tor Vergata University grant ``Operator Algebras and Applications to Noncommutative Structures in Mathematics and Physics''.


\end{document}